\documentclass[11pt]{article}
\usepackage{odonnell}
\usepackage{stmaryrd}
\usepackage{qcircuit}
\usepackage{physics}
 \usepackage{authblk} 
\usepackage{mathtools}

\newcommand{\mc}{\mathcal}
\DeclareMathOperator\eye{\mathbb{I}}
\usepackage[linesnumbered,ruled,vlined]{algorithm2e}
\usepackage[backend=bibtex8,style=numeric-comp,doi=false,isbn=false,url=false,maxbibnames=5,sorting=nty,sortcites=false]{biblatex}
\bibliography{joels_refs}
\definecolor{darkred}  {rgb}{0.5,0,0}
\definecolor{darkblue} {rgb}{0,0.25,0.7}
\definecolor{darkgreen}{rgb}{0.35,0.6,0.1}
\definecolor{LightPink}{rgb}{0.858, 0.188, 0.478}
\hypersetup{
  urlcolor   = blue,         
  linkcolor  = darkblue,     
  citecolor  = LightPink,    
  filecolor   = darkred       
}
\DeclareMathOperator{\Supp}{Supp}

\usepackage{tikz}
\usetikzlibrary{decorations.pathreplacing}
\pgfmathsetmacro\MathAxis{height("$\vcenter{}$")}

\begin{document}

\title{Limitations of Noisy Geometrically Local Quantum Circuits}
\author[1]{Jon Nelson$^*$}
\author[1,2]{Joel Rajakumar$^*$}
\author[1]{Michael J. Gullans}
\affil[1]{\normalsize  Joint Center for Quantum Information \& Computer Science, University of Maryland and NIST}
\affil{\normalsize Department of Computer Science,
	University of Maryland}
\affil[2]{\normalsize  IBM T. J. Watson Research Center, Yorktown Heights, NY}
\date{}
\maketitle
\def\thefootnote{*}\footnotetext{These authors contributed equally to this work}\def\thefootnote{\arabic{footnote}}
\begin{abstract}
It has been known for almost 30 years that quantum circuits with interspersed depolarizing noise converge to the uniform distribution at $\omega(\log n)$ depth, where $n$ is the number of qubits, making them classically simulable. We show that under the realistic constraint of geometric locality, this bound is loose: these circuits become classically simulable at even shallower depths. Unlike prior work in this regime, we consider sampling from worst-case circuits and noise of any constant strength. First, we prove that the output distribution of any noisy geometrically local quantum circuit can be approximately sampled from in quasipolynomial time, when its depth exceeds a fixed $\Theta(\log n)$ critical threshold which depends on the noise strength. This scaling in $n$ was previously only obtained for noisy random quantum circuits (Aharonov et. al, STOC 2023). We further conjecture that our bound is still loose and that a $\Theta(1)$-depth threshold suffices for simulability due to a percolation effect. To support this, we provide analytical evidence together with a candidate efficient algorithm. Our results rely on new information-theoretic properties of the output states of noisy shallow quantum circuits, which may be of broad interest. On a fundamental level, we demonstrate that unitary quantum processes in constant dimensions are more fragile to noise than previously understood.
\end{abstract}

\section{Introduction}
Near-term quantum devices suffer from noise in their physical components, which naturally degrades their computational abilities. At the same time, the celebrated \textit{threshold theorem} demonstrates that even 1D quantum circuits can demonstrate fault-tolerance when provided only the additional ability of (1) irreversible operations to pump out entropy (e.g. classical feedforward \cite{briegel_mbqc}/fresh ancilla \cite{aharonov_ben-or}) or (2) a bias in the noise (e.g. dephasing/damping channels \cite{ben-or_gottesman_refrigerator,shtanko_sharma_nonunital_1D}). However, for quantum devices that do \textit{not} operate in these fault-tolerant regimes, it remains unclear what computational tasks they can efficiently perform in the presence of noise. This motivates us to study quantum circuits without classical feedforward or fresh ancilla, subject to unbiased depolarizing noise. This model has also been formally studied as the complexity class $NISQ$ \cite{chen_cotler_huang_li_NISQ}. Beyond its practical motivation, this topic addresses the fundamental question of whether \textit{unitary} quantum processes retain computational complexity when subject to generic noise.

For general quantum circuits on $n$ qubits with depolarizing strength $p$, it is known that their output distributions converge to the uniform distribution at depth $\omega(p^{-1}\log n)$ \cite{muller-hermes_stilck-franca_wolf_relative_entropy_convergence,aharonov_limitations}, while logical computation is possible up to $O(\log n)$ depth \cite{aharonov_limitations} \footnote{decision problems with quasipolynomial overhead in circuit size}. With geometric locality, however, existing protocols only enable constant-depth logical circuits \cite{fujii_tamate,bravyi_gosset_koenig_tomamichel,bergamaschi_liu_single_shot}. We show that noisy local circuits are simulable strictly below the $\omega(p^{-1}\log n)$ threshold, and conjecture simulability above a constant-depth threshold $\Theta(p^{-1}\log p^{-1})$. This suggests that known fault-tolerance constructions that achieve depth scaling in $n$ rely crucially on all-to-all connectivity. Our results align with the experimental intuition that, for fixed $p$, circuits become useless beyond depth $\sim p^{-1}$ \textit{regardless of system size}.

We would like to emphasize that our algorithm is the first to simulate worst-case noisy circuits in the high-depth regime before they converge to the uniform distribution. This shows that there exists a depth regime where classical algorithms can exploit structure in noisy circuits before they are entirely destroyed by the noise. 

This manuscript is organized as follows. We outline our results and conjectures in \Cref{sec:overview} and compare them to prior work in \Cref{sec:prior}. Next, we give a high-level overview of the proof strategy in \Cref{sec:proof}. For the main content of the paper, we first set up some notation in \Cref{sec:prelims}. Next,  deferring technical proofs to the appendix, we progress step-by-step through our main information-theoretic argument in \Cref{sec:information-theory}, which leads to convergence of the output state to certain families of approximation schemes described in \Cref{sec:truncatability}. We note that \Cref{sec:pauli} interprets the outcome of these arguments in the Pauli basis, which may be accessible if the reader is interested in a quick summary of the main theorem statements. Next, in \Cref{sec:results} we prove our main classical simulability result, and in \Cref{sec:conjecture} we state and give evidence for our main conjecture.  Finally, we provide brief discussion of the results and motivation for resolving this conjecture in \Cref{sec:discussion}.

\subsection{Overview of Results} \label{sec:overview}

We focus on the task of approximately sampling from the output distribution of any noisy geometrically local quantum circuit. This task is considered classically simulable if, for any fixed polynomial number of samples, there exists an efficient classical algorithm whose output is information-theoretically indistinguishable from the output of the true quantum circuit\footnote{see \cite{aharonov_polynomial_2023} for a technical motivation for this definition}. It is known that the trace distance between the output distribution of any noisy quantum circuit with depth $d$ and the uniform distribution is bounded by $e^{-\Omega(pd)}\sqrt{n}$ \cite{aharonov_limitations,muller-hermes_stilck-franca_wolf_relative_entropy_convergence,mele_noise-induced,stilck-franca_garcia-patron}, which implies classical simulability for $d = \omega(p^{-1}\log (n))$ by simply performing uniform sampling. However, below this depth, e.g. at any $d = O(\log n)$, there can always exist some polynomial number of samples which distinguishes the true output distribution from the uniform distribution. Importantly, this leaves open the possibility of quantum advantage at such high depths. In our work, we tighten these bounds, showing that if $d$ exceeds a critical $d^* = \Theta(p^{-1} \log (p^{-1}n))$, the circuit is vulnerable to a more clever classical simulation algorithm, which can efficiently spoof any polynomial number of samples. We state this informally below,

\begin{theorem} [Informal]
    For any geometrically local quantum circuit on $n$ qubits with interspersed depolarizing noise of strength $p$ and depth $d$ with $d > d^*$, where $d^* = \Theta(p^{-1}\log (p^{-1}n))$, there exists a classical algorithm that approximately samples from its output distribution in quasi-polynomial time.
\end{theorem}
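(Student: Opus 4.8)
The plan is to exploit geometric locality to show that the output state of a noisy local circuit of depth $d>d^*$ is well approximated by a state whose local marginals are computable by brute force and whose correlations decay over a polylogarithmic length scale, and then to reproduce the computational-basis distribution by a sequential conditional-sampling procedure with a polylogarithmic conditioning window.

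\textbf{Step 1: effective light-cone truncation.} The structural fact I would take from the information-theoretic development of \Cref{sec:information-theory}--\Cref{sec:truncatability} is that, for any region $A$, the reduced output state $\rho_A$ is $\epsilon$-close in trace distance to the reduced state of the process obtained by discarding the input together with every gate lying more than $t=\Theta(p^{-1}\log(|A|/\epsilon))$ layers below $A$ in $A$'s backward light cone, i.e.\ by replacing that part with the maximally mixed state on the light-cone ``mantle''. Morally, the channel sending the state at layer $d-t$ to $\rho_A$ factors through $t$ successive layers of single-qubit depolarizing noise supported on the $\Theta(|A|\,t^{D-1})$ qubits of the mantle, and depolarizing noise is a strict contraction toward its fixed point; in constant spatial dimension $D$ the surface-to-depth ratio of balls is favorable enough that a depth of order $p^{-1}\log(\text{boundary size})$ — rather than $p^{-1}\log n$ — drives the error below $\epsilon$. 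The hypothesis $d>d^*=\Theta(p^{-1}\log(p^{-1}n))$ is exactly what makes this truncation bite: taking $\epsilon = 1/\mathrm{poly}(n)$, the effective support of $\rho_A$ shrinks to a radius-$O(p^{-1}\log n)$ ball around $A$, of volume $\mathrm{polylog}(n)$ whenever $|A| = \mathrm{polylog}(n)$, so that $\rho_A$ is obtained by classically simulating a circuit on $\mathrm{polylog}(n)$ qubits in $2^{\mathrm{polylog}(n)}$ time.

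\textbf{Steps 2--3: decay of correlations and the algorithm.} Applying Step 1 to regions $A$, $C$, and $A\cup C$ separated by a lattice ``slab'' $B$ of width $\ge r := \Theta(p^{-1}\log(n/\epsilon))$, the truncated light cones on the two sides are disjoint, so the truncated approximant factorizes across the slab on these regions; this yields $I(A:C)_\rho \le \epsilon$ and, more usefully, an approximate Markov property $I(A:C\,|\,B)_\rho \le \epsilon$ whenever $B$ geometrically shields $A$ from $C$. Via a Fawzi--Renner-type recovery bound — and, for the classical output distribution $q$, the fact that small conditional mutual information forces an approximate Markov random field — $q$ is $O(n\epsilon)$-close in total variation to a genuine Markov random field $\tilde q$ of interaction range $r=\mathrm{polylog}(n)$. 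The algorithm then fixes any ordering of the qubits and, for each requested sample, draws $x_1,\dots,x_n$ in turn, sampling $x_i$ from $\tilde q(\,\cdot\mid x_j:\ j<i,\ \mathrm{dist}(i,j)\le r)$; by the Markov property this depends on only $\mathrm{polylog}(n)$ previously fixed bits and is read off from a $\mathrm{polylog}(n)$-volume marginal of $\rho$, computable in $2^{\mathrm{polylog}(n)}$ time by Step 1. Over all $n$ qubits and $\mathrm{poly}(n)$ samples the runtime is $2^{\mathrm{polylog}(n)}$, i.e.\ quasi-polynomial; the output is exactly $\tilde q$-distributed, which after setting $\epsilon = 1/\mathrm{poly}(n)$ is within $1/\mathrm{poly}(n)$ total variation of $q$, hence information-theoretically indistinguishable from the device on any polynomial number of samples.

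\textbf{Main obstacle.} The crux is the contraction estimate underlying Step 1 for \emph{worst-case} circuits, where one cannot average over Pauli paths as in the noisy-random-circuit analysis of Aharonov et al. The difficulty is that a generic layer of two-qubit gates together with single-qubit depolarizing noise need not contract the trace distance between arbitrary states; one must instead track a finer functional of the difference $\rho_A - \rho_A'$ — essentially the weight profile of its Pauli (or light-cone-path) expansion, which \emph{does} decay by a factor $\sim (1-p)$ per layer per mantle qubit — and argue that the geometry forces this below $\epsilon$ after $O(p^{-1}\log(\text{boundary size}))$ layers. A secondary subtlety is converting the family of local conditional-mutual-information bounds of Step 2 into a single global approximate-MRF statement with only polynomial error accumulation, and checking that the stitching is faithful for the classical output distribution, not merely for the quantum state.
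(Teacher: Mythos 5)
Your Steps 2--3 are not the route the paper takes for this theorem: they are essentially the ``patching''/conditional-sampling algorithm that the paper only \emph{conjectures} to work (\Cref{sec:algorithm}, \Cref{conj:markov}, \Cref{theorem:reduction}), and the inference you use to justify it has a genuine gap. Disjointness of the truncated backward light cones of $A$ and $C$ gives factorization of the \emph{unconditioned} marginal, $\rho_{AC}\approx\rho_A\otimes\rho_C$; it does not give $I(A:C\,|\,B)\leq\epsilon$ or closeness of $P_{ABC}$ to $P_{AB}P_{C|B}$, because conditioning on measurement outcomes in the intervening region $B$ can create correlations between regions whose marginals are exactly product (cluster-state/teleportation-type examples are the standard obstruction, and this is precisely the issue the paper flags: each instance of a percolated decomposition is Markov, but the (quasi-)mixture need not be). At this particular depth one can in fact salvage a Markov-type bound without any shielding, via the chain rule $I(A:C|B)\leq I(A:\overline{A})\leq D(\rho\,\|\,\sigma_A\otimes\rho_{\overline{A}})\leq(1-p)^d|L(A)|$ using \Cref{lemma:exponential_decay} (which also holds for the dephased state), but note this bound is independent of your window width $r$ and is \emph{not} what your ``disjoint light cones'' argument establishes. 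Separately, your proposed fix for the ``main obstacle'' in Step 1 --- tracking Pauli-path weight profiles that decay by $(1-p)$ per layer --- is exactly the path-counting approach the paper argues does not close for worst-case circuits (the combinatorial blowup is not beaten by the noise without randomness); the tool that actually delivers your Step 1 is the conditional quantum Shearer inequality giving relative-entropy contraction on subsets (\Cref{lemma:relative_entropy_convergence_subsets}) combined with the light-cone peeling induction (\Cref{lemma:exponential_decay}), so Step 1 is true but not by the means you envisage.

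Even granting the salvaged Markov bound, your scheme proves a weaker statement than \Cref{theorem:sampling}. At a fixed depth just above $d^*$, every error in your pipeline (light-cone truncation, decoupling, Markov violation) bottoms out at roughly $\mathrm{poly}(n)\sqrt{(1-p)^d\,\mathrm{poly}(d)}$, a \emph{fixed} inverse polynomial set by the constant in $d^*$; enlarging $t$ or $r$ does not reduce it. The theorem, however, requires accuracy $\epsilon$ that can be taken arbitrarily small at that fixed depth (this is what lets one spoof any polynomial number of samples), with runtime growing as $(1/\epsilon)^{O(d^D)/\log n}$. The paper achieves this tunability through the inclusion--exclusion decomposition: by \Cref{lemma:inclusion-exclusion} each term acting as an ``error'' on $|A|$ sublattices has trace norm at most $\bigl(2[(1-p)^d(4d)^D]^{1/(2\cdot3^D)}\bigr)^{|A|}$, so truncating at support size $k$ gives error $n^{-k}$ (\Cref{theorem:convergence1}) for \emph{any} $k$, while the $O(n^k)$ retained terms are each computed exactly by brute force over their light cones and fed into the standard bit-by-bit sampling reduction. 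Your proposal has no analogous knob, so as written it establishes quasi-polynomial sampling only up to a fixed $1/\mathrm{poly}(n)$ total-variation error (equivalently, it would need the constant in $d^*$ to grow with the target accuracy), which falls short of the stated result.
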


Next, when $d$ exceeds a shallower critical depth of $d^* = \Theta(p^{-1}\log(p^{-1}))$, it is known that certain noisy \textit{non-universal} quantum circuits are classically simulable due to a percolation effect on their connectivity graphs, where qubits are vertices and qubits sharing a `lightcone' are connected by an edge \cite{rajakumar_watson_liu,nelson_rajakumar_naturally,oh_classical_simulability_linear_optical}. In particular, beyond the critical depth, the output state can be represented by a mixture of states where each state contains an $\Omega(\ell)$-sized connected component of non-maximally mixed qubits with probability which is bounded by $\poly(n)e^{-\Omega(\ell)}$.  We prove a slightly weaker percolation effect, in the more general setting of \textit{universal} circuits. In particular, we prove that beyond the critical depth, all Pauli operators in the Pauli decomposition of the circuit's output state with non-identity support (i.e. $X$, $Y$, or $Z$) on an $\Omega(\ell)$-sized connected component can be truncated while only incurring an overall error in trace distance which is bounded by $\poly(n)e^{-\Omega(\ell)}$. Thus, our results indicate that this phase transition in entanglement structure at $d^* = \Theta(p^{-1}\log(p^{-1}))$ is more general than the previously studied non-universal cases. We conjecture that this similarly corresponds to the onset of classical simulability,
\begin{conjecture} [Informal] \label{conj:informal}
    For any geometrically local quantum circuit on $n$ qubits with interspersed depolarizing noise of strength $p$ and depth $d$, when $d > d^*$ where $d^* = \Theta(p^{-1}\log(p^{-1}))$, there exists a classical algorithm that approximately samples from its output distribution in quasi-polynomial time.
\end{conjecture}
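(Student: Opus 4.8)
A plausible route to \Cref{conj:informal} is to combine the weak, Pauli-level percolation guarantee of \Cref{sec:truncatability} with a local, sweep-order sampling procedure. As a first reduction, we may assume that $d$ lies below the threshold of the simulability theorem of \Cref{sec:overview}, i.e.\ $d = O(p^{-1}\log(p^{-1}n))$ --- and, for concreteness, that $p$ is a fixed constant, so $d = O(\log n)$ --- since otherwise that theorem already applies; in this regime the backward lightcone of any set of polylogarithmically many qubits still contains only polylogarithmically many qubits, since each circuit layer inflates a region's radius by $O(1)$. Fixing a target indistinguishability parameter $\epsilon = 1/\poly(n)$, we invoke \Cref{sec:truncatability} with connected-support cutoff $\ell = \Theta(\log n)$ chosen large enough that $\poly(n)\,e^{-\Omega(\ell)} \le \epsilon$: this certifies that the circuit's output state $\rho$ is $\epsilon$-close in trace distance to its truncation $\tilde\rho$, in which every Pauli acts non-trivially only on connected components of the lightcone-connectivity graph of size less than $\ell$. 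Unlike the regime of the main theorem, where the surviving structure is sparse enough to enumerate directly, here surviving Paulis may be supported on $\Omega(n)$ qubits --- only their connected support, not their weight, is controlled --- so a different sampler is needed, one that exploits the fact that $\rho$ behaves like a subcritical-percolation state with correlations between regions at distance $\gg \ell$ suppressed.

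The candidate algorithm produces a computational-basis string qubit by qubit along a fixed geometric order $v_1,\dots,v_n$ of the grid. To produce $x_{v_k}$, let $B_k$ be the ball of radius $\Theta(\ell)$ about $v_k$; compute the reduced density matrix $\rho_{B_k}$ of the \emph{true} noisy circuit by brute-force simulation restricted to the backward lightcone of $B_k$, which contains only polylogarithmically many qubits, so this step costs $2^{\mathrm{polylog}(n)}$ --- i.e.\ quasipolynomial --- time; then condition $\rho_{B_k}$ on the outcomes already assigned to the qubits of $\{v_1,\dots,v_{k-1}\}\cap B_k$ and draw $x_{v_k}$ from the resulting single-qubit conditional of $v_k$. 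Iterating over $k = 1,\dots,n$ runs in quasipolynomial time. The algorithm only ever manipulates genuine reduced states of $\rho$, so potential non-positivity of $\tilde\rho$ is irrelevant; $\tilde\rho$ enters only in the correctness analysis. By the chain rule for total variation, correctness reduces to showing that at every step the \emph{true} conditional law of $x_{v_k}$ given $x_{v_1},\dots,x_{v_{k-1}}$ agrees, up to error $o(\epsilon/n)$, with the \emph{local} conditional read off from $\rho_{B_k}$ --- equivalently, that after measuring any set of qubits in the computational basis, the post-measurement reduced state near $v_k$ is determined, to within that error, by the outcomes inside the $\Theta(\ell)$-ball about $v_k$; the truncatability statement of \Cref{sec:truncatability} is exactly the \emph{unconditioned} instance of this, and the per-step errors then telescope to $o(\epsilon)$.

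Establishing this conditional (post-measurement) locality is the main obstacle, and the reason the statement is a conjecture rather than a theorem. Computational-basis measurement can in general turn short-range entanglement into genuine long-range correlations (``measurement-induced entanglement''), so one cannot simply quote the percolation bound for $\tilde\rho$: conditioning on outcomes is not the same as averaging over them. A rigorous argument would have to show that the full process --- noisy geometrically local circuit followed by a partial computational-basis measurement --- is still governed by a subcritical percolation; the natural idea is to treat each measured qubit as an additional lightcone-terminating event, which can only help localization, while carefully handling the dependence on the actual outcome string rather than its average. An alternative route is to first upgrade the weak, Pauli-level percolation of \Cref{sec:truncatability} to the strong form --- in which $\rho$ is a convex combination of states that factorize across $O(\log n)$-sized clusters --- and then sample a cluster decomposition together with each cluster's state; but that strong form is itself open for universal circuits. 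Controlling measurement-induced correlations for a small-but-fixed $p$ just above the $d^* = \Theta(p^{-1}\log p^{-1})$ threshold, where the percolation margin is slim, is the crux. Finally, a genuinely polynomial-time algorithm would follow from additionally showing that the states $\rho_{B_k}$ --- equivalently, the effective backward lightcones surviving the noise --- admit compact classical descriptions such as a low-rank or tensor-network form, removing the $2^{\mathrm{polylog}(n)}$ overhead.
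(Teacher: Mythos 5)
Your proposal follows essentially the same route the paper itself takes for this (explicitly conjectural) statement: use the percolated truncation of \Cref{theorem:convergence2} as structural evidence, run a sweep-order patching sampler that draws each new region from the true conditional distribution computed by brute force on a lightcone of an $\Theta(\ell)$-neighborhood (the paper's \Cref{sec:algorithm}, with sublattices of width $2d$ in place of your single-qubit balls), and reduce correctness to an approximate Markov/conditional-locality property of the measured output distribution (\Cref{conj:markov} and \Cref{theorem:reduction}), which remains open. You also correctly identify the same crux the paper does: conditioning on measurement outcomes is not averaging over them, and the quasi-probability (Pauli-level) percolation statement does not by itself yield the needed post-measurement locality.
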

We then propose an efficient, `patching'-type classical sampling algorithm for this task, which is accurate when the output distribution obeys an \textit{approximate markov property}, which also characterizes a loss of long-range entanglement. Similar ideas have been explored in the context of simulating low-depth Haar-random quantum circuits \cite{napp, watts_gosset_liu_soleimanifar}, Gibbs sampling \cite{brandao_kastoryano_finite_correlation_length}, and in representing quantum states with neural networks \cite{yang_soleimanifar_bergamaschi_preskill} \footnote{A similar algorithm is also proposed in concurrent work targetting noisy geometrically local quantum circuits at any depth, when gates are Haar-random \cite{SuunSoumik}, or when the noise exceeds a constant threshold \cite{FrankSuun}.}. We expect that this percolation phenomenon results in the approximate markov property (which would prove our algorithm is accurate), but we leave open how to make this connection rigorous.

\subsection{Comparison to Prior Work} 
\label{sec:prior}
Here, we highlight existing work and place our results within the previously known landscape of hardness and classical simulatability in noisy quantum circuits. We also depict this in \Cref{fig:diagram}.

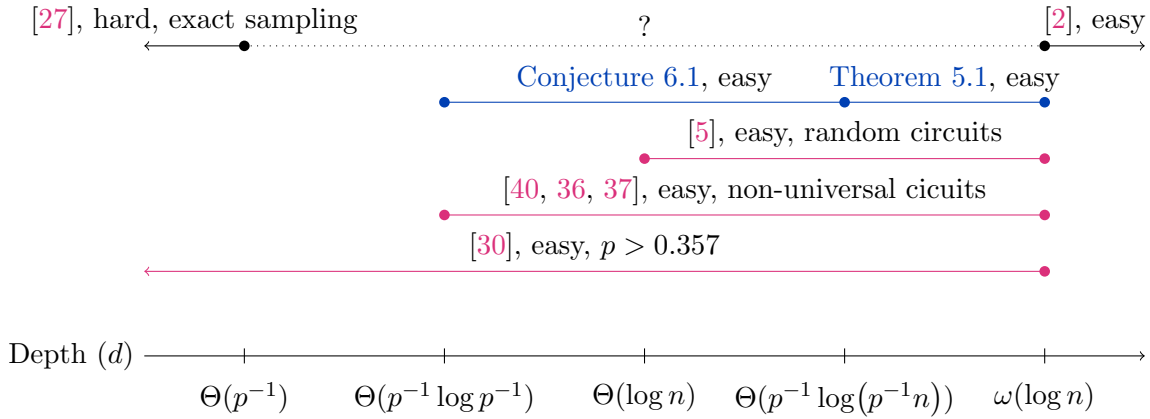
\begin{figure}[H]
		\centering
		\begin{tikzpicture}

\def\dy{0.75}  
\def\yA{4.5*\dy}   
\def\yB{3.5*\dy}   
\def\yC{2.5*\dy}   
\def\yD{1.5*\dy} 
\def\yE{0.5*\dy}
\def\yaxis{-1*\dy} 





\node[left] at (0,\yaxis) {Depth ($d$)};

\def\dx{1.33}

\draw[<-] (0*\dx,\yA) -- (1*\dx,\yA) node[midway,above] {\cite{fujii_tamate}, hard, exact sampling};
\fill (1*\dx,\yA) circle (2pt);

\draw[dotted] (1*\dx,\yA) -- (9*\dx,\yA) node[midway,above] {?};

\draw[->] (9*\dx,\yA) -- (10*\dx,\yA) node[midway,above] {\cite{aharonov_limitations}, easy};
\fill (9*\dx,\yA) circle (2pt);

\draw[darkblue] (3*\dx,\yB) -- (7*\dx,\yB) node[midway,above] {\textcolor{black}{\Cref{conj:formal}, easy}};
\fill[darkblue] (3*\dx,\yB) circle (2pt);
\fill[darkblue] (7*\dx,\yB) circle (2pt);

\draw[darkblue] (7*\dx,\yB) -- (9*\dx,\yB) node[midway,above] {\textcolor{black}{\Cref{theorem:sampling}, easy}};
\fill[darkblue] (9*\dx,\yB) circle (2pt);

\draw[LightPink] (5*\dx,\yC) -- (9*\dx,\yC) node[midway, above] {\textcolor{black}{\cite{aharonov_polynomial_2023}, easy, random circuits}};
\fill[LightPink]  (5*\dx,\yC) circle (2pt);
\fill[LightPink]  (9*\dx,\yC) circle (2pt);

\draw[LightPink] (3*\dx,\yD) -- (9*\dx,\yD) node[midway,above] {\textcolor{black}{\cite{rajakumar_watson_liu,nelson_rajakumar_naturally,oh_classical_simulability_linear_optical}, easy, non-universal cicuits}};
\fill[LightPink]  (3*\dx,\yD) circle (2pt);
\fill[LightPink] (9*\dx,\yD) circle (2pt);

\draw[LightPink][<-] (0*\dx,\yE) -- (9*\dx,\yE) node[midway,above] {\textcolor{black}{\cite{kempe_upper_bounds}, easy, $p > 0.357$}};
\fill[LightPink] (9*\dx,\yE) circle (2pt);

\draw[->] (0*\dx,\yaxis) -- (10*\dx,\yaxis);

\foreach \x/\lbl in {
  1*\dx/{$\Theta(p^{-1})$},
  3*\dx/{$\Theta(p^{-1}\log p^{-1})$},
  5*\dx/{$\Theta(\log n)$},
  7*\dx/{$\Theta(p^{-1}\log (p^{-1} n))$},
  9*\dx/{$\omega(\log n)$}
}{
  \draw (\x,\yaxis+0.1) -- (\x,\yaxis-0.1);
  \node[below=6pt] at (\x,\yaxis) {\lbl};
}

\end{tikzpicture}
		\caption{We consider the complexity of sampling from the output distribution of noisy gometrically local quantum circuits on $n$ qubits, of depth $d$, with depolarizing noise strength $p$. We denote what is currently known about these circuits in black, depict our contributions in blue, and highlight a few existing results that require additional assumptions in pink.}
		\label{fig:diagram}
	\end{figure}

\textbf{Hardness of Noisy Quantum Circuits: }
Ref. \cite{aharonov_limitations} proves that noisy quantum circuits can fault-tolerantly implement $QNC_1$ (\textit{decision problems} solved by log-depth quantum circuits) with quasi-polynomial overhead. Further query complexity separations \cite{chen_cotler_huang_li_NISQ} and approximate sampling hardness \cite{anshu_liu_nguyen_pattison_loglogn_iqp} have also been shown. However, each of these results assumes all-to-all connectivity, leaving the status of geometrically local circuits open. Several fault-tolerance constructions exist for specific constant-depth noisy geometrically local circuits \cite{bravyi_gosset_koenig_tomamichel,bergamaschi_liu_single_shot}, but so far they only apply to logical Clifford circuits, which are not hard to sample from. 
To our knowledge, the only existing `no-go' result on classical simulatability is that their output distributions cannot be exactly sampled at shallow depths ($d = O(p^{-1})$)  \cite{fujii_tamate}, owing to the existence of fault-tolerant cluster states. However, since approximate sampling can sometimes be easy even when exact sampling is hard \cite{napp}, the hardness of approximate sampling from noisy geometrically local quantum circuits remains a completely open question at all depth regimes.
Our results tighten the depth regime where hardness can hold but leave open the possibility of hardness at shallow depths.

\textbf{Classical Simulatability of Noisy Quantum Circuits:} There is a considerable body of literature on this topic. Many works consider estimating the expectation value of a fixed observable to inverse polynomial error \cite{schuster_polynomial, garcia_cirac_trivedi_pauli, fontana,angrisani_mele_simulating,mele_noise-induced,martinez_simulation,angrisani_schmidhuber_rudolph_cerezo_holmes_huang,bravyi_gosset_liu_peaked,bravyi_gosset_movassagh,coble_coudron,dontha_tan_coudron_approximating,stilck-franca_garcia-patron,de-palma_marvian_rouze_stilck-franca_optimal_transport}. This task is strictly easier than approximate sampling, and in noisy geometrically local circuits, Pauli observables and marginals are trivially estimable (as we point out in \Cref{app:observable}). We remark that the overall message of \cite{stilck-franca_garcia-patron,de-palma_marvian_rouze_stilck-franca_optimal_transport} is close to ours, albeit in a different setting of optimization algorithms on arbitrary connectivity: noisy worst-case quantum circuits lose quantum advantage at shallow depths, before convergence to uniformity. For the task of sampling, most existing results require additional assumptions, such as randomness (specifically, anticoncentration) \cite{aharonov_polynomial_2023,bremner_achieving,gao_duan,takahashi_ct-ecs,schuster_polynomial,SuunSoumik},  non-universal gate sets \cite{nelson_rajakumar_naturally,rajakumar_watson_liu,oh_classical_simulability_linear_optical}, or
noise that exceeds a fixed threshold \cite{aharonov_polynomial_1996,aharonov_quantum_to_classical,fujii_tamate,cheng_ippoliti,seddon_magic_monotones,kempe_upper_bounds,plenio_upper_bounds,burhman_upper_bounds,razborov_upper_bounds,FrankSuun}. We describe a few of them below.

When relaxing to noisy \textit{random} quantum circuits, the best classical algorithm also requires $d > d^*$ where $d^* = \Theta(\log n)$ \cite{aharonov_polynomial_2023}, which matches our scaling in $n$ for noisy worst-case geometrically local circuits. Our algorithm is comparable to the algorithm used in \cite{aharonov_polynomial_2023}, as discussed in the next section, but our analysis is technically significant because it gets around the requirements of randomness/anticoncentration, by instead exploiting geometric locality.

We also compare our work to existing results for geometrically local Clifford-magic circuits \cite{nelson_rajakumar_naturally}, IQP circuits \cite{rajakumar_watson_liu}, and linear optical circuits \cite{oh_classical_simulability_linear_optical}. These results rely crucially on the fact that noise channels commute/propagate in predictable ways through such gate sets. Since we consider universal gate sets, our proof techniques go beyond such noise commutation/propagation tools. In particular, we demonstrate that the \textit{bounded growth of lightcones} is the main reason for percolation, rather than restrictions on the gate set. Note, lightcones happen to be bounded in IQP and linear optical circuits even with arbitrary connectivity due to their commutation properties.

Finally, we compare to results that assume noise rates \textit{above a constant threshold}. It can be tempting to interpret our results as, ``for any constant circuit depth $d$, there exists a constant noise threshold $p^*$ such that when $p > p^*$, classical simulability is possible." This is already known for $p^* = 0.357$ \cite{kempe_upper_bounds}. Instead, we have shown something quite stronger: this critical noise threshold \textit{decreases} roughly inversely with the circuit depth. This captures the idea that noise accumulates \textit{faster} than gates can introduce entanglement/redundancy.

In contrast to these existing results, we would like to consider the task of \textit{sampling} from \textit{worst-case} quantum circuits with \textit{universal} gate sets and \textit{arbitrarily low but constant} levels of depolarizing noise. 
Importantly, this setup fundamentally captures the full computational power that near-term quantum devices are capable of providing us. 
The only additional assumption we impose is geometric locality, but this is quite natural since all physical processes are constrained by three dimensions. 
To our knowledge, the only prior result applicable here is \cite{aharonov_limitations}\footnote{a correction to this proof is in \cite{muller-hermes_stilck-franca_wolf_relative_entropy_convergence}}, which establishes classical simulability at $\omega(p^{-1}\log n)$ depth as discussed in the previous section. Since this bound dates back nearly three decades, we regard our improvement as a substantial advance.

\subsection{Overview of Proof Strategy} \label{sec:proof}
 We are able to obtain results for \textit{worst-case} circuits by developing novel information-theoretic arguments, which are intrinsically agnostic to many circuit details, and may find broad application. One important tool used in similar results \cite{aharonov_limitations,muller-hermes_stilck-franca_wolf_relative_entropy_convergence,  aharonov_polynomial_1996,stilck-franca_garcia-patron} is relative entropy convergence in noisy quantum circuits, which shows that $D(\rho\|\sigma) \leq (1-p)^dn$, where $\rho$ is the output state and $\sigma$ is the maximally mixed state on all qubits. This is exactly the reason why the output distribution can be approximated by the uniform distribution at $\omega(p^{-1}\log n)$ depth. Here, we make a modification to this argument, showing that for any subset of qubits $A$, $D(\rho\|\sigma_A \otimes \Tr_A(\rho)) \leq (1-p)^d|L(A)|$, where $L(A)$ is the set of qubits in the reverse lightcone of $A$. This statement shows that small subsets of qubits converge to the maximally mixed state much faster than the whole state does. Note that this is quite different from convergence of the reduced density matrix of $\rho$ on $A$, i.e. a bound on $D(\rho_A\|\sigma_A)$, because our bound requires that correlations between $A$ and the remaining qubits decay due to noise in addition to $A$ approaching the maximally mixed state. For this we use a `conditional quantum shearer's inequality' \cite{berta_conditional_shearers} in tandem with existing proof techniques concerning relative entropy convergence for depolarizing channels  \cite{muller-hermes_stilck-franca_wolf_relative_entropy_convergence}. 

The next insight is that in geometrically local quantum circuits, the lightcone, $L(A)$, only grows polynomially with depth $\sim O(d^D)$, whereas the decay coefficient is exponentially suppressed in $d$. Therefore, when $A$ is chosen to be a $D$-dimensional `sublattice' of side length $2d$, we can set $D(\rho\|\sigma_A \otimes \Tr_A(\rho)) \leq O((1-p)^dd^D)$ arbitrarily low by increasing $d$. We then exploit the fact that this convergence occurs `independently,' when considering many spatially separated sublattices. In particular, since most sublattices should be close to the maximally mixed state, this allows us to bound the contribution of Pauli operators in the Pauli decomposition of $\rho$ that are supported on too many sublattices (the most significant contributions should come from Pauli operators that have identity terms, i.e. maximally mixed states, on most sublattices). 

Beyond a critical $\Theta(\log n)$ depth, it turns out we can truncate every Pauli operator supported on more than $O(\log n)$ sublattices, and incur a bounded error in trace distance. By enumerating the remaining Pauli operators via brute-force techniques and then computing marginals, we obtain a sampling algorithm via a standard sampling-to-computing reduction \cite{bremner_achieving}. Beyond a critical $\Theta(1)$ depth, we can instead truncate every Pauli operator supported on a large \textit{connected component} of sublattices, and incur a bounded error in trace distance.  This step is proven via a `site percolation' argument, similar to techniques for simulating non-universal circuits \cite{rajakumar_watson_liu,nelson_rajakumar_naturally,oh_classical_simulability_linear_optical}. Since it is unclear how to design a classical algorithm that exploits this truncatability explicitly, we instead propose an efficient sampling algorithm that generally exploits a loss of long-range entanglement characterized by \textit{approximate markovianity} of the output distribution. We then conjecture that the truncated output state (when dephased/measured in the computational basis) exhibits this property.

A portion of our techniques are similar to the `Pauli Path framework' used in prior classical algorithms \cite{aharonov_polynomial_2023,bremner_achieving,gao_duan,takahashi_ct-ecs,schuster_polynomial,fontana,angrisani_schmidhuber_rudolph_cerezo_holmes_huang,angrisani_mele_simulating,mele_noise-induced,martinez_simulation}. However, a key difference is that in these techniques, every possible \textit{path} of a Pauli operator through the circuit is enumerated and each one experiences decay proportional to the total support throughout its evolution. This necessarily incurs a combinatorial blowup with depth which is \textit{not} overwhelmed by the build-up of noise (without further assumptions such as randomness). We manage to avoid this blowup by instead using information-theoretic arguments to bound the total trace norm of all paths that lead to the same \textit{output} Pauli operators with large support.

\section{Preliminaries} \label{sec:prelims}

In general, we will consider all matrices to be defined on the Hilbert space corresponding to qubits indexed by the set $[n]$. We will often consider subsets of this set, which we label by capital letters, e.g. $A \subseteq [n]$. We use $\mathrm D(A)$ to represent the set of density operators on the Hilbert space corresponding to the qubits in $A$. For disjoint sets $A$ and $B$, we will often use $AB$ as shorthand for the union of $A$ and $B$. For $A,B \subseteq [n]$, we will use $B \backslash A$ to denote $B - A \cap B$. We also use $\overline{A} := [n] \backslash A$. For any matrix $\varrho$ defined on the Hilbert space of $n$ qubits, we use $\varrho_A := \Tr_{\overline{A}}(\varrho)$. We will use $\sigma_A$ and $\ketbra{0}_A$ to denote the density matrix on qubits in the set $A$ corresponding to the maximally mixed state, $\eye/2^{|A|}$, and the all-zero state, $\ketbra{0}^{\otimes |A|}$, respectively.

We will generally use the font $\mc  A, \mc B, \mc C \ldots$ to denote linear maps and/or channels, and we insert subscripts, e.g. $\mc A_i$ to denote the target qubit(s) if applicable. For maps which act non-trivially on only one qubit, we will use $\mc A_{S} =  \bigcirc_{i \in S}\mc A_i$, for any set of qubits $S \subseteq [n]$, to denote a composition of these maps on each qubit in the set. For any qubit $i \in [n]$ and any matrix $\varrho$, we define the identity channel $\mc I_i$, the complete depolarizing channel $\mc D_i$, and the depolarizing channel of fixed strength $p$, $\mc N_i$, as follows,
\begin{align*}
    \mc{I}_i(\varrho) &= \varrho\\
    \mc{D}_i(\varrho) &= \sigma_i \otimes \varrho_{\overline{i}} \\
    \mc{N}_i(\varrho) &= (1-p)\varrho + p \sigma_i \otimes \varrho_{\overline{i}}
\end{align*}

    We will often use $\Phi$ to denote the channel corresponding to a circuit of $d$ layers of two-qubit unitary gates on $n$ qubits, where single-qubit depolarizing noise of strength $p$ is applied on each qubit after each layer, i.e. $\Phi = \bigcirc_{i \in d,\ldots ,1} [\mc N_{[n]} \circ \mc U^{(i)}]$, where $\mc U^{(i)}$ is the $i^{th}$ layer of unitaries in the circuit. We will assume, without loss of generality, that this circuit is applied on the all-zero input state, and denote the output density matrix by $\rho = \Phi(\ketbra{0}^{\otimes n})$. In particular, we will use the phrase `$\rho$ is the output state of a noisy quantum circuit,' to refer to this setup. Note, we will often only be concerned with measurement of this state in the computational basis, in which case it can be assumed that $\rho$ is completely dephased on all qubits. However, all our results apply broadly to the output state and the dephased output state. 
    
    We call a quantum circuit `geometrically local' if its qubits can be placed on a lattice such that every gate in the circuit is nearest-neighbor on this lattice. We will use $D$ to refer to the dimensionality of this lattice (e.g. $D=2$ on a 2D grid). We will assume $D = O(1)$ throughout this work, and thus will omit dependence on $D$ in most cases.

     We will use $P$ to generally refer to the output distribution over bitstrings produced by measuring some $\rho$ in the computational basis. For any $A \subseteq [n]$, we will also use $P_A$ to denote the marginal distribution on $A$, and for any bitstring $b$ on qubits $B \subseteq [n]$, we use $P_{A |B = b}$ to denote the distribution on $A$, conditioning on the event that the bistring $b$ is sampled on $B$. 
     

\section{Information-Theoretic Properties of Noisy Geometrically Local Quantum  Circuits} \label{sec:information-theory}

\subsection{Convergence of Relative Entropy on Subsets of Qubits}
Relative entropy convergence under tensor products of depolarizing channels \cite{muller-hermes_stilck-franca_wolf_relative_entropy_convergence} is a powerful tool used to prove many existing results for worst-case noisy quantum circuits \cite{stilck-franca_garcia-patron,de-palma_marvian_rouze_stilck-franca_optimal_transport,aharonov_limitations,fawzi_muller-hermes_shayeghi_lower_bound}. In particular, when any state $\rho$ experiences depolarizing channels on \textit{each} qubit, the relative entropy $D(\rho\|\sigma)$, which characterizes the distance to the maximally mixed state, decays by a multiplicative factor. Here, we make a modification to this tool which allows us to handle depolarizing channels on \textit{subsets} of qubits rather than on all qubits. Note the following lemma applies to \textit{any quantum state}, not just those arising from noisy quantum circuits.
\begin{lemma} [Relative Entropy Convergence on Subsets]\label{lemma:relative_entropy_convergence_subsets}
For any state $\rho$ on $n$ qubits and $A \subseteq [n]$,
\begin{align}
        D(\mc N_A(\rho)\|\sigma_A \otimes \rho_{\bar{A}}) \leq (1-p)D(\rho\|\sigma_A \otimes \rho_{\bar{A}})
    \end{align}
\end{lemma}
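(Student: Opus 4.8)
The plan is to reduce the statement to the single-qubit case and then bootstrap to arbitrary $A$ by induction on $|A|$, with the inductive step glued together by a chain-rule identity for relative entropy; the overall structure parallels the known all-qubits contraction of \cite{muller-hermes_stilck-franca_wolf_relative_entropy_convergence}, but the reference state needs to be handled with more care. For the base case $A=\{i\}$, the key observation is that the complete-depolarization branch of $\mc N_i$ already outputs the reference state: $\mc N_i(\rho)=(1-p)\rho+p\,\mc D_i(\rho)=(1-p)\rho+p\,(\sigma_i\otimes\rho_{\overline{i}})$, so setting $\tau:=\sigma_i\otimes\rho_{\overline{i}}$ we get $\mc N_i(\rho)=(1-p)\rho+p\tau$. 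Writing $\tau=(1-p)\tau+p\tau$ and invoking joint convexity of the quantum relative entropy gives $D\big((1-p)\rho+p\tau\,\big\|\,(1-p)\tau+p\tau\big)\le(1-p)D(\rho\|\tau)+p\,D(\tau\|\tau)=(1-p)D(\rho\|\tau)$, which is the claim when $|A|=1$.

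For the inductive step, write $A=\{i\}\sqcup A'$ and set $R:=[n]\setminus\{i\}$ and $B:=\bar A=R\setminus A'$. I would use the elementary identity $D(\omega\,\|\,\sigma_i\otimes\theta_R)=D(\omega\,\|\,\sigma_i\otimes\omega_R)+D(\omega_R\,\|\,\theta_R)$, valid for any state $\omega$ on the $n$ qubits and any $\theta_R$ (it follows at once from $\log(\sigma_i\otimes\theta_R)=\log\sigma_i\otimes\eye+\eye\otimes\log\theta_R$). Applying this with $\theta_R=\sigma_{A'}\otimes\rho_B$ to both $\omega=\mc N_A(\rho)$ and $\omega=\rho$, and using $(\mc N_A\rho)_B=\rho_B$, reduces the lemma to two bounds. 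First, $D\big(\mc N_A(\rho)\,\big\|\,\sigma_i\otimes(\mc N_A\rho)_R\big)\le(1-p)\,D(\rho\,\|\,\sigma_i\otimes\rho_R)$: set $\nu:=\mc N_{A'}(\rho)$ so that $\mc N_A(\rho)=\mc N_i(\nu)$ and $(\mc N_A\rho)_R=\nu_R$, apply the base case to $\nu$, and then apply the data-processing inequality to the channel $\mc I_i\otimes\mc N_{A'}$, which sends $\rho\mapsto\nu$ and $\sigma_i\otimes\rho_R\mapsto\sigma_i\otimes\nu_R$. Second, $D\big((\mc N_A\rho)_R\,\big\|\,\sigma_{A'}\otimes\rho_B\big)\le(1-p)\,D(\rho_R\,\|\,\sigma_{A'}\otimes\rho_B)$: this is exactly the inductive hypothesis for the $(n-1)$-qubit state $\rho_R$ and the subset $A'$, once one notes $(\mc N_A\rho)_R=\mc N_{A'}(\rho_R)$ and $(\rho_R)_B=\rho_B$. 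Summing the two bounds and running the identity backwards on the right-hand side reconstitutes $(1-p)\,D(\rho\,\|\,\sigma_A\otimes\rho_{\bar A})$.

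The conceptual obstacle — and the reason a naive ``contract $\mc N_i$ one qubit at a time'' argument fails — is that after depolarizing a single qubit of $A$ the state is close to $\sigma_i\otimes\rho_{\overline{i}}$, not to $\sigma_A\otimes\rho_{\bar A}$, so the single-qubit contraction factors cannot simply be multiplied against one common reference. The chain-rule splitting is precisely what lets us peel off one qubit while keeping, at every level of the recursion, a reference of the product form needed to reapply both the base case and the inductive hypothesis. The rest is routine bookkeeping: checking that partial traces commute with the disjointly supported noise channels (so that the reduced states of $\mc N_A(\rho)$ are what one expects), and noting that in finite dimension every relative entropy appearing is finite, so all the manipulations are legitimate.
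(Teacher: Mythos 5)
Your proof is correct, but it takes a genuinely different route from the paper. The paper rewrites the claim as a conditional entropy production inequality $S(A|\bar A)_{\mc N_A(\rho)} \geq (1-p)S(A|\bar A)_\rho + p|A|$, expands $\mc N_A$ as a binomial mixture over subsets of $A$ that get completely depolarized, uses concavity of conditional entropy, and then invokes the conditional quantum Shearer inequality of \cite{berta_conditional_shearers} together with a binomial identity to recover the factor $(1-p)$ — a direct adaptation of the all-qubit argument of \cite{muller-hermes_stilck-franca_wolf_relative_entropy_convergence}. You instead get the single-qubit case for free from joint convexity (since the depolarized branch of $\mc N_i$ is exactly the reference), and then induct on $|A|$ by peeling one qubit at a time with the exact identity $D(\omega\|\sigma_i\otimes\theta_R)=D(\omega\|\sigma_i\otimes\omega_R)+D(\omega_R\|\theta_R)$, closing the first term with the base case plus data processing under $\mc I_i\otimes\mc N_{A'}$ and the second with the inductive hypothesis; in entropic language this is the chain rule for conditional entropy combined with its monotonicity under channels acting on the conditioning system. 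I checked the two reductions — $(\mc N_A\rho)_R=\mc N_{A'}(\rho_R)$ and $\mc N_{A'}(\rho_R)=\nu_R$ — and the bookkeeping goes through, so your argument avoids the conditional Shearer inequality entirely and is arguably more elementary and self-contained, while the paper's approach keeps the statement visibly parallel to the known entropy-production machinery. One small precision point: finiteness of the relative entropies is not a consequence of finite dimension alone but of the support containment $\Supp(\rho)\subseteq \mathcal H_A\otimes\Supp(\rho_{\bar A})$ (and likewise for $\mc N_A(\rho)$, whose reduced state on $\bar A$ equals $\rho_{\bar A}$), which does hold here; with that remark in place the manipulations are fully legitimate.
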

This is proven in \Cref{app:relative_entropy_convergence_subsets}, and it follows the general proof strategy of an earlier proof of relative entropy convergence on all qubits \cite{muller-hermes_stilck-franca_wolf_relative_entropy_convergence}, applying a `conditional quantum Shearer's inequality' \cite{berta_conditional_shearers} at an intermediate step. 



In existing results, relative entropy convergence is usually used inductively at each layer of the circuit to show that $D(\rho\|\sigma) \leq (1-p)^dn$ for any noisy quantum circuit. Here, we make a modification to this induction technique, showing that for any subset of qubits $A$, the output of the circuit converges exponentially quickly to $\sigma_A \otimes \rho_{\overline{A}}$, but incurring an overhead proportional to the size of the `reverse lightcone' $L(A)$, rather than $n$. This is depicted pictorially in \Cref{fig:2} (we define the lightcone more formally in \Cref{app:exponential_decay}). Note, the following lemma is general to \textit{any noisy quantum circuit}, not just geometrically local ones.

\begin{lemma} [Relative Entropy Decay in Circuits]\label{lemma:exponential_decay}
    Let $\rho$ be the output state of a noisy quantum circuit. Let $A \subseteq [n]$, and $L(A)$ be the set of qubits in its reverse lightcone.
    \begin{align}
        D(\rho\|\sigma_A \otimes \rho_{\overline{A}}) \leq (1-p)^d|L(A)|
    \end{align}
\end{lemma}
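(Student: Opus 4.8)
The plan is a downward induction over the $d$ layers of the circuit: as we step a layer backwards, the relevant relative entropy picks up a factor of $(1-p)$, while the ``region'' of interest grows by one causal step, from $A$ at the output toward the full reverse lightcone $L(A)$ at the input. Write $\rho^{(t)}$ for the state after the first $t$ layers (each layer being a unitary layer $\mc U^{(t)}$ followed by $\mc N_{[n]}$), so $\rho^{(0)} = \ketbra{0}^{\otimes n}$ and $\rho^{(d)} = \rho$; and write $L_t$ for the reverse lightcone of $A$ through layers $t+1,\dots,d$ only, so $L_d = A$, $L_0 = L(A)$, and $L_t \subseteq L_{t-1}$. I claim, by downward induction on $t$,
\begin{equation*}
D\!\left(\rho\,\big\|\,\sigma_A\otimes\rho_{\overline A}\right) \;\le\; (1-p)^{\,d-t}\, D\!\left(\rho^{(t)}\,\big\|\,\sigma_{L_t}\otimes\rho^{(t)}_{\overline{L_t}}\right) \qquad\text{for all } 0\le t\le d .
\end{equation*}
For $t=d$ this is an equality. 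At $t=0$, the input state is a product pure state, so $D(\rho^{(0)}\|\sigma_{L(A)}\otimes\rho^{(0)}_{\overline{L(A)}}) = D(\ketbra{0}_{L(A)}\|\sigma_{L(A)}) = |L(A)|$, which yields the lemma.

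For the inductive step, fix $t$, set $L := L_t$, $L' := L_{t-1}$, and let $\tau := \mc U^{(t)}(\rho^{(t-1)})$ be the state after the $t$-th unitary layer but before its noise; using $\rho^{(t)} = \mc N_{[n]}(\tau) = \mc N_{\overline L}\circ\mc N_L(\tau)$, it suffices to prove $D(\rho^{(t)}\|\sigma_L\otimes\rho^{(t)}_{\overline L}) \le (1-p)\,D(\rho^{(t-1)}\|\sigma_{L'}\otimes\rho^{(t-1)}_{\overline{L'}})$. I would do this in three moves. \emph{(i) Depolarizing on $L$:} apply \Cref{lemma:relative_entropy_convergence_subsets} with state $\tau$ and subset $L$ to get $D(\mc N_L(\tau)\|\sigma_L\otimes\tau_{\overline L}) \le (1-p)\,D(\tau\|\sigma_L\otimes\tau_{\overline L})$. \emph{(ii) Depolarizing on $\overline L$:} since $\mc N_L$ is trace preserving, $\rho^{(t)}_{\overline L} = \Tr_L(\mc N_{\overline L}\mc N_L(\tau)) = \mc N_{\overline L}(\tau_{\overline L})$, and $\mc N_{\overline L}$ fixes $\sigma_L$; hence data processing under the channel $\mc N_{\overline L}$ gives $D(\rho^{(t)}\|\sigma_L\otimes\rho^{(t)}_{\overline L}) = D(\mc N_{\overline L}(\mc N_L\tau)\,\|\,\mc N_{\overline L}(\sigma_L\otimes\tau_{\overline L})) \le D(\mc N_L(\tau)\|\sigma_L\otimes\tau_{\overline L})$. \emph{(iii) The unitary layer and the lightcone:} first, for any state $\varrho$ and any $A\subseteq A'$ one has $D(\varrho\|\sigma_A\otimes\varrho_{\overline A}) \le D(\varrho\|\sigma_{A'}\otimes\varrho_{\overline{A'}})$ — this follows from the identity $D(\varrho\|\sigma_A\otimes\varrho_{\overline A}) = |A| - S(A|\overline A)_\varrho$ (a direct computation; $S(A|B)=S(AB)-S(B)$ is the conditional von Neumann entropy in bits) together with $S(A'\setminus A\,|\,\overline{A'})_\varrho \le |A'\setminus A|$ — so taking $A=L\subseteq L'=A'$ gives $D(\tau\|\sigma_L\otimes\tau_{\overline L}) \le D(\tau\|\sigma_{L'}\otimes\tau_{\overline{L'}})$. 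Second, since a single layer consists of gates on disjoint qubit pairs, every gate of $\mc U^{(t)}$ touching $L$ lies inside $L'$ and every other gate lies inside $\overline{L'}$, so $\mc U^{(t)} = \mc V_{L'}\otimes\mc W_{\overline{L'}}$; hence $\mc U^{(t)}$ maps $\sigma_{L'}\otimes\rho^{(t-1)}_{\overline{L'}}$ to $\sigma_{L'}\otimes\tau_{\overline{L'}}$, and unitary invariance of relative entropy gives $D(\tau\|\sigma_{L'}\otimes\tau_{\overline{L'}}) = D(\rho^{(t-1)}\|\sigma_{L'}\otimes\rho^{(t-1)}_{\overline{L'}})$. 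Chaining (i)--(iii) closes the induction.

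The main obstacle is move (iii): quantifying how much relative entropy the growth of the lightcone can ``cost'' when a unitary layer is stepped backwards. The resolution is the clean formula $D(\varrho\|\sigma_A\otimes\varrho_{\overline A}) = |A| - S(A|\overline A)_\varrho$, which makes the bookkeeping transparent: enlarging the region from $L$ to $L'$ raises the $|A|$ term by exactly $|L'|-|L|$, which is an upper bound on the conditional entropy a single unitary layer could possibly inject on the newly included qubits, so the net relative entropy does not increase. The remaining ingredients — that a single layer respects the $L'\,|\,\overline{L'}$ bipartition (needed for the exact ``undo-the-unitary'' step), that tracing out $L$ commutes with the residual noise on $\overline L$, and the two reference-state identities used in moves (i) and (ii) — are routine once written out. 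Note that geometric locality is never used here; the identical argument applies to any layered noisy circuit, with $|L(A)|$ merely being larger when the gates fail to be geometrically local.
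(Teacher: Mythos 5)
Your proof is correct and follows essentially the same route as the paper: a layer-by-layer induction that applies \Cref{lemma:relative_entropy_convergence_subsets} to the noise inside the current lightcone, the subset-enlargement identity $D(\varrho\|\sigma_A\otimes\varrho_{\overline A}) = |A|-S(A|\overline A)_\varrho$ (the paper's \Cref{fact:switchsigma}) to pass from $L_t$ to $L_{t-1}$, and data processing to absorb the unitary layer and the noise outside the lightcone. The only difference is bookkeeping: you handle the unitary layer via its product structure across the $L_{t-1}|\overline{L_{t-1}}$ cut and unitary invariance, whereas the paper packages the same fact as a commutation relation between $\mc D_{L_{t-1}}$ and the remaining channels followed by monotonicity of relative entropy.
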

This is proven in \Cref{app:exponential_decay}. 

\begin{figure}[h]
  \centering
  \makebox[\textwidth]{\includegraphics[width=1.0\linewidth]{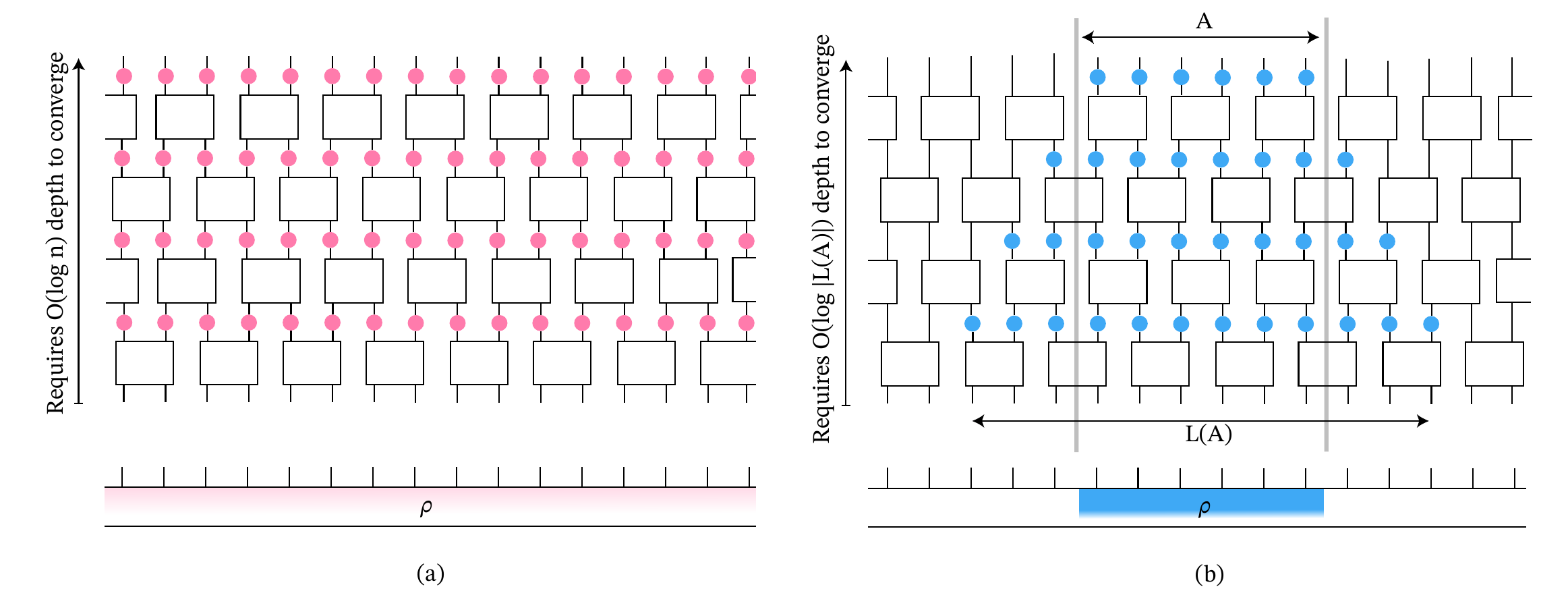}}
  \caption{Figure (a) shows how convergence of the entire state to within a constant relative entropy distance of $\sigma$ requires noise on all qubits, and further $O(\log n)$ layers of this noise. Figure (b) shows how convergence to $\sigma_{A} \otimes \rho_{\overline{A}}$ only requires noise within the lightcone of $A$, and further this means only $O(\log(|L(A)|))$ layers are required.}
  \label{fig:2}
\end{figure}

\subsection{Coarse-Graining into Sublattices and Convergence at Constant Depth}
Under the constraint of geometric locality, lightcones cannot grow very quickly, and this motivates us to consider small `sublattices' in the lattice, which have bounded lightcone size. We define this coarse-graining below,

\begin{definition}[Coarse-Graining] \label{def:coarse-grain}
    For geometrically local circuits, we will coarse-grain the lattice into a fixed set of $m$ `sublattices,' each of size $(2d)^D$ which have side lengths $2d \times 2d \ldots \times 2d$. We will denote the set of all sublattices as $J$. We will fix some ordering of the sublattices $J_1,\ldots, J_,$. We will define $J_{\leq i} = \bigcup_{k \leq i} J_k$, and define $J_{< k}$ similarly. For sublattice $A \in J$, we use $\partial^{\ell}A$ to denote the set of sublattices that are at most $\ell$ sublattices away from $A$ (not including $A$). We denote $\partial A = \partial^1A$.
\end{definition}

One important feature of this coarse-graining is that it allows different sublattices to `independently' converge, since their lightcones do not intersect. This is depicted in \Cref{fig:3}. We now have the following corollary of \Cref{lemma:exponential_decay}. 
\begin{corollary} [Relative Entropy Decay in Geo. Local Circuits]\label{corollary:exponential_decay}
    When $\rho$ is the output state of a geometrically local noisy quantum circuit, for any $J_i \in J$, 
    \begin{align}
        D(\rho\|\sigma_{J_i} \otimes \rho_{J \backslash J_i}) \leq (1-p)^d(4d)^D
    \end{align} 
\end{corollary}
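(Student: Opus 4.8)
The plan is to obtain \Cref{corollary:exponential_decay} as an immediate specialization of \Cref{lemma:exponential_decay} together with an elementary bound on the size of the reverse lightcone of a single sublattice. Setting $A = J_i$ in \Cref{lemma:exponential_decay} gives $D(\rho\|\sigma_{J_i}\otimes\rho_{\overline{J_i}}) \le (1-p)^d |L(J_i)|$, and since the sublattices of \Cref{def:coarse-grain} tile the lattice we may identify $\rho_{\overline{J_i}}$ with $\rho_{J\backslash J_i}$, so the entire task reduces to showing $|L(J_i)| \le (4d)^D$.

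To bound $|L(J_i)|$, I would argue that the reverse lightcone of any region grows by at most one lattice site in each coordinate direction per circuit layer. Concretely, $L(J_i)$ is built by starting from $J_i$ and peeling off layers $d, d-1, \dots, 1$ in reverse order, at each step adjoining every qubit that shares a two-qubit gate in that layer with a qubit already in the set (including gates straddling the current boundary). Because the circuit is geometrically local, every such gate acts on a pair of nearest-neighbor sites of the $D$-dimensional lattice, so a single peeling step enlarges the set by at most distance one in the $\ell^\infty$ metric; after all $d$ steps the set is contained in the $\ell^\infty$-neighborhood of radius $d$ around $J_i$. Since $J_i$ is a box of side length $2d$, this neighborhood lies inside a box of side length $2d + 2d = 4d$, hence $|L(J_i)| \le (4d)^D$.

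Combining the two displays yields $D(\rho\|\sigma_{J_i}\otimes\rho_{J\backslash J_i}) \le (1-p)^d(4d)^D$, which is the claim. The only mild care needed — not really an obstacle — is pinning down the precise definition of the reverse lightcone and of lattice adjacency so that the ``grows by at most one site per layer'' statement is exactly correct; all the substantive work is already carried by \Cref{lemma:exponential_decay} (which in turn rests on \Cref{lemma:relative_entropy_convergence_subsets}), so the corollary is essentially a geometric bookkeeping step. It is worth emphasizing in the write-up why this is the crucial gain: the overhead $(4d)^D$ is only \emph{polynomial} in $d$, while the prefactor $(1-p)^d$ decays exponentially, so unlike the global bound $D(\rho\|\sigma)\le(1-p)^d n$ this quantity can be driven below any constant already at $d = \Theta(p^{-1}\log p^{-1})$, independently of $n$.
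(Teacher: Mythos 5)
Your proposal is correct and matches the paper's (implicit) argument exactly: the corollary is stated as an immediate consequence of \Cref{lemma:exponential_decay} with $A = J_i$, combined with the geometric observation that the reverse lightcone of a side-$2d$ sublattice grows by at most one lattice site per layer in each direction, hence fits in a box of side $4d$, giving $|L(J_i)| \le (4d)^D$. Your closing remark about the polynomial-versus-exponential tradeoff is also precisely the insight the paper emphasizes after the corollary.
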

Our main insight is that, since $(1-p)^d$ is exponentially decaying in depth while $(4d)^D$ is only polynomially growing in depth, this quantity becomes less than $1$ after some constant critical depth threshold and inverse polynomially small after a log-depth threshold. In particular, we note the following fact,
\begin{fact} [Follows from Lemma 20 of \cite{rajakumar_watson_liu}] \label{fact:critical_depth}
    For any $c >1$, there exists some $d^* = \Theta(p^{-1}\log (p^{-1}c))$, such that when $d > d^*$, $(1-p)^d (4d)^D < 1/c$.
\end{fact}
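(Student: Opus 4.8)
The plan is a direct asymptotic analysis after taking logarithms. The inequality $(1-p)^d(4d)^D < 1/c$ is equivalent to
\[
d\,\log\tfrac{1}{1-p} \;>\; D\log(4d) + \log c .
\]
Using the elementary bound $\log\tfrac{1}{1-p} \ge p$, valid for all $p\in(0,1)$, it suffices to establish $pd > D\log(4d) + \log c$. I would split the ``budget'' $pd$ into two equal halves and dispatch the two right-hand terms separately. For the logarithmic term I claim $\tfrac12 pd \ge D\log(4d)$ once $d \ge c_1 p^{-1}\log p^{-1}$ for a suitably large constant $c_1 = c_1(D)$: at $d = c_1 p^{-1}\log p^{-1}$ the left side is $\tfrac12 c_1 \log p^{-1}$ while the right side equals $D\log p^{-1} + D\log\log p^{-1} + O_D(1)$, which is dominated once $c_1$ is large (the $\log\log$ term being lower order), and since $pd - D\log(4d)$ is nondecreasing in $d$ for $d \ge D/p$ this persists for all larger $d$. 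The constant term is handled by $\tfrac12 pd \ge \log c$ as soon as $d \ge 2p^{-1}\log c$. Since $\log(p^{-1}c) = \log p^{-1} + \log c$, a single threshold $d^* := C\,p^{-1}\log(p^{-1}c)$ with $C$ chosen large enough (after $D$) exceeds both $c_1 p^{-1}\log p^{-1}$ and $2p^{-1}\log c$ simultaneously, so the displayed inequality holds for all $d > d^*$; moreover $d^* = O(p^{-1}\log(p^{-1}c))$. Finally, monotonicity of $d\log(1-p) + D\log(4d)$ in $d$ beyond $d \sim D/p$ shows that once the bound holds at $d^*$ it continues to hold for every $d > d^*$.

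For the matching lower bound, which justifies writing $\Theta$ rather than $O$: since $(4d)^D \ge 1$ for $d \ge 1$, the hypothesis forces $(1-p)^d < 1/c$, hence $d > \log c/\log\tfrac{1}{1-p} \ge \log c/(2p)$ for $p \le \tfrac12$ (using $\log\tfrac{1}{1-p} \le 2p$ there), i.e. $d = \Omega(p^{-1}\log c)$. Likewise the inequality forces $d\log\tfrac{1}{1-p} > D\log(4d)$, so $2pd > D\log(4d)$; writing $d = p^{-1}g$ this reads $2g > D\log(4p^{-1}g)$, which fails whenever $g = o(\log p^{-1})$, giving $d = \Omega_D(p^{-1}\log p^{-1})$. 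Combining, any valid threshold is $\Omega(p^{-1}(\log p^{-1} + \log c)) = \Omega(p^{-1}\log(p^{-1}c))$, matching the upper bound.

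I do not expect a genuine obstacle: this is essentially a calculus exercise and follows the template of Lemma~20 of \cite{rajakumar_watson_liu}. The only points needing mild care are (i) choosing the constant $C$ in $d^*$ after $D$ so that the lower-order $\log\log p^{-1}$ term is absorbed, and (ii) the edge case where $p$ is bounded away from $0$, where $\log\tfrac{1}{1-p}$ is a positive constant and the claim is immediate with $d^* = \Theta(\log c)$ — consistent with the stated scaling since $p^{-1} = \Theta(1)$ there.
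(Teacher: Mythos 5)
Your proof is correct. The paper gives no argument of its own for this fact — it simply invokes Lemma 20 of \cite{rajakumar_watson_liu} — and your direct computation (taking logarithms, using $\log\frac{1}{1-p}\ge p$, splitting the budget $pd$ against the $D\log(4d)$ and $\log c$ terms, and checking monotonicity beyond $d\ge D/p$) is exactly the kind of elementary calculus argument that citation encapsulates; your matching lower bound justifying the $\Theta$ (rather than just $O$) goes slightly beyond what the paper requires, but it is sound and only strengthens the statement.
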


\begin{figure}[h]
  \centering
  \makebox[\textwidth]{\includegraphics[width=1.0\linewidth]{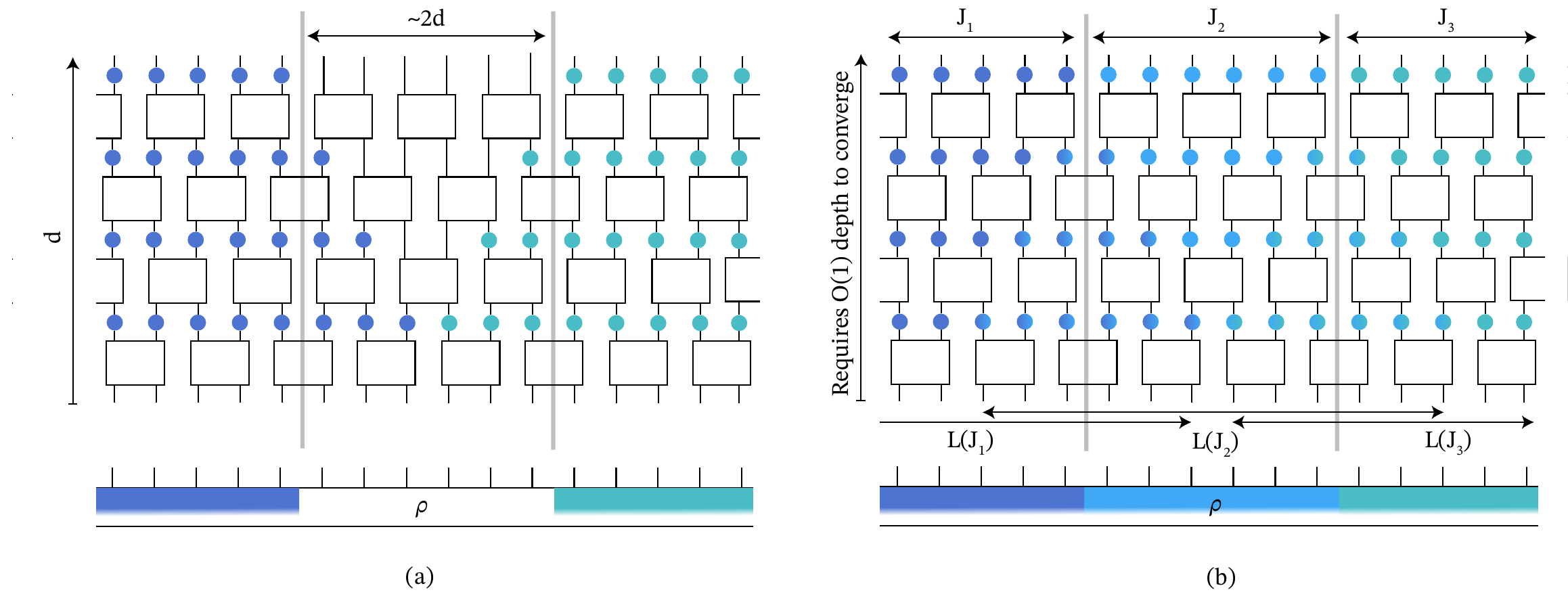}}
  \caption{In Fig (a), we show how non-adjacent sublattices independently converge towards the maximally mixed state due to our specific choice of coarse-graining. Fig (b) depicts what our relative entropy bounds tell us about the state after constant depth. This can be compared with \Cref{fig:2}(a): while $O(\log n)$ depth is required for the full state to be within constant relative entropy distance to the maximally mixed state, we show that after $O(1)$ depth, every single sublattice is within a constant relative entropy distance to the maximally mixed state on that sublattice.}
  \label{fig:3}
\end{figure}

\section{Truncatability of Pauli Operators at Critical Depth} \label{sec:truncatability}
In order to apply \Cref{corollary:exponential_decay}, we would like to decompose $\rho$ into terms that agree with this convergence and error terms that deviate from this convergence. We can then truncate these error terms and bound their contribution to arrive at our approximation. To do this, first consider trivially decomposing $\rho$ as $\sigma_{J_i} \otimes \rho_{J \backslash J_i} + (\rho -\sigma_{J_i} \otimes \rho_{J \backslash J_i})$ for a given sublattice $J_i$. This is equivalent to applying the map $\mc D_{J_i} +(\mc I - \mc D_{J_i})$. The first term is what $\rho$ converges towards, and the second term is considered the error term, and its trace norm can be directly bounded by \Cref{corollary:exponential_decay} through Pinsker's inequality. We can now recursively apply this decomposition for every sublattice, which results in what we call the ``inclusion-exclusion decomposition''. Once $\rho$ is in this form, we show that the trace norm of a term in the decomposition is exponentially decayed by the number of sublattices on which it acts as an error term.

\begin{figure}[h]
  \centering
  \makebox[\textwidth]{\includegraphics[width=1.0\linewidth]{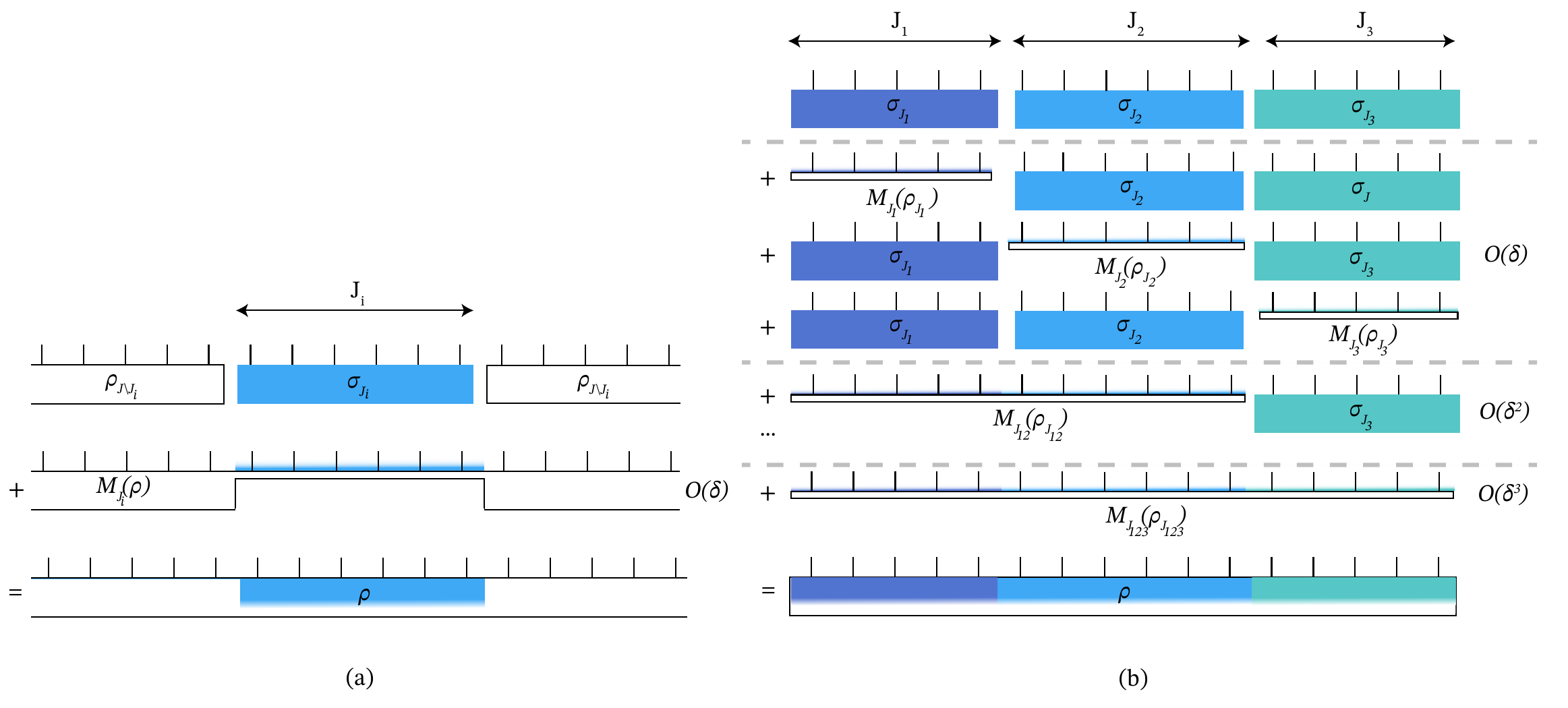}}
  \caption{In Fig. (a), we depict how $\rho$ can be trivially decomposed into $\sigma_{J_i} \otimes \rho_{J \backslash J_i}$ and a residual term $ \rho - \sigma_{J_i} \otimes \rho_{J \backslash J_i} =\mc M_{J_i}(\rho)$. We use a completely blue box to depict $\sigma_{J_i}$ since it is a maximally mixed state. We use a thinner box with shading above it to denote $\mc M_{J_i}(\rho)$, since it is not a state, but rather a trace-zero matrix of small, $O(\delta)$, trace norm. In Fig. (b), we show how this decomposition can be applied recursively for multiple sublattices, and we show how each residual term is exponentially suppressed (note the scaling in $\delta$), which allows them to be truncated without incurring too much error.}
  \label{fig:4}
\end{figure}

\subsection{Inclusion-Exclusion Decomposition}

\begin{definition} [Inclusion-Exclusion Map] \label{def:inclusion-exclusion}
    For any $J_i \in J$ and any matrix $\varrho$, we define,
    \begin{align*}
        \mc M_{J_i}(\varrho) = \varrho - \sigma_{J_i} \otimes \varrho_{J \backslash J_i} =  [\mc  I - \mc D_{J_i}] (\varrho)
    \end{align*}
\end{definition}
    
Notice that $\varrho = [\mc D_{J_i} + \mc  M_{J_i}](\varrho)$. We can apply this identity map for each sublattice to get the following inclusion-exclusion decomposition,
    \begin{align}
        \varrho &= \bigcirc_{J_i \in J} [\mc D_{J_i} + \mc  M_{J_i}] (\varrho)\\
        &= \sum_{A \subseteq J} [\mc M_{A} \otimes \mc D_{J\backslash A}](\varrho) \\
        & = \sum_{A \subseteq J} \mc M_A(\varrho_A) \otimes \sigma_{J \backslash A} 
    \end{align}
    We call this an inclusion-exclusion decomposition because $\mc M_A$ has the following form,
    \begin{align}
        \mc M_{A}(\varrho) = \bigcirc_{J_i \in A}\mc M_{J_i}(\varrho) = \sum_{B \subseteq A}(-1)^{|B|} \sigma_B \otimes \varrho_{J \backslash B} 
    \end{align}
By carefully exploiting the independent convergence of spatially separated sublattices due to \Cref{corollary:exponential_decay}, we can bound the trace norm of each term in the inclusion-exclusion decomposition, summarized in the following key lemma,
\begin{lemma} [Bounds on Inclusion-Exclusion Terms]\label{lemma:inclusion-exclusion}
    Let $\rho$ be the output state of a noisy geometrically local quantum circuit. For any $A \subseteq J$
        \begin{align}
            \|\mc M_A(\rho)\|_1 &\leq (2[(1-p)^d(4d)^D]^{1/(2 * 3^D)})^{|A|} 
        \end{align} 
\end{lemma}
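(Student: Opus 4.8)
The plan is to pass to the Pauli basis, where each $\mc M_{J_i}=\mc I-\mc D_{J_i}$ is just the projection onto the Pauli operators acting non-trivially on the sublattice $J_i$; since these projections commute, $\mc M_A(\rho)$ is exactly the part of the Pauli expansion of $\rho$ that acts non-trivially on \emph{every} sublattice in $A$ (this can also be read off from the inclusion--exclusion formula for $\mc M_A$ stated above, since $\sum_{B\subseteq A\setminus T}(-1)^{|B|}=0$ unless $T=A$). Two elementary facts will do most of the bookkeeping: (i) each $\mc I-\mc D_{J_i}$ has induced trace norm at most $2$, so $\|\mc M_S(\varrho)\|_1\le 2^{|S|}\|\varrho\|_1$ for every $S\subseteq J$ and every matrix $\varrho$; and (ii) by \Cref{corollary:exponential_decay}, quantum Pinsker, and the data-processing inequality, for every sublattice $J_i$ and every $\varrho$ obtained from $\rho$ by tracing out some qubits one has $\|\mc M_{J_i}(\varrho)\|_1\le\sqrt{2\ln 2}\cdot\big((1-p)^d(4d)^D\big)^{1/2}$. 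Write $\delta:=(1-p)^d(4d)^D$; the claimed bound is trivial unless $\delta\le 1$ (otherwise its right-hand side already exceeds $2^{|A|}\ge\|\mc M_A(\rho)\|_1$), so assume $\delta\le 1$.

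Next I color the sublattices with $3^D$ colors, by reducing each of the $D$ coarse-lattice coordinates of a sublattice modulo $3$. Same-colored sublattices are at lattice distance at least $3$ in sublattice units, which suffices to make their reverse lightcones disjoint (each extends only distance $\sim d$ beyond a sublattice of side $2d$); in fact no gate of the circuit can join the lightcones of two distinct same-colored sublattices, so for any color class $A_c$, writing $L_c:=\bigcup_{J_i\in A_c}L(J_i)$, the output state restricted to $L_c$ factorizes as $\rho_{L_c}=\bigotimes_{J_i\in A_c}\rho_{L(J_i)}$.

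Now fix $A\subseteq J$, split it into color classes $A_1,\dots,A_{3^D}$, and let $A^\star$ be the largest, so $|A^\star|\ge|A|/3^D$. Since the $\mc M_{J_i}$ commute, $\mc M_A=\mc M_{A\setminus A^\star}\circ\mc M_{A^\star}$, so by fact (i) it suffices to prove the \emph{multiplicative} bound $\|\mc M_{A^\star}(\rho)\|_1\le(2\sqrt{\delta})^{|A^\star|}$: indeed this gives $\|\mc M_A(\rho)\|_1\le 2^{|A\setminus A^\star|}(2\sqrt\delta)^{|A^\star|}=2^{|A|}\delta^{|A^\star|/2}\le 2^{|A|}\delta^{|A|/(2\cdot 3^D)}$, which is exactly the claimed bound (using $|A^\star|\ge|A|/3^D$ and $\delta\le 1$). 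To prove the multiplicative bound I split the Pauli expansion of $\mc M_{A^\star}(\rho)$ into (a) the operators supported inside $L_c$ and (b) those acting non-trivially somewhere outside $L_c$. Part (a) equals $\mc M_{A^\star}(\rho_{L_c})\otimes\sigma_{\overline{L_c}}$, and by the tensor-product structure of $\rho_{L_c}$ this is $\big(\bigotimes_{J_i\in A^\star}\mc M_{J_i}(\rho_{L(J_i)})\big)\otimes\sigma_{\overline{L_c}}$, whose trace norm is $\prod_{J_i\in A^\star}\|\mc M_{J_i}(\rho_{L(J_i)})\|_1\le(\sqrt{2\ln 2}\cdot\sqrt\delta)^{|A^\star|}$ by fact (ii); since $\sqrt{2\ln 2}<2$ this already leaves room for part (b).

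The main obstacle is part (b): bounding the combined trace norm of the Pauli operators that act non-trivially on every sublattice of $A^\star$ \emph{and} reach outside $\bigcup_{J_i\in A^\star}L(J_i)$. The crude estimate — rewriting part (b) as $\mc M_{A^\star}\big((\mc I-\mc D_{\overline{L_c}})(\rho)\big)$, bounding $\|\mc M_{A^\star}(\cdot)\|_1\le 2^{|A^\star|}\|\cdot\|_1$, and controlling $\|(\mc I-\mc D_{\overline{L_c}})(\rho)\|_1$ through \Cref{lemma:exponential_decay} and Pinsker — is useless, because $|L(\overline{L_c})|=\Theta(n)$ and so that bound does not shrink at constant depth. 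Instead one must use that such operators are long and straddle many causally independent regions, so that their total weight is multiplicatively suppressed; I expect this to be handled by an inductive/telescoping argument over the sublattices of $A^\star$ in which the step that introduces $J_i$ uses that completely depolarizing $J_i$ changes neither the reduced states nor, hence, the relative-entropy bounds of \Cref{corollary:exponential_decay} on the (disjoint) lightcones of the remaining sublattices — so that the convergence factor $\sqrt\delta$ compounds once per sublattice rather than being swamped by the factor-$2$ blow-up of $\mc M$. Making this compounding rigorous, uniformly over all of $A^\star$ at once, is the technical heart of the proof.
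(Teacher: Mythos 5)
Your preparatory steps (the trivial bound $\|\mc M_A(\rho)\|_1\le 2^{|A|}$ to dispose of the case $\delta\ge 1$, passing to a subset $A^\star\subseteq A$ of pairwise non-adjacent sublattices with $|A^\star|\ge |A|/3^D$ at the cost of a factor $2^{|A\setminus A^\star|}$, and the per-sublattice Pinsker bound from \Cref{corollary:exponential_decay}) coincide with the paper's, but the proof is not complete: the step you yourself flag as the ``technical heart'' --- bounding part (b), i.e.\ the Pauli terms of $\mc M_{A^\star}(\rho)$ that act nontrivially on every sublattice of $A^\star$ but extend outside $\bigcup_{J_i\in A^\star}L(J_i)$ --- is exactly where the whole difficulty of the lemma lives, and you only gesture at a hoped-for telescoping argument. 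In addition, the factorization you invoke for part (a), $\rho_{L_c}=\bigotimes_{J_i\in A^\star}\rho_{L(J_i)}$, is false in general: gates \emph{outside} the reverse lightcones (e.g.\ a last-layer gate acting on qubits of $L(J_i)\setminus J_i$ and $L(J_j)\setminus J_j$) do not affect the marginals on the sublattices $J_i$ themselves, but they can and do correlate the larger lightcone regions, so the reduced state on $L_c$ need not be a tensor product. So as written neither half of your split is established.

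The paper's proof avoids the support-splitting at the output altogether, and this is the idea your proposal is missing. Write $\rho=\Phi'_{A_{span}}\circ\Phi_{A_{span}}(\ketbra{0}_J)$, where $\Phi_{A_{span}}$ collects the channels inside the (pairwise disjoint) reverse lightcones of the chosen sublattices and $\Phi'_{A_{span}}$ the rest. The output depolarizations $\mc D_B$ for $B\subseteq A_{span}$ commute with $\Phi'_{A_{span}}$, so for each fixed $C\subseteq A\setminus A_{span}$ the inner inclusion--exclusion sum equals $\mc D_C\circ\Phi'_{A_{span}}$ applied to $\sum_{B\subseteq A_{span}}(-1)^{|B|}\mc D_B\bigl(\Phi_{A_{span}}(\ketbra{0}_J)\bigr)$; since $\mc D_C\circ\Phi'_{A_{span}}$ is CPTP, contractivity of the trace norm removes it. One is then left with the \emph{intermediate} state $\Phi_{A_{span}}(\ketbra{0}_J)$, which exactly factorizes over the disjoint lightcones, so the trace norm factorizes as $\prod_{y\in A_{span}}\|\Phi_y(\ketbra{0}_{L(y)})-\mc D_y\circ\Phi_y(\ketbra{0}_{L(y)})\|_1$, and each factor is bounded by Pinsker together with \Cref{lemma:exponential_decay}, giving $(2\delta)^{|A_{span}|/2}$. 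The correlations reaching outside the lightcones that blocked your part (b) are precisely those created by $\Phi'_{A_{span}}$, and the data-processing/contractivity step disposes of them for free; without an ingredient of this kind (or a genuinely new compounding argument, which you have not supplied), your proposal does not yield the claimed bound.
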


\subsection{Sparse Approximations at $\Theta(\log n)$ depth}
Now we will consider truncating terms in the decomposition of $\rho$ in the inclusion-exclusion decomposition. Recall that $\rho$ can be written as,
\begin{align}
    \rho = \sum_{A \subseteq J} \mc M_A(\rho_A) \otimes \sigma_{J \backslash A}
\end{align}
First, we consider truncating all terms of this decomposition with $|A| > k$ for some parameter $k$. Formally,
\begin{definition}[Sparse Approximations]\label{def:sparse_approximation}
    Let $\rho$ be the output state of a noisy geometrically local quantum circuit. For any parameter $k \in [n]$,
    \begin{align}
    \rho_{sparse,k} &= \sum_{A \subseteq J: |A| \leq k} \mc M_A(\rho_A) \otimes \sigma_{J \backslash A} 
    \end{align}
\end{definition}

We now apply the exponentially decaying bounds on each inclusion-exclusion term proven in \Cref{lemma:inclusion-exclusion} to bound the error in trace distance incurred by these sparse approximations. In particular, when the depth exceeds a $\Theta(p^{-1}\log (p^{-1}n))$ critical threshold, the exponential decay of each term is so strong (inverse polynomial in $n$ and exponential in $|A|$) that it overwhelms the combinatorial number of terms that have weight greater than $k$. This results in our first theorem,
\begin{theorem} [Convergence to sparse approximations] \label{theorem:convergence1}
    Let $\rho$ be the output state of a noisy geometrically local quantum circuit with $d >d ^*$, where $d^* = \Theta(p^{-1} \log (p^{-1}n))$. Let $\rho_{sparse,k}$ be defined as in \Cref{def:sparse_approximation}. Then for any $k \in [n]$,
    \begin{align}
        \|\rho-\rho_{sparse,k}\|_1 \leq e^{-k\log n}
    \end{align}
\end{theorem}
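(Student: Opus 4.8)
The plan is to bound the trace norm of the truncated tail directly. Subtracting \Cref{def:sparse_approximation} from the inclusion-exclusion decomposition of $\rho$ gives $\rho - \rho_{sparse,k} = \sum_{A \subseteq J:\, |A| > k} \mc M_A(\rho_A) \otimes \sigma_{J \backslash A}$, so by the triangle inequality $\|\rho - \rho_{sparse,k}\|_1 \le \sum_{A:\, |A| > k} \|\mc M_A(\rho_A) \otimes \sigma_{J \backslash A}\|_1$. Since the trace norm is multiplicative over tensor products and $\sigma_{J\backslash A}$ is a normalized state, each summand equals $\|\mc M_A(\rho_A)\|_1$; moreover $\mc M_A(\rho_A) = \Tr_{\overline{A}}[\mc M_A(\rho)]$ and partial traces do not increase the trace norm, so $\|\mc M_A(\rho_A)\|_1 \le \|\mc M_A(\rho)\|_1$. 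Now \Cref{lemma:inclusion-exclusion} applies and yields $\|\mc M_A(\rho_A)\|_1 \le \beta^{|A|}$ with $\beta := 2\,[(1-p)^d (4d)^D]^{1/(2 \cdot 3^D)}$.

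Next I would carry out the counting step. For each $j$ there are at most $\binom{m}{j} \le m^j \le n^j$ subsets $A \subseteq J$ with $|A| = j$, using that the $m$ sublattices partition the $n$ qubits so $m \le n$. Hence $\|\rho - \rho_{sparse,k}\|_1 \le \sum_{j > k} \binom{m}{j}\beta^j \le \sum_{j > k} (m\beta)^j$, a geometric tail that is tamed once $m\beta$ is small. This is where the depth threshold enters: applying \Cref{fact:critical_depth} with $c := (2n^2)^{2 \cdot 3^D}$ — a fixed polynomial in $n$ since $D = O(1)$ — shows that whenever $d > d^* = \Theta(p^{-1}\log(p^{-1}c)) = \Theta(p^{-1}\log(p^{-1}n))$ we have $(1-p)^d(4d)^D < 1/c$, hence $\beta < 1/n^2$ and $m\beta \le n\beta < 1/n$. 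Then $\sum_{j>k}(m\beta)^j \le (m\beta)^{k+1}/(1 - m\beta) \le 2(m\beta)^{k+1} \le 2 n^{-(k+1)} \le n^{-k} = e^{-k\log n}$ for $n \ge 2$, which is exactly the claimed bound.

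The only genuine subtlety I anticipate is this bookkeeping that pins down the threshold: the inclusion-exclusion decomposition has a combinatorial number ($\sim \binom{m}{j}$) of weight-$j$ terms, and we need the per-term decay $\beta^{j}$ supplied by \Cref{lemma:inclusion-exclusion} to dominate this count with room to spare. Having $\beta$ merely below $1$ — which already holds at constant depth by \Cref{fact:critical_depth} — does not suffice; one needs $\beta$ inverse-polynomially small, which is precisely why $d^*$ must scale like $\log n$ in this theorem (the constant-depth regime is instead handled later by truncating on connected components of sublattices via a percolation argument rather than by a global weight cutoff). A minor point also worth stating carefully is the chain $\|\mc M_A(\rho_A) \otimes \sigma_{J\backslash A}\|_1 = \|\mc M_A(\rho_A)\|_1 \le \|\mc M_A(\rho)\|_1$, so that \Cref{lemma:inclusion-exclusion} can legitimately be invoked on the reduced-state terms that appear in the decomposition of $\rho$.
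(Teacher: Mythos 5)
Your proposal is correct and follows essentially the same route as the paper's proof: triangle inequality over the tail of the inclusion-exclusion decomposition, the per-term bound from \Cref{lemma:inclusion-exclusion}, a binomial count of weight-$j$ terms, and \Cref{fact:critical_depth} with a polynomial-in-$n$ choice of $c$ to make the per-sublattice factor inverse-polynomially small before summing the geometric tail. Your explicit justification that $\|\mc M_A(\rho_A)\otimes\sigma_{J\backslash A}\|_1 \le \|\mc M_A(\rho)\|_1$ (via contractivity under partial trace) is a small bookkeeping step the paper leaves implicit, but it is not a different argument.
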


\subsection{Percolated Approximations at $\Theta(1)$ depth}
Next, we consider the following `percolated' approximations,
\begin{definition}[Percolated Approximations] \label{def:percolated_approximation}
Let $\rho$ be the output state of a noisy geometrically local quantum circuit. For any parameter $\ell \in [n]$,
    \begin{align}
    \rho_{perc,\ell} &= \sum_{A \subseteq J:\text{ connected components in $A$ are $\leq \ell$}} \mc M_A(\rho_A) \otimes \sigma_{J \backslash A} 
\end{align}
where we define a `connnected component' as any set of sublattices in $A$ which are connected via a contiguous path of sublattices in $A$. 
\end{definition}

When the depth exceeds a critical $\Theta(p^{-1}\log (p^{-1}))$ threshold, the independent convergence on each sublattice is strong enough that the output state of any noisy geometrically local quantum circuit is well-approximated by percolated approximations. This is essentially due to a `site percolation' phase transition which is similar to the effect witnessed in circuits with non-universal gate sets \cite{rajakumar_watson_liu,nelson_rajakumar_naturally,oh_classical_simulability_linear_optical}, but here we have obtained our results using gate set agnostic techniques. The main insight is that the decay of each inclusion-exclusion term proven in \Cref{lemma:inclusion-exclusion}, which is exponentially small in $|A|$, can be compared to an independent constant probability of complete depolarization on each sublattice. So, when this constant is greater than a critical threshold, i.e. the depth is greater than a critical threshold, a phase transition occurs due to the fact that each sublattice only has a bounded number of neighbors. Making this connection to site percolation rigorous requires some careful grouping of terms in the inclusion-exclusion decomposition so that the norm does not blow up (since this stochastic depolarization of sublattices is not really what occurs, and more of an implicit proof strategy). In the end, we can prove the following,
\begin{theorem}[Convergence to percolated approximations] \label{theorem:convergence2}
    Let $\rho$ be the output state of a noisy geometrically local quantum circuit with $d >d ^*$, where $d^* = \Theta(p^{-1} \log p^{-1})$. Let $\rho_{perc,\ell}$ be defined as in \Cref{def:percolated_approximation}. Then,
    \begin{align}
        \|\rho-\rho_{perc,\ell}\|_1 \leq e^{-\ell}n
    \end{align}
\end{theorem}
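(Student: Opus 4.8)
The plan is to prove \Cref{theorem:convergence2} by bounding the trace norm of the truncated part,
\[
\rho - \rho_{perc,\ell} = \sum_{A \subseteq J:\ A \text{ contains a connected component of size} > \ell} \mc M_A(\rho_A) \otimes \sigma_{J \backslash A},
\]
using \Cref{lemma:inclusion-exclusion}, which gives $\|\mc M_A(\rho)\|_1 \le q^{|A|}$ with $q := 2[(1-p)^d(4d)^D]^{1/(2\cdot 3^D)}$. By \Cref{fact:critical_depth}, choosing the constant in $d^*$ large enough makes $(1-p)^d(4d)^D$ as small as we like, so $q$ can be driven below any fixed constant; the key is that for $d > d^*$ with an appropriate constant, $q$ is below the site-percolation criticality of the lattice $J$ (whose degree is bounded, since each sublattice has $O(3^D)$ neighbors). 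The naive approach — triangle inequality term by term — fails because the number of $A$ containing a large component is huge and $\sum_A q^{|A|}$ over all such $A$ does not obviously converge fast enough; this combinatorial blowup is the main obstacle, and it is exactly why the theorem's statement needs the careful ``grouping of terms'' alluded to in the surrounding text.

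The key idea to beat the blowup is to reorganize the sum by first fixing a connected component. For a fixed connected set $S \subseteq J$ with $|S| > \ell$, group together all $A$ whose component structure ``uses'' $S$; summing $\mc M_A(\rho_A)\otimes\sigma_{J\setminus A}$ over the remaining freedom (the part of $A$ outside $S$ and its boundary) should repackage into something whose trace norm is controlled by $\|\mc M_S(\rho_S)\|_1 \le q^{|S|}$ times a harmless factor, because $\mc M_{J\setminus S\setminus\partial S}$ summed against $\mc D$ on the complement telescopes back (this is the same inclusion-exclusion collapse used to derive $\rho = \sum_A \mc M_A(\rho_A)\otimes\sigma_{J\setminus A}$ in the first place, run in reverse on the ``free'' sublattices). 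Concretely, I would write $\rho - \rho_{perc,\ell}$ as a sum over minimal ``witnessing'' connected components $S$ of size exactly in some range $(\ell, 2\ell]$ or over self-avoiding connected animals containing a fixed seed, apply subadditivity only over this smaller index set, and use the standard fact that the number of connected subgraphs (lattice animals) of size $t$ containing a given vertex in a degree-$\Delta$ graph is at most $(e\Delta)^t$.

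Putting the pieces together: the error is bounded by a sum over vertices $v \in J$ (at most $n$, or really $m = O(n/d^D)$ of them) and over connected animals $S \ni v$ with $|S| = t > \ell$, of roughly $(e\Delta)^t q^t$. When $q < 1/(e\Delta)$ — guaranteed for $d$ past the $\Theta(p^{-1}\log p^{-1})$ threshold with the right constant — this geometric series in $t$ sums to $O\big((e\Delta q)^{\ell}\big)$ per vertex, and multiplying by the number of vertices $\le n$ yields $\|\rho - \rho_{perc,\ell}\|_1 \le n\cdot e^{-\Omega(\ell)}$, which after absorbing constants into the threshold is the claimed $e^{-\ell}n$. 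I would also need to double-check two technical points: first, that the regrouping step genuinely produces a valid triangle-inequality bound (i.e., that overlapping animals are not overcounted in a way that breaks the argument, handled by always charging to a canonical minimal/lexicographically-first large component), and second, that the ``harmless factor'' from summing out the free sublattices really is bounded by $1$ in trace norm (it is, since $\mc D_{J_i}$ and the partial sums of the inclusion-exclusion identity are trace-norm non-increasing on the relevant subspaces). The main obstacle remains making the regrouping/charging scheme precise so that the combinatorial factor is at most exponential in the component size rather than in $m$; once that is set up, \Cref{lemma:inclusion-exclusion} and \Cref{fact:critical_depth} do the rest.
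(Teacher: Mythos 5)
Your proposal follows essentially the same route as the paper's proof: fix the witnessing connected component $C$ of size $>\ell$, force its boundary $\partial C$ to be completely depolarized so that the inclusion-exclusion sum over the remaining ``free'' sublattices collapses, bound $\|\mc M_C(\rho_C)\|_1 \leq q^{|C|}$ via \Cref{lemma:inclusion-exclusion}, and beat the $(3^D)^{|C|}$-type lattice-animal count with the geometric decay guaranteed by \Cref{fact:critical_depth}, then multiply by the number of sublattices. The one step where your sketch is inaccurate is the claim that the collapsed ``harmless factor'' has trace norm at most $1$: in the paper the collapse produces the reduced matrix of the \emph{partially truncated} object $\rho^{(i-1)}$ rather than of $\rho$, whose trace norm is only $1+O(me^{-\ell})$, and the paper's sequential truncation over $i=1,\dots,m$ (which also implements your ``canonical component'' charging and prevents double-counting) handles this via the recursion $\|\rho^{(i)}-\rho\|_1 \leq e^{-\ell}+(1+e^{-\ell})\|\rho^{(i-1)}-\rho\|_1$, giving $(1+e^{-\ell})^m-1 \leq 2me^{-\ell}$ and hence the stated $ne^{-\ell}$ bound.
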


\subsection{Interpretation in the Pauli basis} \label{sec:pauli}
So far, we have defined the inclusion-exclusion decomposition (\Cref{def:inclusion-exclusion}) as a natural method to understand how the output state converges to the maximally mixed state on different sublattices (\Cref{def:coarse-grain}). It turns out that this can be equivalently interpreted as a decomposition into sets of Pauli operators with varying support on sublattices. We introduce notation for Pauli operators and their support below,
\begin{definition}[Pauli Operators]
    Let $\mathsf{P}_{n} = \{I,X,Y,Z\}^{\otimes n}$ denote the set of all Pauli operators on $n$ qubits, and for any $P \in \mathsf{P}_n$, let $\Supp(P) \subseteq  J$ denote the set of sublattices on which $P$ acts non-trivially. 
\end{definition}
Our inclusion-exclusion technique essentially provides an analytical tool to isolate out Pauli operators of large support, as described in the following lemma,
\begin{lemma}[Pauli Basis Interpretation of \Cref{def:inclusion-exclusion}]\label{lemma:pauli_basis}
For any $A \subseteq J$ and $P \in \mathsf{P}_n$
\begin{align}
        \mc M_A\otimes \mc D_{J\backslash A}(P) = \begin{cases}
            P, &\text{if }A = \Supp(P)\\
            0, &\text{if }A \neq \Supp(P)
        \end{cases}
    \end{align}
\end{lemma}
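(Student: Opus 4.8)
The plan is to reduce the claim to the action of a single complete depolarizing channel $\mc D_{J_i}$ on a Pauli operator, and then exploit that all of the maps appearing in $\mc M_A \otimes \mc D_{J \backslash A}$ act on pairwise disjoint registers of qubits.

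First I would record the following elementary computation. Write $P = \bigotimes_{j \in [n]} P_j$ with each $P_j \in \{I,X,Y,Z\}$. For a single sublattice $J_i$,
\[
\mc D_{J_i}(P) = \sigma_{J_i} \otimes \Tr_{J_i}(P) = \Big(\tfrac{\eye}{2^{|J_i|}}\Big)_{J_i} \otimes \Big(\prod_{j \in J_i}\Tr(P_j)\Big)\bigotimes_{j \in \overline{J_i}} P_j .
\]
Since $\Tr(I) = 2$ and $\Tr(X) = \Tr(Y) = \Tr(Z) = 0$, the scalar $\prod_{j \in J_i}\Tr(P_j)$ equals $2^{|J_i|}$ exactly when $P_j = I$ for every $j \in J_i$, i.e.\ when $J_i \notin \Supp(P)$, and equals $0$ otherwise. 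Hence $\mc D_{J_i}(P) = P$ if $J_i \notin \Supp(P)$ (the factor $2^{|J_i|}$ cancels the normalization of $\sigma_{J_i}$) and $\mc D_{J_i}(P) = 0$ if $J_i \in \Supp(P)$. Consequently, by \Cref{def:inclusion-exclusion}, $\mc M_{J_i}(P) = (\mc I - \mc D_{J_i})(P)$ equals $0$ when $J_i \notin \Supp(P)$ and equals $P$ when $J_i \in \Supp(P)$.

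Next I would use that $\mc M_A \otimes \mc D_{J \backslash A} = \big(\bigcirc_{J_i \in A}\mc M_{J_i}\big)\circ\big(\bigcirc_{J_j \in J\backslash A}\mc D_{J_j}\big)$ is a composition of single-sublattice maps acting on disjoint sets of qubits, so the factors commute and may be applied in any order. If $A = \Supp(P)$, then every $\mc M_{J_i}$ with $J_i \in A$ fixes $P$ and every $\mc D_{J_j}$ with $J_j \in J\backslash A = J\backslash\Supp(P)$ fixes $P$, so the full composition returns $P$. If $A \neq \Supp(P)$, there are two cases: either some $J_i \in A$ lies outside $\Supp(P)$ — apply $\mc M_{J_i}$ first, obtaining $\mc M_{J_i}(P) = 0$ — or else $A \subseteq \Supp(P)$ with $A \ne \Supp(P)$, so some $J_j \in \Supp(P)\setminus A \subseteq J\backslash A$ — apply $\mc D_{J_j}$ first, obtaining $\mc D_{J_j}(P) = 0$. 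In both cases, since all of the maps are linear and annihilate $0$, the composition returns $0$, which is exactly the claimed dichotomy.

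I do not expect a serious obstacle here: the statement is essentially definitional once the one-sublattice computation is in place. The only points requiring care are the normalization $\sigma_{J_i} = \eye/2^{|J_i|}$ against the factor $2^{|J_i|}$ produced by the partial trace of an identity block, and checking that the tensor-product/composition conventions line up so that $\mc M_A \otimes \mc D_{J\backslash A}$ genuinely is the composition of the disjoint single-sublattice maps. As a side benefit, the lemma exhibits $\mc M_A \otimes \mc D_{J\backslash A}$ as precisely the projector, in Pauli space, onto Paulis $P$ with $\Supp(P) = A$, which makes the inclusion-exclusion decomposition of $\rho$ transparent term by term.
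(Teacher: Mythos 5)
Your proposal is correct and follows essentially the same route as the paper's proof: compute the action of a single $\mc D_{J_i}$ (and hence $\mc M_{J_i}$) on a Pauli operator via the trace of identity versus traceless factors, then compose the disjoint single-sublattice maps and conclude by the case analysis on whether $A = \Supp(P)$. Your extra care about the $2^{|J_i|}$ normalization and the explicit two-subcase argument for $A \neq \Supp(P)$ are fine elaborations of the same argument, not a different approach.
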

Therefore, we can equivalently define our truncation schemes in the Pauli basis, as follows,
    \begin{align}
    \rho &= \sum_{P \in \mathsf{P}_{n}} \frac{\Tr(\rho P)}{2^n}P\\
    \rho_{sparse,k} &= \sum_{P \in \mathsf{P}_{n}: |\Supp(P)| \leq k} \frac{\Tr(\rho P)}{2^n}P\\
    \rho_{perc,\ell} &= \sum_{P \in \mathsf{P}_{n}: \text{ connected components in $\Supp(P)$ are $\leq \ell$}} \frac{\Tr(\rho P)}{2^n}P
\end{align}

Therefore, \Cref{theorem:convergence1} shows that when the depth exceeds a critical $\Theta(p^{-1} \log (p^{-1}n))$ threshold, $\rho$ can be $e^{-k\log n}$-approximated (in trace distance) by $\rho_{sparse,k}$, the density matrix produced by truncating \textit{all} Pauli operators supported on more than $k$ sublattices, for any $k$. Similarly, \Cref{theorem:convergence2} shows that when the depth exceeds a critical $\Theta(p^{-1} \log p^{-1})$ threshold, $\rho$ can be $ne^{-\ell}$-approximated (in trace distance) by $\rho_{perc,\ell}$, the density matrix produced by truncating \textit{all} Pauli operators supported on connected components of more than $\ell$ sublattices, for any $\ell$.

\section{Classical Simulability beyond a $\Theta(\log n)$ critical depth} \label{sec:results}\label{sec:sampling}
\Cref{theorem:convergence1} tells us that when the depth exceeds a critical $\Theta(p^{-1}\log n)$ threshold, $\rho_{sparse, \Theta(\log n)}$ is inverse polynomially close to $\rho$ in trace distance. Note $\rho_{sparse, \Theta(\log n)}$ is not a true density matrix since it is not guaranteed to be positive semidefinite. However, by a standard sampling-to-computing reduction \cite{bremner_achieving}, if we can compute marginals of $\rho_{sparse, \Theta(\log n)}$, then we will be able to approximately sample from $\rho$. It turns out we can do this efficiently (in quasipolynomial time) using a classical algorithm that simply enumerates all the terms in $\rho_{sparse, \Theta(\log n)}$. We state our main theorem, and explain the algorithm in its proof below.

\begin{theorem} \label{theorem:sampling}
    Let $P$ be the output distribution of any geometrically local noisy quantum circuit with $d > d^*$, where $d^* = \Theta(p^{-1}\log (p^{-1}n))$. There exists a classical algorithm that approximately samples from a distribution $Q$ such that $\|P-Q\|_1 \leq \epsilon$, and has runtime $(\frac{1}{\epsilon})^{O(d^D) /\log(n)}$
\end{theorem}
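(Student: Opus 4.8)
The plan is to combine the approximation guarantee of \Cref{theorem:convergence1} with a standard sampling-to-computing reduction \cite{bremner_achieving}, so the real content is exhibiting an efficient algorithm to compute marginals of $\rho_{sparse,k}$ for a suitable choice of $k = \Theta(\log n)$ (more precisely, of whatever size is needed to absorb the target error $\epsilon$). First I would fix $k$ so that $e^{-k\log n} \le \epsilon$, i.e. $k = O(1/\log n \cdot \log(1/\epsilon))$; since $d > d^*$, \Cref{theorem:convergence1} then guarantees $\|\rho - \rho_{sparse,k}\|_1 \le \epsilon$, and the same bound passes to the computational-basis distributions $P$ and the (quasi-)distribution $Q$ obtained by measuring $\rho_{sparse,k}$, since the trace distance does not increase under the dephasing/measurement channel.

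Next I would spell out the marginal computation. Using the Pauli-basis form from \Cref{sec:pauli}, $\rho_{sparse,k} = \sum_{P \in \mathsf{P}_n : |\Supp(P)| \le k} \frac{\Tr(\rho P)}{2^n} P$, where $\Supp(P)$ counts \emph{sublattices}, each of size $(2d)^D$. The number of such Pauli operators is at most $\binom{m}{k} \cdot (4^{(2d)^D})^k$ where $m$ is the number of sublattices; since $m \le n$ and $k = O(\log(1/\epsilon)/\log n)$, this is $(n \cdot 4^{(2d)^D})^{O(\log(1/\epsilon)/\log n)} = (1/\epsilon)^{O(d^D)/\log n}$, matching the claimed runtime. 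For each surviving Pauli operator $P$, I need its coefficient $\Tr(\rho P)/2^n$; because $P$ is supported on a known union of $\le k$ sublattices, which (together with their reverse lightcones) touches at most $\poly(d) \cdot k = \poly(d,\log(1/\epsilon))$ qubits, $\Tr(\rho P) = \Tr(\rho_{L} P)$ where $L$ is the reverse lightcone of $\Supp(P)$, and $\rho_L$ is the output of a noisy circuit on $O(d^D k)$ qubits, which can be computed by brute-force density-matrix simulation in time $2^{O(d^D k)} = (1/\epsilon)^{O(d^D)/\log n}$. Having all coefficients in hand, any marginal $\Tr(\rho_{sparse,k}\, (\ketbra{b}_B \otimes \eye_{\bar B}))$ on a bitstring $b$ over a qubit set $B$ is a sum over the (few) surviving Pauli operators $P$ whose support lies inside $B$, of $\frac{\Tr(\rho P)}{2^{|B|}}\prod_{j\in B}(\pm 1)$-type factors coming from $\langle b_j | P_j | b_j\rangle$; only diagonal Paulis ($I,Z$) on $B$ contribute, so each marginal is computable in time linear in the number of surviving terms.

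Finally I would invoke the sampling-to-computing reduction: to draw a sample from $Q$, fix qubits one at a time, at step $i$ computing the conditional probability $\Pr[x_i = 1 \mid x_1\ldots x_{i-1}]$ from the ratio of two marginals of $\rho_{sparse,k}$, each computed as above, and sampling $x_i$ accordingly. (One must handle the fact that $\rho_{sparse,k}$ is not positive semidefinite, so some conditional "probabilities'' may fall outside $[0,1]$; the standard fix is to clip them to $[0,1]$, which only changes the output distribution by $O(\epsilon)$ in total variation by the same argument as in \cite{bremner_achieving}, and can be folded into the error budget by taking $k$ slightly larger.) The total runtime is dominated by computing the $\le (1/\epsilon)^{O(d^D)/\log n}$ Pauli coefficients and performing $n$ rounds of marginal evaluations, giving runtime $(1/\epsilon)^{O(d^D)/\log n}$ as claimed.

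The main obstacle I anticipate is bookkeeping around the non-positivity of $\rho_{sparse,k}$ in the sampling-to-computing step --- making sure the clipping argument is valid and that the accumulated error over $n$ conditioning rounds stays $O(\epsilon)$ rather than $O(n\epsilon)$ --- together with carefully verifying that the choice $k = \Theta(\log(1/\epsilon)/\log n)$ simultaneously yields error $\le \epsilon$ (via \Cref{theorem:convergence1}) and keeps the enumeration and the per-term lightcone simulation within the stated quasipolynomial bound; the locality-based truncation of $\Tr(\rho P)$ to the lightcone of $\Supp(P)$ is the one genuinely circuit-specific ingredient and should be stated as its own small lemma.
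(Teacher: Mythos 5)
Your proposal is correct and follows essentially the same route as the paper's proof: fix $k = \Theta(\log(1/\epsilon)/\log n)$ so that \Cref{theorem:convergence1} gives error $\epsilon$, reduce sampling to marginal computation via \cite{bremner_achieving}, and evaluate each marginal by enumerating the quasipolynomially many surviving terms with brute-force simulation restricted to reverse lightcones. The only cosmetic difference is that you enumerate individual Pauli operators with support on at most $k$ sublattices (the \Cref{sec:pauli} view) rather than the inclusion-exclusion terms $\mc M_A(\rho_A)\otimes\sigma_{J\backslash A}$ indexed by subsets $A$ with $|A|\le k$, which yields the same $(1/\epsilon)^{O(d^D)/\log n}$ runtime; your explicit treatment of the non-positivity/clipping issue is a point the paper leaves implicit in the citation.
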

\begin{proof}
    By \Cref{theorem:convergence1}, it suffices to set $k = \log(\frac{1}{\epsilon})/\log(n)$ to obtain the desired approximation error.
    Now, by a standard sampling-to-computing reduction \cite{bremner_achieving}, we simply need to be able to compute marginals on $\rho_{sparse,k}$ efficiently, because we have established that $\rho_{sparse,k}$ is close to a true output distribution $\rho$. By examination of \Cref{def:sparse_approximation}, it is clear that there are $\sum_{k' \leq k} {m \choose k'}\leq k(2en/k)^{k}$ terms in the decomposition of $\rho_{sparse,k}$. Due to linearity of trace, we can compute any marginal on $\rho_{sparse,k}$ by computing the marginal on each of these terms, and taking their sum weighted by respective coefficients of $+1$ or $-1$ according to \Cref{def:sparse_approximation}. Note that all terms are of the form $\sigma_{J \backslash A} \otimes \mc M_A(\rho_A)$, where $|A| \leq  k$. There are at most $k(4d)^D$ qubits in the reverse lightcone of $A$, so we can brute force compute the marginals on any one of these terms in time $e^{O(kd^D)}$.  We need to compute $O(n)$ marginals in the bit-by-bit sampling algorithm of \cite{bremner_achieving}. Therefore, the runtime of this algorithm is
    \begin{align}
        O(nk(2en/k)^{k})e^{O(kd^D)}\leq  (\frac{1}{\epsilon})^{O(d^D) /\log(n)}
    \end{align}
\end{proof}

\section{Classical Simulatability beyond a $\Theta(1)$ depth} \label{sec:conjecture}
Next due to \Cref{theorem:convergence2}, when the depth exceeds a critical $\Theta(p^{-1}\log p^{-1})$ threshold, it suffices to be able to efficiently compute marginals on $\rho_{perc, \Theta(\log n)}$ (by the same argument of the previous section). However, computing marginals on this matrix is not so simple since it contains an exponential number of terms. Nevertheless, we conjecture that it is possible, and that this critical depth of $\Theta(p^{-1}\log p^{-1})$ corresponds to a computational complexity phase transition, which we formalize below,
\begin{conjecture}\label{conj:formal}
    Let $P$ be the output distribution of any geometrically local noisy quantum circuit with $d > d^*$, for some fixed  $d^* = \Theta(p^{-1}\log (p^{-1}))$. There exists a classical algorithm that approximately samples from a distribution $Q$ such that $\|P-Q\|_1 \leq \epsilon$, and has runtime $e^{\poly(\log(n),\log(1/\epsilon),d)}$.
\end{conjecture}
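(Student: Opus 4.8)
The plan is to avoid computing any global marginal of $\rho_{perc,\ell}$ and instead run a local ``patching'' sampler in the spirit of \cite{napp,brandao_kastoryano_finite_correlation_length,watts_gosset_liu_soleimanifar}, whose accuracy we reduce to a single structural property of the output distribution. Fix $\ell = \Theta(\log(n/\epsilon))$ (so that \Cref{theorem:convergence2} gives $\|\rho - \rho_{perc,\ell}\|_1 \le \epsilon$) and a buffer width $w = \poly(\log(n/\epsilon))$ sublattices, chosen below. Tile the lattice by $O(n)$ regions $R_1, R_2, \dots$, each of side length $O(w)$ sublattices, processed in a fixed sweep order; when step $t$ is reached, let $T_t$ be the already-sampled sublattices lying within $w$ sublattices of $R_t$. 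The sampler draws the bits on $R_t$ from the conditional distribution $P_{R_t \mid T_t = x_{T_t}}$, where $P$ is the (dephased) output distribution and $x$ the bits sampled so far, and outputs the concatenation. The only object it needs per step is the reduced state $\rho_{R_t \cup T_t}$, which by geometric locality depends only on the $\poly(w,d)$ qubits in the reverse lightcone of $R_t \cup T_t$; simulating the noisy circuit restricted to that lightcone computes it in time $e^{O(\poly(w,d))} = e^{\poly(\log n, \log(1/\epsilon), d)}$, and there are only $O(n)$ steps, giving the claimed runtime. Since $\rho$ is a genuine state, each $\rho_{R_t \cup T_t}$ is a genuine reduced density matrix and $P_{R_t \mid T_t}$ a genuine conditional distribution, so no clipping is needed.

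Correctness reduces to showing that $P$ is an \emph{approximate Markov field}: for every region $S$ and every buffer $T$ of width $w$ separating $S$ from $\overline{ST}$,
\[
\mathbb{E}_{x \sim P}\,\big\| P_{S \mid T = x_T,\ \overline{ST} = x_{\overline{ST}}} - P_{S \mid T = x_T} \big\|_1 \ \le\ \poly(n)\, e^{-\Omega(w/\ell)}.
\]
Given this estimate, a routine telescoping (hybrid) argument over the $O(n)$ patch insertions — the error introduced at step $t$ is bounded by an instance of the displayed quantity with $S = R_t$ and $T = T_t$ — shows that the sampler's output $Q$ obeys $\|P - Q\|_1 \le O(n)\cdot \poly(n)\, e^{-\Omega(w/\ell)} \le \epsilon$ once $w = \Theta(\ell \log(n/\epsilon)) = \poly(\log(n/\epsilon))$; this is the standard ``decay of correlations implies efficient sequential sampling'' mechanism used in \cite{napp}.

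The crux — and the reason we can only conjecture the theorem — is proving the displayed approximate Markov property, which we would transfer from the percolated approximation. By \Cref{theorem:convergence2}, $P$ is $n e^{-\ell}$-close in total variation to $P_{perc,\ell}$, the diagonal of $\rho_{perc,\ell}$, whose Pauli/character expansion is supported only on sublattice-sets all of whose connected components have size $\le \ell$. For a buffer $T$ of width $\gg \ell$, no surviving term can have a single connected component reaching from $S$ to $\overline{ST}$, so every term factorizes across $T$; the obstructions to concluding an approximate Markov property are then (i) terms whose components sit \emph{inside} $T$, and (ii) the fact that $P_{perc,\ell}$ is a \emph{sum} of product terms, which in general need not be Markov. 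I expect the right handle on both is a layered argument: split $T$ into $\Theta(w/\ell)$ shells of thickness $\Theta(\ell)$ and show, using \Cref{lemma:inclusion-exclusion} together with \Cref{corollary:exponential_decay}, that conditioning on each successive shell contracts the residual correlation between $S$ and $\overline{ST}$ by a constant factor, so that after $\Theta(w/\ell)$ shells it is $e^{-\Omega(w/\ell)}$ — with a continuity bound (Alicki--Fannes--Audenaert, applied to the closest genuine state to $\rho_{perc,\ell}$, or a direct total-variation estimate) used to pass between $\rho$, $\rho_{perc,\ell}$, and their diagonals. Turning this outline into a rigorous proof — in particular, controlling the non-positivity of $\rho_{perc,\ell}$ and the sum-of-products subtlety without incurring an exponential loss in the bound — is the main obstacle, and is precisely what we leave open.
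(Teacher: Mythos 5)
The statement you set out to prove is only a \emph{conjecture} in the paper, and the paper's supporting material takes essentially the route you propose: a sequential patching sampler that conditions each region on previously sampled sublattices within a logarithmic-width buffer (\Cref{sec:algorithm}), whose efficiency is immediate from lightcone-restricted brute-force simulation and whose accuracy is reduced (\Cref{theorem:reduction}) to an approximate Markov property of the output distribution (\Cref{conj:markov}), with the percolated approximation of \Cref{theorem:convergence2} offered only as heuristic evidence because a quasi-probability mixture of factorized inclusion-exclusion terms need not itself be Markov --- exactly the obstruction you flag. So your proposal matches the paper's approach (up to minor differences in patch/buffer geometry and in phrasing the Markov condition as an averaged conditional-independence bound rather than the paper's $\|P_{ABC}-P_{AB}P_{C|B}\|_1$ bound), and the gap you honestly leave open is the same gap the paper leaves open; neither yours nor the paper's argument constitutes a proof of the statement.
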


We offer three pieces of evidence for this conjecture. First, in \Cref{sec:percolation}, we describe how the same phase transition has been observed in certain non-universal circuits, in particular showing that $\rho_{perc, \Theta(\log n)}$ is quite similar to the approximate density matrix that these results classically sample from. We then point out the reason why the algorithmic techniques for non-universal gates break when applied to more general settings. In \Cref{sec:algorithm}, we propose a different candidate sampling algorithm which generally exploits the loss of long-range entanglement in quantum circuits, rather than explicitly taking advantage of percolation. We prove its efficiency, leaving only its accuracy open. Finally, in \Cref{sec:approx-markov}, we pinpoint a mathematical property of the output distribution  -- approximate markovianity -- which if true, would imply that the algorithm of \Cref{sec:algorithm} is accurate. This reduces the resolution of \Cref{conj:formal} to proving this mathematical statement, and we briefly highlight why the percolation phenomenon gives some evidence for this property.

\subsection{Similarity to Percolation in Non-Universal Circuits} \label{sec:percolation}
In noisy quantum circuits with certain non-universal gate sets \cite{rajakumar_watson_liu,nelson_rajakumar_naturally,oh_classical_simulability_linear_optical}, it has been shown that the output state can be approximated by a mixture of density matrices which are each a tensor product of $O(\log n)$-sized connected components. We have proven a similar result for more general circuits,

\begin{corollary} [Follows from \Cref{theorem:convergence2}] \label{corollary:density_matrices}
Let $\rho$ be the output state of any geometrically local noisy quantum circuit with $d > d^*$, where $d^* = \Theta(p^{-1}\log (p^{-1}))$. Let $\rho_{perc,\Theta(\log n)}$ be defined as in \Cref{def:percolated_approximation}. Then, for some quasi-probability distribution $Q$ over subsets $B\subseteq J$, where $Q(B)= 0$ if $B$ contains a connected component of size greater than $\Theta(\log n)$, $\rho_{perc,\Theta(\log n)}$ has the following form
\begin{align}
    \rho_{perc,\Theta(\log n)} = \sum_{B \subseteq J} Q(B) \rho_B \otimes \sigma_{J \backslash B}
\end{align}
Note this enforces $\rho_B$ to be a tensor product of connected components of size $\leq \Theta(\log n)$.
\end{corollary}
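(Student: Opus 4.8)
The plan is to take $\rho_{perc,\Theta(\log n)}$ exactly as defined in \Cref{def:percolated_approximation} and re-expand it in terms of the building blocks $\rho_B\otimes\sigma_{J\backslash B}$, $B\subseteq J$. Recall the inclusion-exclusion identity $\mc M_A(\varrho)=\sum_{C\subseteq A}(-1)^{|C|}\,\sigma_C\otimes\varrho_{J\backslash C}$ established right after \Cref{def:inclusion-exclusion}. Applying it with $\varrho=\rho_A$, so that $(\rho_A)_{J\backslash C}=\rho_{A\backslash C}$, and substituting $B:=A\backslash C$ (hence $\sigma_C\otimes\sigma_{J\backslash A}=\sigma_{J\backslash B}$ and $|C|=|A|-|B|$), each summand of the percolated approximation becomes $\mc M_A(\rho_A)\otimes\sigma_{J\backslash A}=\sum_{B\subseteq A}(-1)^{|A|-|B|}\,\rho_B\otimes\sigma_{J\backslash B}$. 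Interchanging the (finite) order of summation gives
\begin{align}
\rho_{perc,\Theta(\log n)}\;=\;\sum_{B\subseteq J}Q(B)\,\rho_B\otimes\sigma_{J\backslash B},\qquad Q(B)\;:=\;\sum_{A\,:\,B\subseteq A\subseteq J,\ \mathrm{mcc}(A)\,\le\,\Theta(\log n)}(-1)^{|A|-|B|},
\end{align}
where $\mathrm{mcc}(A)$ denotes the size of the largest connected component of $A$. This already exhibits the claimed form, so it remains to verify the stated properties of $Q$ and of those $\rho_B$ with $Q(B)\neq 0$.

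Next I would check the support and normalization of $Q$. If $B$ contains a connected component of size exceeding $\Theta(\log n)$, then so does every $A\supseteq B$, so the index set of $Q(B)$ is empty and $Q(B)=0$, which is exactly the support condition in the statement. For normalization, observe that each $\mc M_{J_i}=\mc I-\mc D_{J_i}$ is trace-annihilating, so every summand of \Cref{def:percolated_approximation} with $A\neq\emptyset$ has trace $0$, while the $A=\emptyset$ term equals $\sigma_J$ and has trace $1$; hence $\Tr\rho_{perc,\Theta(\log n)}=1$. Since $\Tr(\rho_B\otimes\sigma_{J\backslash B})=1$ for every $B$, taking the trace of the displayed identity gives $\sum_{B}Q(B)=1$, so $Q$ is a (signed) quasi-probability distribution.

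It then remains to establish the structure of $\rho_B$ when $Q(B)\neq 0$. The size bound is immediate: $Q(B)\neq 0$ forces the existence of an admissible $A\supseteq B$, and each connected component of $B$ sits inside one of $A$, hence has size $\le\Theta(\log n)$. For the tensor-product structure I would appeal to the geometry of the coarse-graining in \Cref{def:coarse-grain}: two sublattices lying in different connected components of $B$ are non-adjacent, hence separated on the underlying lattice by at least one full sublattice, i.e.\ by $L^\infty$-distance $\ge 2d$; since the reverse lightcone of a depth-$d$ circuit spreads by only $d$ on each side, the reverse lightcones of distinct connected components $C,C'$ of $B$ are disjoint (the $2d$ qubit-layers between them split into the $d$ nearest $C$ and the $d$ nearest $C'$). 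Consequently the unitary part of the circuit generating $\rho$ on $\bigcup_C C$ factorizes over the components, giving $\rho_B=\bigotimes_C\rho_C$ with each $\rho_C$ supported on a connected component of size $\le\Theta(\log n)$, as claimed. Closeness of $\rho_{perc,\Theta(\log n)}$ to $\rho$ is not needed for this corollary but is supplied by \Cref{theorem:convergence2}.

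I expect the last step to be the only one requiring genuine care: one must confirm that the side length $2d$ chosen in \Cref{def:coarse-grain} is exactly what makes the reverse lightcones of non-adjacent sublattices — and therefore of different connected components of any $B$ — disjoint, including the boundary case where the two lightcones abut without sharing a qubit, and that ``adjacency'' of sublattices is understood in the $L^\infty$ sense so that diagonally-touching blobs are treated as connected. The remaining steps are a finite M\"obius-type re-indexing together with a one-line trace computation.
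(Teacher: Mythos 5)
Your proposal is correct and matches the argument the paper intends (the corollary is stated without a separate proof): you re-expand each term $\mc M_A(\rho_A)\otimes\sigma_{J\backslash A}$ of the percolated approximation via the inclusion-exclusion identity to collect the quasi-probability weights $Q(B)$, verify the support condition and normalization, and obtain the tensor-product structure from the same geometric fact the paper already invokes in the proof of \Cref{lemma:inclusion-exclusion}, namely that the $2d$-wide sublattices make reverse lightcones of non-adjacent sublattices (hence of distinct connected components) disjoint. Your observation that adjacency must be read in the $L^\infty$ (Moore) sense is consistent with the paper's use of $<3^D$ neighbors per sublattice.
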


A key difference is that these existing results manage to completely avoid the difficulty of computing marginals on a sum of exponentially many terms, by using a different classical simulation strategy. In particular, the approximate density matrix can be written as a \textit{probability distribution} over  depolarizing events in the bulk of the circuit, rather than a quasi-probability distribution over depolarizing events at the output of the circuit. Since this is a probability distribution, one can recreate the full density matrix in expectation by sampling any particular density matrix and simulating it. A key ingredient of these proofs is that in non-universal circuits, one can analytically guarantee that depolarizing events occurring in the bulk of the circuit remain \cite{rajakumar_watson_liu,oh_classical_simulability_linear_optical} or compose into \cite{nelson_rajakumar_naturally} depolarizing events after propagation to the end of the circuit. So, one can prove that qubits/sublattices of the output state are independently completely depolarized with \textit{arbitrarily high constant probability}, when the circuit depth exceeds the critical threshold. In more general circuits, depolarizing events in the bulk of the circuit can spread in non-trivial ways, e.g. turn into non-Pauli errors, and so it may be that many qubits/sublattices in the output state end up partially depolarized, but none of them are completely depolarized. This is why we can only show the weaker property that qubits/sublattices are independently within an \textit{arbitrarily low constant distance} of the maximally mixed state, when the circuit depth exceeds the critical threshold. In spirit, this is quite a similar statement; however, it means that we need to use a different classical simulation strategy.

\subsection{A Candidate Classical Algorithm}
\label{sec:algorithm}
Here, we describe a classical sampling algorithm which may resolve \Cref{conj:formal}. Let $P$ denote the output distribution of any geometrically local noisy quantum circuit. First, we note that for any $A, B \subseteq J$, we can classically compute and enumerate every bitstring's output probability in $P_{B}$ and $P_{AB}$ in runtime $e^{O(|L(AB)|)} \leq e^{O(|AB|d^D)}$ \textit{exactly} \footnote{up to machine precision}. Therefore, given some fixed output bitstring $b$ on $B$, we can sample a bitstring $a$ from $P_{A|B=b}$ in the same runtime asymptotically. The algorithm proceeds as follows, for some parameter $\ell \in [n]$. We will start at $i = 1$ and go until $i=m$, at each step sampling some bitstring $j_i$ to assign to $J_i$. For step $i$, we will use the shorthand $\partial^{\ell} J_{< i} = \partial^{\ell} J_{i} \cap J_{< i}$ which denotes the set of sublattices that are at most $\ell$ sublattices away from $J_i$ \textit{and} have been sampled before. Suppose the bitstring assigned to these sublattices is denoted by $\partial^{\ell}j_{< i}$. At step $i$, we sample $j_i$ from the distribution $P_{J_i | \partial^{\ell} J_{< i} = \partial^{\ell} j_{< i}}$. Note, since $|J_i \partial^{\ell} J_{< i}| =O(\ell^D)$, this step requires $e^{O(\ell^D d^D)}$ time, and the overall runtime is $me^{O(\ell^D d^D)}$

It is clear that if $\ell = m$, then this algorithm samples from the true distribution (in exponential time). When $\ell = O(\log n)$, this algorithm runs in the quasi-polynomial time conjectured in \Cref{conj:formal}. However, to guarantee that it approximates $\rho$ at $\ell = \Theta(\log n)$, we need to prove an \textit{approximate markov} property in the output distribition, as described in the next section. We also remark that this algorithm is quite similar to patching-type algorithms which have been proposed before along with corresponding conjectures on approximate markovianity, in various other settings \cite{napp, watts_gosset_liu_soleimanifar,brandao_kastoryano_finite_correlation_length,yang_soleimanifar_bergamaschi_preskill,SuunSoumik,FrankSuun}. One unique detail of our algorithm is, rather than coarse-graining into sublattices of width $O(\ell)$ and conditioning only on nearest-neighbor sublattices, we coarse-grain into sublattices of width $2d$, and condition on all sublattices within a distance of $\ell$. 

\subsection{Reduction to Approximate Markovianity} \label{sec:approx-markov}
First, we introduce standard notation to describe output distributions which are Markov chains from $A \to B \to C$, where $A,B,C$ are sets of qubits. 
\begin{definition} [Markov Distributions]
    For output distributions $P,Q$ and sets of qubits $A,B,C \subseteq [n]$, we define $P_{AB}Q_{C|B}$ to be the probability distribution such that for any output bitstring $b$,
    \begin{align}
        P_{AB}Q_{C|B}(b_{ABC}) := P_{AB}(b_{AB})Q_{C|B = b_B}(b_{C})
    \end{align}
\end{definition}

Now, we make the following conjecture,
\begin{conjecture} [Approximate Markovianity of the Output Distribution]\label{conj:markov}
    Let $P$ be the output distribution of any geometrically local noisy quantum circuit with $d > d^*$, where $d^* = \Theta(p^{-1}\log (p^{-1}))$. For every sublattice $A \in J$, and $B = \partial^{\ell} A$, and $C = J \backslash (A\partial^\ell A)$. 
    \begin{align}
        \|P_{ABC} - P_{AB}P_{C|B}\|_1 \leq e^{-\Omega(\ell)} 
    \end{align}
\end{conjecture}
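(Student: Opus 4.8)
The plan is to reduce \Cref{conj:markov} to an exponential decay of conditional mutual information and then to try to establish that decay from the percolation structure of \Cref{theorem:convergence2}. For any genuine probability distribution $P'$, the triangle inequality gives
\begin{align}
    \|P_{ABC} - P_{AB}P_{C|B}\|_1 \;\le\; 4\,\|P - P'\|_1 \;+\; \|P'_{ABC} - P'_{AB}P'_{C|B}\|_1,
\end{align}
and the last term is at most $\sqrt{2\ln 2}\,\sqrt{I(A:C|B)_{P'}}$ by Pinsker's inequality applied conditionally on $B$ together with Jensen's inequality. So it suffices to exhibit a genuine distribution $P'$ with $\|P-P'\|_1\le e^{-\Omega(\ell)}$ and $I(A:C|B)_{P'}\le e^{-\Omega(\ell)}$; for $P'$ I would take a genuine distribution $\ell_1$-close to the measurement distribution of $\rho_{perc,c\ell}$ for a small constant $c$, which is within $n\,e^{-\Omega(\ell)}$ of $P$ by \Cref{theorem:convergence2}.

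It then remains to bound the CMI of this percolated approximation. In $\rho_{perc,c\ell}$ no connected component of a Pauli support reaches from the single sublattice $A$ to the far region $C$, since they are separated by the $\ell$-thick shell $B$ while every surviving component occupies at most $c\ell<\ell$ sublattices; by \Cref{corollary:density_matrices} this forces $\rho_{perc,c\ell}$ to decompose, across the cut through the middle shell of $B$, as a quasi-probability mixture $\sum_{S\subseteq J}Q(S)\,\tau^{\mathrm L}_S\otimes\tau^{\mathrm R}_S$ of states that factorize across the cut, with only $B$-qubits straddling it. From here I would argue that (i) $Q$ is $e^{-\Omega(\ell)}$-close in total variation to a genuine probability distribution, and (ii) conditioning such a cut-factorized mixture on the entire straddling region $B$ leaves only $e^{-\Omega(\ell)}$ residual correlation between $A$ and $C$. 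Both should reduce to the same estimate: the obstructing terms are indexed by live-sets $S$ containing a size-$\Omega(\ell)$ connected component near the cut, and by \Cref{lemma:inclusion-exclusion} their total weight is controlled by $\sum_{s\ge\Omega(\ell)}(O(1)\,\eta)^{s}$ with $\eta:=2[(1-p)^d(4d)^D]^{1/(2\cdot 3^D)}$; once $d>d^*=\Theta(p^{-1}\log p^{-1})$ the constant $\eta$ lies below the site-percolation threshold of the bounded-degree sublattice adjacency graph, so this series converges and is $e^{-\Omega(\ell)}$. Since the bridging components that matter for the $A$--$C$ cut must touch the $\Theta(\ell^{D-1})$-sublattice middle shell surrounding $A$, the union bound should cost only $\poly(\ell)$ rather than $n$, which is why one can hope for the clean $e^{-\Omega(\ell)}$ of \Cref{conj:markov} rather than $n\,e^{-\Omega(\ell)}$.

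I expect step (ii) to be the main obstacle, because conditioning can \emph{create} correlations. Even a genuine classical mixture $\sum_S q(S)\,\tau^{\mathrm L}_S\otimes\tau^{\mathrm R}_S$, once its joint straddling register is fixed, generically displays $A$--$C$ correlation, since the posterior over the hidden live-set $S$ need not factorize across the cut; equivalently, the inclusion--exclusion map $\mc M_{A'}$ does not tensor-factorize over the components of $A'$ when the reduced state $\rho_{A'}$ is itself correlated. So the heart of \Cref{conj:markov} is precisely the assertion that a noisy geometrically local circuit output state is an \emph{approximate quantum Markov chain}, a statement of the same flavour as --- and apparently no easier than --- the still-open conjecture that Gibbs states of local Hamiltonians with exponential clustering of correlations obey exponential conditional-mutual-information decay. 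The relative-entropy-decay machinery of \Cref{lemma:exponential_decay} and \Cref{corollary:exponential_decay} does not by itself reach it: those bounds control only the \emph{unconditioned} mutual information of small regions, and on a cut of thickness $\ell$ the lightcone overhead $|L(\cdot)|=\Theta((\ell d)^D)$ \emph{grows}, so for fixed $d$ the resulting bound does not decay in $\ell$ at all. Closing the gap would require a genuinely new ingredient --- for instance a shell-by-shell transfer-matrix argument extracting a multiplicative contraction of the residual $A$-correlation from each single-shell noise channel $\mc N_{b_j}$, or a conditional refinement of the percolation estimate tracking how conditioning \emph{redistributes} Pauli weight across the cut rather than creating it.
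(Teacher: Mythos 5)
This statement is not proven in the paper at all: it is \Cref{conj:markov}, deliberately left open, and the paper's own discussion in \Cref{sec:approx-markov} offers only the same kind of evidence you assemble (percolated approximations with a depolarized boundary shell between $A$ and $C$) while explicitly naming the obstruction --- each instance in a decomposition such as \Cref{corollary:density_matrices} is Markov across the cut, but a \emph{quasi-probability} mixture of such instances need not be. Your proposal does not close this gap either; it reduces the conjecture to your claims (i) and (ii), and (ii) --- that conditioning the cut-factorized mixture on the full shell $B$ leaves only $e^{-\Omega(\ell)}$ residual $A$--$C$ correlation --- is essentially a restatement of the conjecture itself, as you acknowledge. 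So the proposal should be read as a reduction plus an honest diagnosis of the difficulty (which matches the paper's diagnosis), not as a proof.

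Beyond (ii), step (i) is also a genuine gap, not a routine estimate. \Cref{theorem:convergence2} and \Cref{lemma:inclusion-exclusion} control only the \emph{tail} of the inclusion-exclusion decomposition, i.e.\ the total trace norm of terms whose support contains a large connected component. They say nothing about the negative mass of the surviving quasi-probability $Q(B)$ in \Cref{corollary:density_matrices}: the signs $(-1)^{|B|}$ sit on the \emph{small}-support terms, whose individual norms are only bounded by $\eta^{|A|}$ with $\eta$ a constant, so the total magnitude $\sum_A \|\mc M_A(\rho_A)\|_1$ can grow like $(1+\eta)^{m}$ and is in no way $e^{-\Omega(\ell)}$-small. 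Hence "$Q$ is $e^{-\Omega(\ell)}$-close in total variation to a genuine distribution" does not follow from the percolation estimate, and without it the Pinsker/CMI route has no genuine distribution $P'$ to apply it to. Your closing observation that the unconditioned relative-entropy machinery (\Cref{lemma:exponential_decay}, \Cref{corollary:exponential_decay}) cannot reach conditional statements because the lightcone overhead grows with $\ell$ is correct and is precisely why a new ingredient is needed; as the paper notes, concurrent work proves approximate Markovianity only in the high-noise regime, and the constant-noise case targeted by \Cref{conj:markov} remains open.
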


Assuming this conjecture, the algorithm described in \Cref{sec:algorithm} is accurate. Formally,
\begin{theorem} [Approximate Markovianity implies Classical Simulability] \label{theorem:reduction}
    \Cref{conj:markov} $\implies $ \Cref{conj:formal} 
\end{theorem}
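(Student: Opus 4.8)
The plan is a hybrid (telescoping) argument over the $m$ sampling steps of the algorithm of \Cref{sec:algorithm}, reducing the claim to a single per-step estimate which is then supplied by \Cref{conj:markov}. By construction the algorithm's output distribution factorizes as $Q(j)=\prod_{i=1}^{m}P_{J_i|\partial^\ell J_{<i}=\partial^\ell j_{<i}}(j_i)$, while the true distribution obeys the chain rule $P(j)=\prod_{i=1}^{m}P_{J_i|J_{<i}=j_{<i}}(j_i)$. The standard bound on the distance between two such product distributions then gives
\begin{align}
    \|P-Q\|_1\;\le\;\sum_{i=1}^{m}\;\mathbb{E}_{j_{<i}\sim P_{J_{<i}}}\Big[\,\big\|\,P_{J_i|J_{<i}=j_{<i}}\;-\;P_{J_i|\partial^\ell J_{<i}=\partial^\ell j_{<i}}\,\big\|_1\,\Big].
\end{align}
So it suffices to show that every summand is at most $e^{-\Omega(\ell)}$; summing the $m=\poly(n)$ terms and taking $\ell=\Theta(\log(n/\epsilon))$ then yields $\|P-Q\|_1\le\epsilon$, while the runtime $m\,e^{O(\ell^D d^D)}$ already established in \Cref{sec:algorithm} becomes $e^{\poly(\log n,\,\log(1/\epsilon),\,d)}$, as demanded by \Cref{conj:formal}.

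For the per-step estimate, fix $i$ and split the already-sampled sublattices as $J_{<i}=N_i\sqcup F_i$, where $N_i:=\partial^\ell J_i\cap J_{<i}$ is the sampled part of the $\ell$-neighborhood of $J_i$ and $F_i:=J_{<i}\setminus\partial^\ell J_i$ is the sampled part lying more than $\ell$ sublattices away; note $F_i\subseteq J\setminus(J_i\,\partial^\ell J_i)$. A short manipulation rewrites the $i$-th summand as the $P_{N_i}$-average of the $\|\cdot\|_1$-distance between $P_{J_iF_i|N_i}$ and $P_{J_i|N_i}P_{F_i|N_i}$ — the conditional dependence of $J_i$ on the distant sampled region $F_i$ given the near sampled region $N_i$. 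I would control this by inserting the full $\ell$-annulus $\partial^\ell J_i$ as an intermediate conditioning set and invoking \Cref{conj:markov} with $A=J_i$, which says precisely that, after conditioning on all of $\partial^\ell J_i$, the cell $J_i$ is $e^{-\Omega(\ell)}$-independent of everything outside $J_i\,\partial^\ell J_i$ — in particular of $F_i$. What then remains is to pass from conditioning on the entire annulus $\partial^\ell J_i=N_i\sqcup B_i$ (with $B_i$ the not-yet-sampled cells of the annulus) down to conditioning on its sampled portion $N_i$ alone; for this one reads \Cref{conj:markov} as an exponential decay of conditional correlations and combines it with the geometry of the fixed ordering, using that $F_i$ sits at distance more than $\ell$ from $J_i$ while each cell of $B_i$ lies within distance $\ell$ of $J_i$ and is itself later resampled conditioned on the value of $J_i$.

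The hybrid step is routine. The crux — and the place I expect the real difficulty to sit — is this last passage from the full Markov blanket $\partial^\ell J_i$ to the sampled sub-blanket $N_i$ that the algorithm actually conditions on: a naive appeal to \Cref{conj:markov} does not close the argument, since a subset of a Markov blanket need not screen. To push it through one needs either a mild strengthening of \Cref{conj:markov} into a uniform strong-spatial-mixing / decay-of-conditional-correlations statement (which the conjecture, asserted for every sublattice and radius, morally already encodes) or a geometric argument exploiting the ordering so that $N_i$, together with the unsampled near cells $B_i$, genuinely screens $J_i$ from $F_i$ at scale $\ell$. Making this quantitative while keeping the error accumulated over all $m$ steps below $\epsilon$ is where essentially all the work lies.
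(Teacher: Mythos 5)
Your hybrid/telescoping setup is the same as the paper's: both proofs bound $\|P-Q\|_1$ by a sum over the $m$ sampling steps, reduce each summand to an approximate conditional-independence statement between $J_i$ and the far sampled cells given the near sampled cells, and then choose $\ell=\Theta(\log(n/\epsilon))$. Your identification of the per-step quantity (the $P_{N_i}$-average of $\|P_{J_iF_i|N_i}-P_{J_i|N_i}P_{F_i|N_i}\|_1$) is correct and matches the paper's inductive step.

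However, the step you flag as ``where essentially all the work lies'' --- passing from the conjecture's conditioning set (the \emph{full} annulus $\partial^\ell J_i$, with $C$ being \emph{all} remaining sublattices) to the algorithm's conditioning set (only the sampled sub-blanket $N_i$, with only the sampled far cells $F_i$) --- is a genuine gap in your write-up, and it is exactly the step the paper closes, with no strengthening of the conjecture needed. The paper's observation is that \Cref{conj:markov} is quantified over \emph{every} noisy geometrically local circuit of depth $d$; hence one may apply it to the modified circuit $\Phi' = \mc D_{J\setminus(A'B'C')}\circ\Phi$, which is again such a circuit, and whose output distribution on $A'B'C'$ is precisely the marginal $P_{A'B'C'}$ of the original circuit (the discarded cells, including the unsampled annulus cells $B_i$ and unsampled far cells, become uniform and independent, so conditioning on them is vacuous). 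This upgrades \Cref{conj:markov} to the subset version (\Cref{eq:gibbs}), with $B'$ an arbitrary subset of the annulus and $C'$ an arbitrary subset of the far region, which is exactly the statement your per-step bound needs with $A'=F_i$, $B'=N_i$, $C'=J_i$ (the Markov condition being symmetric in $A$ and $C$). So neither a ``uniform strong-spatial-mixing strengthening'' nor a geometric screening argument about the ordering is required; the missing ingredient is this marginalization-via-depolarization trick, without which your proof does not close.
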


Thus, we have reduced a conjecture on classical simulability to a conjecture on a mathematical property of the output distribution. We also point out that our choice of $A,B$ and $C$ is slightly different than approximate markovianity conjectures in other settings, which we justify in the proof of \Cref{theorem:reduction}, in \Cref{app:reduction}.

Note, it is difficult to provide numerical evidence for \Cref{conj:markov}, since it addresses \textit{worst-case} quantum circuits. Numerical evidence that is obtained from randomly chosen quantum circuits or Clifford circuits would not suffice, as one could argue that this effect appears for different reasons. Instead, we have analytically demonstrated a phase transition in a structural property of the quantum state ($ne^{-\ell}$-approximatability by $\ell$-parametrized percolated approximations) that occurs when the depth exceeds a $\Theta(p^{-1}\log(p^{-1}))$ critical depth, a scaling which matches \Cref{conj:markov}. To see how one might attempt to relate this to \Cref{conj:markov}, we highlight that for any sublattice $A \in J$, and $B = \partial^{\ell} A$, and $C = J \backslash (A\partial^\ell A)$, one can consider producing a percolated approximation by truncating all Pauli operators with support on a contiguous set of sublattices between $A$ and $C$, i.e. only keeping inclusion-exclusion terms with a disentangling boundary of depolarized sublattices between $A$ and $C$. Our analysis can be easily applied to show that this would incur an approximation error $\sim e^{-\ell}$ for any $\ell$, when the depth exceeds the critical $\Theta(p^{-1}\log(p^{-1}))$ threshold where the percolation phase transition occurs. Recall also that all our results apply to the completely dephased state, i.e. the diagonal density matrix, which has output probabilities along the diagonal, which suggests that this observation may relate to \Cref{conj:markov}. 

The key issue, which also applies to prior works on percolation \cite{rajakumar_watson_liu,nelson_rajakumar_naturally,oh_classical_simulability_linear_optical}, is that each `instance' in a decomposition such as \Cref{corollary:density_matrices} obeys an approximate markov property, but a quasi-probability mixture of them may not. Nevertheless, we consider this to be some form of analytical evidence that an approximate markov property may hold for all noisy geometrically local quantum circuits at depths beyond $\Theta(p^{-1} \log (p^{-1}))$.


\section{Discussion} \label{sec:discussion}

\subsection{Practical Implications}
First, we remark that our results are asymptotic in nature, and may not have direct bearing on finite-size experiments. However, if we use asymptotics to guide where to look for practical quantum advantage, our results have implications in different regimes. In the absence of fault-tolerance, our work indicates that quantum advantage experiments must finish quickly, and that increasing system size does not necessarily buy one much computation time in comparison to decreasing physical noise strength. In the early fault-tolerant era, our results show that non-local architecture connectivity is similar to non-unital operations (e.g. intermediate measurement/reset), in the sense that they are both scarce hardware resources which break our algorithms and allow for quantum computation which scales with system size. In the fully fault-tolerant era, our results inform the required target logical error rate for demonstrating asymptotic quantum advantage, assuming that the logical errors resemble depolarizing noise and the logical circuit is unitary and geometrically local.


\subsection{Technical Implications}
We have introduced novel information-theoretic arguments in our work, which explicitly exploit noise and geometric locality constraints in noisy quantum circuits. We obtain these results by making modifications to arguments in \cite{aharonov_limitations,muller-hermes_stilck-franca_wolf_relative_entropy_convergence,stilck-franca_garcia-patron}. We also note that our inclusion-exclusion techniques are quite similar to those used in estimating output probabilities of noiseless geometrically local quantum circuits \cite{coble_coudron,dontha_tan_coudron_approximating,bravyi_gosset_liu_peaked}\footnote{In particular, note Lemma 7 of \cite{bravyi_gosset_liu_peaked}}. At a high level, we exploit local closeness to the maximally mixed state, whereas these results exploit local closeness to certain `heavy' output bitstrings, by assuming that the output distribution is peaked. As discussed in \Cref{sec:proof}, our techniques also share some overlap with the `Pauli Path Framework' in \cite{aharonov_polynomial_2023,bremner_achieving,gao_duan,takahashi_ct-ecs,schuster_polynomial,fontana,angrisani_schmidhuber_rudolph_cerezo_holmes_huang,angrisani_mele_simulating,mele_noise-induced,martinez_simulation}. We have managed to combine the strengths of several different existing analytical approaches to classical simulability in one argument, albeit with some non-trivial modifications, and we anticipate that these techniques will be highly useful in other settings of classical simulation. For instance, our conjecture is yet unproven even for the relaxed setting of random quantum circuits. We consider this to be a promising direction to make progress. We also make note that we have not yet explored tensor network approaches to these sampling tasks. We strongly suspect that there may be a clever contraction and truncation order of a tensor network which efficiently computes marginals in the regime of conjectured classical simulability but leave this to future work. One useful note is that any tensor network approach must manage to avoid certain directions of contraction due to no-goes on exactly computing marginals of constant-depth geometrically local noisy quantum circuits \cite{fujii_tamate}. Finally, we also highlight concurrent work \cite{FrankSuun} which proves approximate markovianity in noisy quantum circuits, albeit in a weaker setting where noise is greater than a constant threshold. We anticipate that these proof techniques based on cluster expansion may be useful in proving our conjecture.

\subsection{Fundamental Implications}
Here, we provide some further motivation for resolving our conjecture, besides its practical and technical relevance. In particular, one might view our conjecture as a complexity-theoretic formulation of the disappearance of quantum effects at macroscopic scales due to decoherence. Local depolarizing noise is a specific type of decoherence, and the assumption of geometric locality can be compared with 3-D Lieb-Robinson bounds or space-time lightcones which restrict how quickly two very far apart qubits in a lattice can interact. Up to these approximations, one might note that the existing convergence bound of $\omega(p^{-1}\log n)$ seems to leave open the possibility that very large (`infinite') systems can be evolved for very long (`infinite') time scales while continuously exposed to decoherence, and still remain non-simulable by any classical system efficiently. Our conjecture posits that actually, after a fixed critical amount of time, which is set by the rate of decoherence, such quantum systems exhibit behavior which is simulable by classical systems of similar size.

\section*{Acknowledgements}
This material is based upon work supported by the U.S. Department of Energy, Office of Science, Accelerated Research in Quantum Computing, Fundamental Algorithmic Research toward Quantum Utility (FAR-Qu). Additional support is acknowledged from IBM, where JR perfomed a portion of this research as an intern under the guidance of Abhinav Deshpande, Kunal Sharma, and Oles Shtanko. We thank Dominik Hangleiter, Zhi-Yuan Wei, Daniel Malz and Alexey Gorshkov for helpful discussions. JR thanks Alexander Muller-Hermes, Thiago Bergamaschi, Su-un Lee, Frank Zhang, and Mohammad A. Alhejji for insight into the difficulties of proving approximate markovianity and encouraging us to present this work even without resolution of the main conjecture.
This material is based upon work supported in part by the NSF QLCI award OMA2120757. This work was performed in
part at the Kavli Institute for Theoretical Physics (KITP), which is supported by grant NSF PHY-2309135. JN is supported by the National Science Foundation Graduate Research Fellowship Program under Grant No. DGE 2236417.

\begingroup
		\printbibliography[heading=bibintoc]
\endgroup

\appendix
\section{Appendix}

\subsection{Information-Theoretic Arguments}

\subsubsection{Relative Entropy Convergence on Subsets} \label{app:relative_entropy_convergence_subsets}

The proof follows the basic outline of \cite{muller-hermes_stilck-franca_wolf_relative_entropy_convergence}, where an entropy production inequality is shown for tensor products of depolarizing channels. The main difference is that we will prove conditional entropy production, by applying the `conditional quantum shearer inequality' of \cite{berta_conditional_shearers} in the step where we would have applied the normal `quantum shearer inequality.'

\begin{fact} \label{fact:conditional}
For any state $\rho$ on $n$ qubits and $A \subseteq [n]$,
    \begin{align}
        D(\rho \|\sigma_A \otimes \rho_{\bar{A}}) = |A| - S(A|\bar{A})_\rho
    \end{align}
\end{fact}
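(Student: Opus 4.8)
\textbf{Proof proposal for \Cref{fact:conditional}.} The plan is to simply expand the relative entropy using the definition $D(\rho\|\tau) = -S(\rho) - \Tr(\rho \log \tau)$ (with $\log = \log_2$, consistent with entropies measured in qubits throughout), specialized to $\tau = \sigma_A \otimes \rho_{\bar A}$. The only structural facts needed are that $\log$ of a tensor product splits as a sum, $\log(\sigma_A \otimes \rho_{\bar A}) = (\log \sigma_A)\otimes \eye_{\bar A} + \eye_A \otimes (\log \rho_{\bar A})$, and that $\sigma_A = \eye_A/2^{|A|}$, so that $\log \sigma_A = -|A|\,\eye_A$.

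First I would compute the cross term. Using the tensor decomposition above,
\begin{align*}
    -\Tr\!\big(\rho \log(\sigma_A \otimes \rho_{\bar A})\big) &= |A|\,\Tr(\rho) - \Tr\!\big(\rho\,(\eye_A \otimes \log \rho_{\bar A})\big)\\
    &= |A| - \Tr\!\big(\rho_{\bar A} \log \rho_{\bar A}\big) = |A| + S(\rho_{\bar A}),
\end{align*}
where the second line uses $\Tr(\rho\,(\eye_A \otimes X_{\bar A})) = \Tr(\rho_{\bar A} X_{\bar A})$ for the partial trace, and $\Tr(\rho)=1$. Combining with $-S(\rho) = \Tr(\rho\log\rho)$ gives
\begin{align*}
    D(\rho\|\sigma_A \otimes \rho_{\bar A}) = |A| + S(\rho_{\bar A}) - S(\rho) = |A| - \big(S(\rho) - S(\rho_{\bar A})\big) = |A| - S(A|\bar A)_\rho,
\end{align*}
by the definition of the conditional von Neumann entropy $S(A|\bar A)_\rho := S(A\bar A)_\rho - S(\bar A)_\rho = S(\rho) - S(\rho_{\bar A})$.

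There is no real obstacle here; this is a one-line identity once the definitions are unpacked. The only points requiring minor care are fixing the logarithm base consistently (so that the additive $|A|$ term appears rather than $|A|\log 2$), and invoking the standard partial-trace identity to collapse $\Tr(\rho\,(\eye_A \otimes \log\rho_{\bar A}))$ down to an expression involving only $\rho_{\bar A}$. This fact will then feed directly into the proof of \Cref{lemma:relative_entropy_convergence_subsets} by recasting relative-entropy decay as conditional-entropy production under $\mc N_A$.
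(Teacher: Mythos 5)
Your proof is correct and follows essentially the same route as the paper: expand $D(\rho\|\sigma_A\otimes\rho_{\bar A})$ from the definition, use $\sigma_A = \eye/2^{|A|}$ to extract the additive $|A|$, and collapse the cross term to $S(\rho_{\bar A})$ via the partial-trace identity, yielding $|A| - S(A|\bar A)_\rho$. Nothing is missing; the extra care you note about the logarithm base is the only detail worth keeping in mind.
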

\begin{proof}
    \begin{align}
         D(\rho \|\sigma_A \otimes \rho_{\bar{A}}) &=
         \Tr \rho \log \rho - \Tr \rho \log (\sigma_A \otimes \rho_{\bar{A}}) \\
         &= -S(\rho) - \Tr \rho(\eye \otimes \log \frac{\rho_{\bar{A}}}{2^{|A|}}) & \text{Since $\sigma_A = \eye/2^{|A|}$} \\
         &= -S(\rho) - \Tr \rho(\eye \otimes \log\rho_{\bar{A}}) + |A| \\
         &= |A|-S(\rho) + S(\rho_{\bar{A}}) \\
         &= |A| - S(A|\bar{A})_\rho
    \end{align}
\end{proof}
\begin{fact}[Conditional quantum Shearer inequality \cite{berta_conditional_shearers}] \label{fact:shearer}
    Consider $t \in \mathbb N$ and a family $\mc F$ of subsets of $[m]$ such that each $i \in [m]$ is contained in exactly $t$ elements of $\mc F$. Then for any $\rho \in \mathrm D(A_1 \dots A_m B)$ we have
    \[
        S(A_1 \dots A_m | B) \leq \frac{1}{t} \sum_{F \in \mc F} S(\{A_s\}_{s \in \mc F}|B)
    \]
\end{fact}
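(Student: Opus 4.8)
The plan is to prove \Cref{fact:shearer} by the classical "entropy–counting" argument for Shearer's lemma, carried out with the von Neumann conditional entropy $S(X\mid Y):=S(XY)-S(Y)$ in place of the Shannon one. Only two ingredients are needed. The first is the chain rule: for any ordering of systems, $S(A_{i_1}\cdots A_{i_k}\mid B)=\sum_{l=1}^{k}S(A_{i_l}\mid A_{i_1}\cdots A_{i_{l-1}}B)$, which is just a telescoping of the definition and remains valid even though the individual summands may be negative (every term is a difference of honest von Neumann entropies). The second is strong subadditivity in its "conditioning reduces conditional entropy" form, $S(X\mid YZ)\le S(X\mid Y)$, i.e. the Lieb–Ruskai SSA inequality — this is the single place where quantum input enters.

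First I would fix the natural ordering $1<2<\cdots<m$ on $[m]$. For each $F\in\mc F$, write $F=\{i_1<i_2<\cdots<i_k\}$ and apply the chain rule to obtain $S(A_F\mid B)=\sum_{l=1}^{k}S(A_{i_l}\mid A_{i_1}\cdots A_{i_{l-1}}B)$. Since $\{A_{i_1},\ldots,A_{i_{l-1}}\}\subseteq\{A_1,\ldots,A_{i_l-1}\}$, strong subadditivity gives, termwise, $S(A_{i_l}\mid A_{i_1}\cdots A_{i_{l-1}}B)\ge S(A_{i_l}\mid A_1\cdots A_{i_l-1}B)$, so that for every $F\in\mc F$,
\[
    S(A_F\mid B)\ \ge\ \sum_{i\in F}S(A_i\mid A_1\cdots A_{i-1}B).
\]
Next I would sum this over all $F\in\mc F$. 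Because each index $i\in[m]$ lies in exactly $t$ members of $\mc F$, the right–hand side collapses to $t\sum_{i=1}^{m}S(A_i\mid A_1\cdots A_{i-1}B)$, which by the chain rule applied to the natural ordering is exactly $t\,S(A_1\cdots A_m\mid B)$. Dividing by $t$ yields the claim. (Equivalently, one may phrase the counting abstractly: the set function $f(S):=S(A_S\mid B)$ has $f(\emptyset)=0$ and is submodular by SSA, and Shearer's inequality holds for any such $f$.)

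I do not expect a genuine obstacle here; the argument is the verbatim quantum analogue of the classical proof, with all the quantum content concentrated in the one invocation of strong subadditivity. The only points requiring care are bookkeeping: verifying that each conditional entropy is lower–bounded by the one conditioned on the \emph{full} prefix $A_1\cdots A_{i-1}$ (repeated use of SSA, which only ever enlarges the conditioning system), and then matching the $t$-fold multiplicity when summing over $\mc F$. The possible negativity of $S(A_i\mid A_1\cdots A_{i-1}B)$ is harmless, since every manipulation is performed on differences of ordinary nonnegative von Neumann entropies.
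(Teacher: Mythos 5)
Your proof is correct: the telescoping chain rule for conditional von Neumann entropy, the termwise application of strong subadditivity to enlarge each conditioning system to the full prefix $A_1\cdots A_{i-1}B$, and the exact $t$-fold counting over $\mc F$ together give precisely the stated inequality (and your submodularity remark is also sound, since with an exact $t$-cover no monotonicity of $S(A_S\mid B)$ is needed). The paper itself offers no proof of \Cref{fact:shearer} — it is imported as a black box from \cite{berta_conditional_shearers} — so there is nothing internal to compare against; your argument is the standard quantum analogue of the classical Shearer counting proof and is essentially the one underlying the cited result. One cosmetic point: the statement as printed has a typo, $S(\{A_s\}_{s \in \mc F}\mid B)$ should read $S(\{A_s\}_{s \in F}\mid B)$, which your proof implicitly (and correctly) assumes.
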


\begin{lemma} [Restatement of \Cref{lemma:relative_entropy_convergence_subsets}]
For any state $\rho$ on $n$ qubits and $A \subseteq [n]$,
\begin{align}
        D(\mc N_A(\rho)\|\sigma_A \otimes \rho_{\bar{A}}) \leq (1-p)D(\rho\|\sigma_A \otimes \rho_{\bar{A}}) \label{eq:relative}
    \end{align}
\end{lemma}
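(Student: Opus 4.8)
The plan is to mimic the structure of the all-qubits entropy production inequality from \cite{muller-hermes_stilck-franca_wolf_relative_entropy_convergence}, but replacing the ordinary quantum Shearer inequality with the conditional version (\Cref{fact:shearer}) so that everything is done relative to the complement $\bar A$. First I would use \Cref{fact:conditional} to translate the statement into an inequality about conditional entropy: since $D(\rho\|\sigma_A\otimes\rho_{\bar A}) = |A| - S(A|\bar A)_\rho$, and since $\mc N_A(\rho)$ has the same reduced state on $\bar A$ as $\rho$, the target inequality \eqref{eq:relative} is equivalent to
\begin{align}
    |A| - S(A|\bar A)_{\mc N_A(\rho)} \leq (1-p)\bigl(|A| - S(A|\bar A)_\rho\bigr),
\end{align}
i.e. $S(A|\bar A)_{\mc N_A(\rho)} \geq (1-p)\,S(A|\bar A)_\rho + p|A|$. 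So it suffices to prove this lower bound on the conditional entropy after one layer of depolarizing noise on $A$.

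Next I would expand $\mc N_A = \bigcirc_{i\in A}\mc N_i$ where $\mc N_i = (1-p)\mc I_i + p\mc D_i$. Writing this out, $\mc N_A(\rho) = \sum_{S\subseteq A} (1-p)^{|A\setminus S|} p^{|S|}\, \mc D_S(\rho)$, a convex combination over subsets $S\subseteq A$ of states that are completely depolarized on $S$. By concavity of conditional entropy (which follows from concavity of $S(A|\bar A) = S(\rho) - S(\rho_{\bar A})$ — the first term is concave and the second is linear in $\rho$ restricted to the mixture, since all terms share the same $\rho_{\bar A}$... actually $\rho_{\bar A}$ is fixed across the mixture, so $-S(\rho_{\bar A})$ is constant and the whole thing is concave), we get $S(A|\bar A)_{\mc N_A(\rho)} \geq \sum_{S\subseteq A}(1-p)^{|A\setminus S|}p^{|S|}\, S(A|\bar A)_{\mc D_S(\rho)}$. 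For a fixed $S$, $\mc D_S(\rho) = \sigma_S \otimes \rho_{\overline S}$, so $S(A|\bar A)_{\mc D_S(\rho)} = |S| + S(A\setminus S \mid \bar A)_\rho$. Now I would apply the conditional Shearer inequality to the family $\mc F = \{A\setminus\{i\} : i\in A\}$ of subsets of $A$ (each element of $A$ lies in exactly $|A|-1$ of them), conditioned on $\bar A$, to bound $S(A\setminus S|\bar A)_\rho$ from below in terms of $S(A|\bar A)_\rho$ — more precisely I would average the Shearer bound appropriately so that the weighted combination $\sum_S (1-p)^{|A\setminus S|}p^{|S|}\bigl(|S| + S(A\setminus S|\bar A)_\rho\bigr)$ reassembles into $(1-p)S(A|\bar A)_\rho + p|A|$.

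The main obstacle I expect is getting the combinatorics of the Shearer averaging to line up exactly with the binomial weights coming from the product noise channel — this is the technical heart of the \cite{muller-hermes_stilck-franca_wolf_relative_entropy_convergence} argument, and in the conditional setting one must be careful that conditioning on $\bar A$ is preserved throughout (the Shearer family must only subdivide the $A$-registers, never touch $\bar A$). One clean way to organize it: instead of applying Shearer directly to all of $A$ at once, induct on $|A|$, peeling off one qubit at a time — show $D(\mc N_i(\rho)\|\sigma_i\otimes\rho_{\bar i}) \le (1-p)D(\rho\|\sigma_i\otimes\rho_{\bar i})$ type single-qubit steps chained through a telescoping argument — but the single-qubit depolarizing step relative to a varying reference state is itself nontrivial, so the direct Shearer route is probably cleaner. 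I would also need strong subadditivity / monotonicity of conditional entropy under tracing to handle the $S(A\setminus S|\bar A)$ terms consistently. Modulo these bookkeeping details, the inequality should fall out exactly as in the unconditional case since depolarizing noise acts identically on each qubit of $A$ and the conditional entropy behaves, for these purposes, just like ordinary entropy with $\bar A$ along for the ride.
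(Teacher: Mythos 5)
Your proposal follows essentially the same route as the paper's proof: reduce via $D(\rho\|\sigma_A\otimes\rho_{\bar A}) = |A| - S(A|\bar A)_\rho$ to the conditional entropy production bound $S(A|\bar A)_{\mc N_A(\rho)} \geq (1-p)S(A|\bar A)_\rho + p|A|$, expand $\mc N_A$ as a binomial mixture over completely depolarized subsets, invoke concavity of conditional entropy, and finish with the conditional Shearer inequality plus binomial bookkeeping. The only detail to adjust is the Shearer family: rather than only the co-singletons $\{A\setminus\{i\}\}$, the paper applies \Cref{fact:shearer} separately for each $k$ to the family of all size-$k$ subsets of $A$ (each $i\in A$ lies in $t=\binom{|A|}{k}\frac{k}{|A|}$ of them), after which the weights recombine via the identity $\sum_{k} p^{n-k}(1-p)^{k}\binom{n}{k}\frac{k}{n}=1-p$, exactly the reassembly you anticipated.
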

\begin{proof}
    Using Fact \ref{fact:conditional}, we can rewrite the given inequality as a conditional entropy production inequality. Note that the following is equivalent to Eq. \ref{eq:relative}.
    \begin{align}
    \label{eq:conditionalentropy}
        S(A|\bar{A})_{\mc N_A(\rho)} \geq (1-p) S(A|\bar{A})_{\rho} + p|A|
    \end{align}
Our goal will thus be to prove Eq. \ref{eq:conditionalentropy}. Let $\mc F_k := \{F\subseteq A : |F|=k\}$. Using the concavity of conditional entropy we can begin to bound the left-hand side as follows:
\begin{align}
    S(A|\bar{A})_{\mc N_A(\rho)} &\geq \sum_{k=0}^{|A|} p^k (1-p)^{|A|-k} \sum_{F \in \mc F_k} S(A|\bar{A})_{\sigma_F \otimes \rho_{\bar{F}}} \\
    &= \sum_{k=0}^{|A|} p^k (1-p)^{|A|-k}\sum_{F \in \mc F_k} (|F| + S(A|\bar{A})_{\rho_{\bar{F}}})
\end{align}
First, notice that 
\[
    \sum_{k=0}^{|A|} p^k (1-p)^{|A|-k} \sum_{F \in \mc F_k} |F| = p|A|
\]
This can be seen by noting that the left-hand side is exactly the expected number of depolarizing errors that occur in $A$. Next, we can use the conditional quantum Shearer inequality \Cref{fact:shearer} to bound the remaining term:
\begin{align}
    \sum_{k=0}^{|A|} p^k (1-p)^{|A|-k} \sum_{F \in \mc F_k} S(A|\bar{A})_{\rho_{\bar{F}}} &=\sum_{k=0}^{|A|} p^{|A|-k} (1-p)^{k} \sum_{F \in \mc F_k} S(A|\bar{A})_{\rho_{F\cup \bar{A}}} \\
    & \geq \sum_{k=0}^{|A|} p^{|A|-k} (1-p)^{k} {|A| \choose k} \frac{k}{|A|}S(A|\bar{A})_{\rho} \label{eq:applyshearer} \\
    &=(1-p)S(A|\bar{A}) \label{eq:applybin}
\end{align}
Where in \Cref{eq:applyshearer}, we applied the conditional quantum Shearer inequality (\Cref{fact:shearer}) for $t={|A| \choose k}\frac{k}{|A|}$, which is the number of subsets of $\mc F_k$ that each $i \in A$ appears. In \Cref{eq:applybin}, we applied the known identity: $\sum_{k=0}^{n} p^{n-k} (1-p)^{k} {n \choose k} \frac{k}{n} = 1-p$.
\end{proof}

\subsubsection{Relative Entropy Decay in Circuits} \label{app:exponential_decay}

Before stating our lemma, we define the notion of a reverse lightcone below (we adopt a quite formal definition which is useful in our proofs),

\begin{definition} [Reverse Lightcone] \label{def:lightcone}
    For any quantum circuit, for any set of qubits $A$, we use $L_{i}(A)$ to denote the qubits in the reverse lightcone of $A$ when considering only the last $d-i$ layers. The lightcone is defined inductively starting from the last layer of the circuit and moving backwards: $L_{d}(A) = A$ and $L_{i-1}(A)$ is the smallest superset of $L_i(A)$ such that no gate in $U^{(i)}$ crosses between $L_{i-1}$ and $\overline{L_{i-1}(A)}$. We will use $L(A)$ as shorthand for $L_0(A)$.
\end{definition}

First, we introduce a lemma that will allow us to consider supersets of $A$, e.g. $L_{d-1}(A)$, as an upper bound on the relative entropy between $\rho$ and $\sigma_A \otimes \rho_{\overline{A}}$. This is the mechanism by which we introduce dependence on lightcone size in our bounds.

\begin{fact}
\label{fact:switchsigma}
    Let $A \subseteq B$
    \begin{align}
        D(\rho || \sigma_{A} \otimes \rho_{\bar{A}}) \leq D(\rho || \sigma_B \otimes \rho_{\bar{B}})
    \end{align}
\end{fact}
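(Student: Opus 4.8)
The plan is to reduce this monotonicity statement to a single application of the data-processing inequality for relative entropy. The key observation is that $\sigma_A \otimes \rho_{\bar A}$ is obtained from $\sigma_B \otimes \rho_{\bar B}$ by applying the completely depolarizing channel $\mc D_{B \backslash A}$ on the qubits in $B \backslash A$, and that this same channel fixes $\rho$ up to... actually not quite — so instead I would apply $\mc D_{B\backslash A}$ to \emph{both} arguments and track what happens. Applying $\mc D_{B \backslash A}$ to $\sigma_B \otimes \rho_{\bar B}$ gives $\sigma_A \otimes \sigma_{B\backslash A} \otimes \rho_{\bar B} = \sigma_A \otimes (\sigma_{B \backslash A} \otimes \rho_{\bar B})$; but $\sigma_{B\backslash A} \otimes \rho_{\bar B}$ is exactly $\mc D_{B\backslash A}(\rho)_{\bar A}$ only if $\mc D_{B \backslash A}(\rho)$ has product form across $(B\backslash A)$ and $\bar B$, which it does by construction of the complete depolarizing channel. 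So $\mc D_{B\backslash A}(\sigma_B \otimes \rho_{\bar B}) = \sigma_A \otimes \mc D_{B\backslash A}(\rho)_{\bar A}$. This is still not literally $\sigma_A \otimes \rho_{\bar A}$, so the naive single-step DPI does not immediately close.

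Instead, I would use the variational/additive structure more carefully. The cleanest route: write $D(\rho \| \sigma_B \otimes \rho_{\bar B}) = |B| - S(B \mid \bar B)_\rho$ and $D(\rho \| \sigma_A \otimes \rho_{\bar A}) = |A| - S(A \mid \bar A)_\rho$ using Fact \ref{fact:conditional}. Then the claimed inequality is equivalent to
\begin{align}
    |B| - S(B \mid \bar B)_\rho \;\geq\; |A| - S(A \mid \bar A)_\rho,
\end{align}
i.e. to $S(A \mid \bar A)_\rho - S(B \mid \bar B)_\rho \geq |A| - |B| = -|B\backslash A|$, i.e. $S(B\mid \bar B)_\rho - S(A \mid \bar A)_\rho \leq |B \backslash A|$. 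Now expand both conditional entropies: $S(B \mid \bar B)_\rho = S(\rho) - S(\rho_{\bar B})$ and $S(A\mid \bar A)_\rho = S(\rho) - S(\rho_{\bar A})$, so the left-hand side telescopes to $S(\rho_{\bar A}) - S(\rho_{\bar B})$. Since $\bar B \subseteq \bar A$ and $\bar A = \bar B \cup (B \backslash A)$ as a disjoint union, subadditivity of von Neumann entropy gives $S(\rho_{\bar A}) \leq S(\rho_{\bar B}) + S(\rho_{B\backslash A}) \leq S(\rho_{\bar B}) + |B \backslash A|$, which is exactly the needed bound.

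The main obstacle — really the only subtlety — is bookkeeping the set relations ($\bar B \subseteq \bar A$, $\bar A \backslash \bar B = B \backslash A$, all disjointness) and making sure Fact \ref{fact:conditional} is applied with the correct conditioning systems; the rest is just subadditivity plus $S(\rho_{B\backslash A}) \leq |B \backslash A|$. I expect the whole argument to be three or four lines once the conditional-entropy rewriting is in place, and there is no need for data processing at all. An alternative one-liner, if one prefers to avoid entropies: apply DPI to the channel $\mc D_{B \backslash A} \circ (\text{nothing})$... but the entropic route above is the most transparent, so that is the one I would write up.
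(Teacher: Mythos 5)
Your argument is correct and matches the paper's proof essentially step for step: both rewrite the two relative entropies via \Cref{fact:conditional}, telescope the conditional entropies down to $S(\rho_{\bar A}) - S(\rho_{\bar B})$, and bound this by $|B \backslash A|$ (the paper phrases this as $S(\bar A \backslash \bar B \mid \bar B)_\rho \leq |\bar A \backslash \bar B|$, which is exactly your subadditivity-plus-dimension bound). The abandoned data-processing detour at the start is harmless; the entropic route you settle on is the paper's proof.
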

\begin{proof}
    \begin{align}
        D(\rho || \sigma_{A} \otimes \rho_{\bar{A}}) -  D(\rho || \sigma_B \otimes \rho_{\bar{B}}) &= |A| - S(A|\bar{A})_{\rho} - |B| + S(B|\bar{B}) &\text{By \Cref{fact:conditional}} \\
        &= |A|-|B|+ S(\bar{A})_{\rho} - S(\bar{B})_{\rho} \\
        &= |A|-|B|+S(\bar{A} \backslash \bar{B} | \bar{B}) \\
        &\leq |A|-|B| + |\bar{A} \backslash \bar{B}|\\
        &= 0
    \end{align}
\end{proof}

Now, we prove the main lemma,
\begin{lemma} [Restatement of \Cref{lemma:exponential_decay}]
    Let $\rho$ be the output state of a noisy quantum circuit. Let $A \subseteq [n]$, and $L(A)$ be the set of qubits in its reverse lightcone.
    \begin{align}
        D(\rho\|\sigma_A \otimes \rho_{\overline{A}}) \leq (1-p)^d|L(A)|
    \end{align}
\end{lemma}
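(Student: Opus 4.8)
The plan is to run an induction over the $d$ layers of the noisy circuit, peeling off one layer $\mc N_{[n]} \circ \mc U^{(i)}$ at a time from the outside in, and to track the quantity $D(\Phi^{(>i)}(\cdot)\,\|\,\sigma_{L_i(A)} \otimes (\cdot)_{\overline{L_i(A)}})$ along the nested sequence of reverse lightcones $A = L_d(A) \supseteq L_{d-1}(A) \supseteq \dots \supseteq L_0(A) = L(A)$. Here $\Phi^{(>i)}$ denotes the tail of the circuit consisting of layers $i+1,\dots,d$ applied (with their noise) to whatever state is fed in after layer $i$. The base case is trivial: at the output $D(\rho\,\|\,\sigma_A \otimes \rho_{\overline A}) $ is what we want to bound, so we work backwards and at the input the circuit acts on $\ketbra{0}^{\otimes n}$, and $D$ against $\sigma_{L(A)} \otimes (\cdot)_{\overline{L(A)}}$ is at most $|L(A)|$ since relative entropy of any state against $\sigma_S \otimes \tau$ is $|S| - S(S|\overline S) \le |S|$ by \Cref{fact:conditional}.

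The inductive step has two ingredients. First, when we remove the noise layer $\mc N_{[n]}$, we want to invoke \Cref{lemma:relative_entropy_convergence_subsets} to gain the multiplicative factor $(1-p)$. But that lemma is stated for $\mc N_A$ acting on a state, contracting $D(\cdot\,\|\,\sigma_A\otimes(\cdot)_{\overline A})$; here the noise acts on \emph{all} $n$ qubits while we only compare against $\sigma$ on the lightcone $L_i(A)$. Since $\mc N_{[n]}$ applies depolarizing on every qubit, in particular it applies $\mc N_{L_i(A)}$, and depolarizing noise on qubits outside $L_i(A)$ only decreases $D(\cdot\,\|\,\sigma_{L_i(A)}\otimes(\cdot)_{\overline{L_i(A)}})$ by monotonicity of relative entropy under the channel $\mc N_{\overline{L_i(A)}}$ (which fixes the second argument, since $(\mc N_{\overline{L_i(A)}}(\varrho))_{\overline{L_i(A)}}$ differs from $\varrho_{\overline{L_i(A)}}$ and one must check the second argument transforms correctly — actually the clean way is: apply \Cref{lemma:relative_entropy_convergence_subsets} with $A := L_i(A)$ to $\mc N_{\overline{L_i(A)}}(\varrho)$, then use data processing for the remaining noise). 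Second, when we remove the unitary layer $\mc U^{(i)}$, we use that $D$ is invariant under the unitary on both arguments: $D(\mc U^{(i)}\varrho\,\|\,\mc U^{(i)}(\sigma_{L_{i-1}(A)}\otimes \varrho_{\overline{L_{i-1}(A)}})\,\mc U^{(i)\dagger})$, and by the defining property of $L_{i-1}(A)$ no gate of $\mc U^{(i)}$ straddles the cut, so $\mc U^{(i)}$ factors as $V \otimes W$ across $L_{i-1}(A)$ and $\overline{L_{i-1}(A)}$; hence $\mc U^{(i)}$ maps $\sigma_{L_{i-1}(A)} \otimes \varrho_{\overline{L_{i-1}(A)}}$ to $\sigma_{L_{i-1}(A)} \otimes (\mc U^{(i)}\varrho)_{\overline{L_{i-1}(A)}}$, which is exactly the reference state we want at the next stage. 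Combining: passing from stage $i$ to stage $i-1$ replaces $L_i(A)$ by $L_{i-1}(A)$ (costs nothing, by \Cref{fact:switchsigma} which lets us enlarge the $\sigma$-region) and multiplies the bound by $(1-p)$. After $d$ steps we accumulate $(1-p)^d$ times the initial bound $|L(A)|$.

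Concretely I would organize it as: (i) state the invariant $D\bigl(\Phi^{(>i)}(\mu)\,\big\|\,\sigma_{L_i(A)} \otimes \Phi^{(>i)}(\mu)_{\overline{L_i(A)}}\bigr) \le (1-p)^{d-i}|L_i(A)|$ for all input states $\mu$ to the tail, (ii) verify it at $i=d$, (iii) do the step $i\to i-1$ using \Cref{fact:switchsigma} to swap $L_i(A)$ for $L_{i-1}(A)$, then unitary invariance under the factorized $\mc U^{(i)}$, then \Cref{lemma:relative_entropy_convergence_subsets} plus data processing for the noise layer, picking up one factor of $(1-p)$ and noting $|L_{i-1}(A)| \ge |L_i(A)|$ is fine since the bound only grows, (iv) conclude at $i=0$.

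The main obstacle I anticipate is the bookkeeping in the noise step: making sure that when $\mc N_{[n]}$ is split as $\mc N_{L_i(A)}$ composed with $\mc N_{\overline{L_i(A)}}$, the reference state $\sigma_{L_i(A)}\otimes \varrho_{\overline{L_i(A)}}$ is handled consistently — the subtlety is that $\mc N_{\overline{L_i(A)}}$ changes the marginal on $\overline{L_i(A)}$, so one cannot naively say the reference state is preserved. The cleanest fix is to note $\mc N_{\overline{L_i(A)}}$ is itself of the form required, so apply data processing \emph{after} fixing the reference: $D(\mc N_{\overline{L_i(A)}}(\xi)\,\|\,\mc N_{\overline{L_i(A)}}(\sigma_{L_i(A)}\otimes\xi_{\overline{L_i(A)}})) \le D(\xi\,\|\,\sigma_{L_i(A)}\otimes\xi_{\overline{L_i(A)}})$ and observe $\mc N_{\overline{L_i(A)}}(\sigma_{L_i(A)}\otimes\xi_{\overline{L_i(A)}}) = \sigma_{L_i(A)}\otimes (\mc N_{\overline{L_i(A)}}(\xi))_{\overline{L_i(A)}}$ since the channel acts as identity on $L_i(A)$; then \Cref{lemma:relative_entropy_convergence_subsets} contracts the $\mc N_{L_i(A)}$ part by $(1-p)$. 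Everything else is routine.
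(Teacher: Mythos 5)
Your core argument is the paper's own proof: peel one layer at a time, use \Cref{lemma:relative_entropy_convergence_subsets} on the noise inside the current lightcone, handle the noise outside the lightcone by data processing after checking that the reference state $\sigma_{L_i(A)}\otimes(\cdot)_{\overline{L_i(A)}}$ keeps its form (you identified and resolved exactly the right subtlety there), enlarge the $\sigma$-region via \Cref{fact:switchsigma}, and remove $\mc U^{(i)}$ using the fact that no gate crosses the $L_{i-1}(A)$ cut; the paper packages the last two ingredients as a commutation identity $\mc D_{L_{i-1}(A)}\circ\mc N_{\overline{L_i(A)}}\circ\mc U^{(i)}=\mc N_{\overline{L_i(A)}}\circ\mc U^{(i)}\circ\mc D_{L_{i-1}(A)}$ plus one monotonicity step, which is equivalent to your factorized-unitary-invariance argument. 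However, the formal invariant you propose in (i) is mis-stated and would not survive a careful write-up. At $i=d$ it asserts $D(\mu\|\sigma_A\otimes\mu_{\overline A})\le|A|$ for \emph{all} input states $\mu$, which is false: by \Cref{fact:conditional} this quantity equals $|A|-S(A|\overline A)_\mu$, and quantum conditional entropy can be as negative as $-|A|$ (e.g.\ for a state maximally entangled across the cut), so the value can reach $2|A|$; your claim ``relative entropy of any state against $\sigma_S\otimes\tau$ is at most $|S|$'' only holds when $S(S|\overline S)\ge 0$, as it does for the product input $\ketbra{0}^{\otimes n}$. Moreover, an invariant of the form ``output relative entropy $\le$ fixed constant for all inputs'' cannot pick up a factor $(1-p)$ when you prepend one more layer, so the induction as organized in (i)--(iv) does not literally close. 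The fix is small and already implicit in your prose: either state the invariant as a contraction, $D\bigl(\Phi^{(>i)}(\mu)\,\big\|\,\sigma_A\otimes\Phi^{(>i)}(\mu)_{\overline A}\bigr)\le(1-p)^{d-i}\,D\bigl(\mu\,\big\|\,\sigma_{L_i(A)}\otimes\mu_{\overline{L_i(A)}}\bigr)$, or simply track the actual intermediate states $\rho^{(i)}$ of the circuit as the paper does, chaining the per-layer inequality $D(\rho^{(i)}\|\sigma_{L_i(A)}\otimes\rho^{(i)}_{\overline{L_i(A)}})\le(1-p)D(\rho^{(i-1)}\|\sigma_{L_{i-1}(A)}\otimes\rho^{(i-1)}_{\overline{L_{i-1}(A)}})$ and invoking the value $|L(A)|$ only at the product input, where it is exact. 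Also note that in item (iii) the removal order must be noise first, then unitary (as in your prose), since $\mc N_{[n]}$ and $\mc U^{(i)}$ do not commute.
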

\begin{proof}
    Our proof strategy is to use each layer of noise to decay the relative entropy by $1-p$ and then argue that the intermediary layers of gates do not increase the relative entropy. We will start from the end of the circuit and `peel' back layers. Let $\rho^{(i)}$ denote the state after only the first $i$ layers of gates and subsequent noise channels are applied. Next, we introduce some notation to isolate only the noise channels within the lightcone of $A$. Let $\rho^{*(i)}$ denote the state after only the first $i$ layers of gates and subsequent noise channels are applied, \textit{except the last noise channels that act on $L_i(A)$}, i.e.
    \begin{align}
         \rho^{(i)} &= \mc N_{L_i(A)}^{(i)}(\rho^{*(i)}) \label{eq:star1}\\
         \rho^{*(i)} &= \mc N_{\overline{L_{i}(A)}}^{(i)} \circ \mc U^{(i)} (\rho^{(i-1)}) \label{eq:star2}
    \end{align}
    An important observation is that in our definition of the lightcone \Cref{def:lightcone}, $U^{(i)}$ cannot cross between $L_{i-1}(A)$ and $\overline{L_{i-1}(A)}$. This gives us the crucial property that all channels in layer $i$ commute with the completely depolarizing channel on $L_{i-1}(A)$. In particular,
    \begin{align}
        \mc D_{L_{i-1}(A)}  \circ \mc N_{\overline{L_{i}(A)}} \circ \mc U^{(i)} = \mc N_{\overline{L_{i}(A)}} \circ \mc U^{(i)} \circ \mc D_{L_{i-1}(A)} \label{eq:commutation}
    \end{align}

    The main inductive step is as follows, 
    \begin{align*}
        &D(\rho^{(i)} || \sigma_{L_i(A)} \otimes \rho_{\overline{L_i(A)}}^{(i)}) \\
        &= D(\mc N_{L_i(A)}(\rho^{*(i)}) || \sigma_{L_i(A)} \otimes \rho^{(i)}_{\overline{L_{i}(A)}} ) \\
        &= D(\mc N_{L_i(A)}(\rho^{*(i)}) || \sigma_{L_i(A)} \otimes \rho^{*(i)}_{\overline{L_{i}(A)}} ) 
        &\tag{By \Cref{eq:star1}}\\
        &\leq (1-p) D(\rho^{*(i)}||\sigma_{L_i(A)} \otimes \rho^{*(i)}_{\overline{L_{i}(A)}})
        &\tag{By \Cref{lemma:relative_entropy_convergence_subsets}}\\
        &\leq (1-p) D(\rho^{*(i)}||\sigma_{L_{i-1}(A)} \otimes \rho^{*(i)}_{\overline{L_{i-1}(A)}})
        &\tag{By \Cref{fact:switchsigma}}\\
        &= (1-p)D(\rho^{*(i)}||\mc D_{L_{i-1}(A)} (\rho^{*(i)}))\\
        &= (1-p) D(\mc N_{\overline{L_{i}(A)}}^{(i)} \circ \mc U^{(i)} (\rho^{(i-1)})||\mc D_{L_{i-1}(A)}  \circ \mc N_{\overline{L_{i}(A)}} \circ \mc U^{(i)}(\rho^{(i-1)}) )
        &\tag{By \Cref{eq:star2}}\\
        &= (1-p) D(\mc N_{\overline{L_{i}(A)}}^{(i)} \circ \mc U^{(i)} (\rho^{(i-1)})|| \mc N_{\overline{L_{i}(A)}} \circ \mc U^{(i)} \circ \mc D_{L_{i-1}(A)} (\rho^{(i-1)}) )
        &\tag{By \Cref{eq:commutation}}\\
        &= (1-p) D( \rho^{(i-1)}||\mc D_{L_{i-1}(A)} (\rho^{(i-1)}) )
        &\tag{monotonicity of relative entropy}\\
        &= (1-p) D(\rho^{(i-1)}\|\sigma_{L_{i-1}(A)} \otimes \rho^{(i-1)}_{\overline{L_{i-1}(A)}})
    \end{align*}
    Starting from $i = d$ and going to $i=0$, we get
    \begin{align*}
        D(\rho\|\sigma_A \otimes \rho_{\overline{A}}) &\leq (1-p)^d D(\rho^{(0)}\|\sigma_{L_{0}(A)} \otimes \rho^{(0)}_{\overline{L_{0}(A)}}) \\
        &= (1-p)^d D(\ketbra{0}_J \|\sigma_{L(A)} \otimes \ketbra{0}_{\overline{L_{0}(A)}}) \\
        &= (1-p)^d |L(A)|
    \end{align*}

\end{proof}

\subsection{Truncation Arguments} \label{app:truncatability}
\begin{lemma} [Restatement of \Cref{lemma:inclusion-exclusion}]
        Let $\rho$ be the output state of a noisy geometrically local quantum circuit.
        \begin{align}
            \|\mc M_A(\rho)\|_1 &\leq (2[(1-p)^d(4d)^D]^{1/(2 * 3^D)})^{|A|} 
        \end{align} 
\end{lemma}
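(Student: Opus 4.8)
The plan is to exploit the fact that each $\mc M_{J_i}$ is a trace-zero map with small trace norm on $\rho$, and that spatially separated sublattices converge independently, so that $\mc M_A(\rho)$ — being a product of such maps over $J_i \in A$ — should decay exponentially in $|A|$. The first step is to recall from \Cref{corollary:exponential_decay} and Pinsker's inequality that for a single sublattice $J_i$, $\|\mc M_{J_i}(\rho)\|_1 = \|\rho - \sigma_{J_i}\otimes \rho_{J\backslash J_i}\|_1 \leq \sqrt{2 D(\rho\|\sigma_{J_i}\otimes\rho_{J\backslash J_i})} \leq \sqrt{2(1-p)^d(4d)^D} =: \delta$. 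The difficulty is that $\mc M_A = \bigcirc_{J_i\in A}\mc M_{J_i}$ is a composition, and each $\mc M_{J_i}$ is not a contraction in general (it is $\mc I - \mc D_{J_i}$, which has operator norm $2$ in trace norm), so naively composing gives only $2^{|A|}$, with no decay. The resolution is to partition $A$ into $O(3^D)$ classes of mutually non-adjacent sublattices (a coloring of the adjacency graph of the coarse-grained lattice, which has bounded degree $\leq 3^D - 1$), say $A = A_1 \sqcup \cdots \sqcup A_r$ with $r = O(3^D)$, apply the decay only on one well-separated class at a time, and absorb the trivial factor of $2$ on the remaining classes.

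Concretely, I would first prove an auxiliary bound: for a set $S$ of pairwise non-adjacent sublattices, $\|\mc M_S(\rho)\|_1 \leq (\text{something})^{|S|}$, using that the lightcones $L(J_i)$ for $J_i\in S$ are pairwise disjoint (this is exactly the independent-convergence property built into \Cref{def:coarse-grain}: sublattices of side $2d$ that are non-adjacent have non-overlapping radius-$d$ lightcones). For non-adjacent sublattices one can run an argument analogous to the chain rule / recursive decomposition: write $\mc M_S(\rho) = \mc M_{J_{i_1}}\circ \mc M_{S\setminus J_{i_1}}(\rho)$ and bound the trace norm using that $\mc M_{J_{i_1}}$ acting on any state $\varrho$ gives $\|\varrho - \sigma_{J_{i_1}}\otimes\varrho_{J\backslash J_{i_1}}\|_1$, which can be controlled by a relative-entropy bound for $\varrho$ that survives the earlier $\mc M$ operations because those act outside the lightcone of $J_{i_1}$. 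Iterating over the $|S|$ sublattices in $S$ yields $\|\mc M_S(\rho)\|_1 \leq \delta^{|S|}$ (up to constants). Then for general $A$, color it into $r\le 3^D$ non-adjacent classes, peel off classes one at a time: $\|\mc M_A(\rho)\|_1 \le 2^{|A_2|+\cdots+|A_r|}\,\|\mc M_{A_1}(\cdot)\|_{1\to 1}$-style bounds, balanced so that the best class has $|A_j| \ge |A|/r$, giving $\|\mc M_A(\rho)\|_1 \le 2^{|A|}\delta^{|A|/r} = (2\delta^{1/r})^{|A|}$ with $r = 2\cdot 3^D$ after accounting for the factor-of-two slack — matching the stated bound with $\delta = [(1-p)^d(4d)^D]^{1/2}$ absorbed appropriately.

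The main obstacle I anticipate is making the ``peel off one sublattice at a time within a non-adjacent class'' step rigorous: one needs that after applying $\mc M_{S'}$ for $S' \subsetneq S$, the resulting (non-state, trace-zero) matrix still admits a relative-entropy-type bound $D(\varrho' \| \sigma_{J_{i}}\otimes\varrho'_{J\backslash J_i}) \le (1-p)^d|L(J_i)|$ on the next sublattice $J_i$ — but $\varrho'$ is not a density matrix, so relative entropy isn't literally defined. The fix is presumably to track, instead of relative entropy of the residual, a bound on $\|\mc M_{J_i}(\varrho')\|_1$ directly by writing $\mc M_{J_i}(\varrho') = \mc M_{J_i}(\mc M_{S'}(\rho))$ and using linearity to expand $\mc M_{S'}(\rho) = \sum_{B\subseteq S'}(-1)^{|B|}\sigma_B\otimes\rho_{J\backslash B}$, then bounding $\|\mc M_{J_i}(\sigma_B\otimes\rho_{J\backslash B})\|_1$ for each $B$ using that $J_i$'s lightcone is disjoint from $B$'s sublattices, so $\rho_{J\backslash B}$ restricted to $L(J_i)$ is still close to maximally mixed. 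Summing over $B\subseteq S'$ reintroduces a $2^{|S'|}$ factor, which is exactly why the coloring into a constant number of classes (rather than a single global induction) is essential to keep the exponent under control. I would organize the final write-up around this coloring decomposition, with the per-class bound as a standalone sub-lemma.
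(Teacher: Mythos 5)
Your skeleton matches the paper's in two places: reducing to a pairwise non-adjacent subfamily of $A$ (the paper takes a single maximal independent set $A_{span}$ with $|A_{span}|\geq |A|/3^D$ rather than a full coloring, but that is cosmetic) while paying a factor $2$ per remaining sublattice, and controlling a single sublattice by Pinsker plus \Cref{corollary:exponential_decay}, giving $\delta=\sqrt{2(1-p)^d(4d)^D}$. However, the heart of the lemma is upgrading the single-sublattice bound $\delta$ to the compounding bound $\delta^{|A_{span}|}$ across the non-adjacent class, and your proposed route does not achieve this. Your own fix for the ``residual is not a state'' problem — expand $\mc M_{S'}(\rho)=\sum_{B\subseteq S'}(-1)^{|B|}\sigma_B\otimes\rho_{J\backslash B}$ and bound each $\|\mc M_{J_i}(\sigma_B\otimes\rho_{J\backslash B})\|_1\leq \delta$ (which does hold, by data processing of relative entropy under $\Tr_B$) — yields only $\|\mc M_S(\rho)\|_1\leq 2^{|S|-1}\delta$: one net factor of $\delta$, not $\delta^{|S|}$, because the term-by-term expansion destroys the compounding. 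The coloring into constantly many classes cannot repair this, since the $2^{|S'|}$ blow-up occurs \emph{inside} a single non-adjacent class; the end result of your scheme is a bound of the form $2^{|A|}\delta$, which has no exponential decay in $|A|$ and does not prove the lemma.

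The missing ingredient is the paper's circuit-stripping step. Disjointness of the reverse lightcones of the sublattices in $A_{span}$ does \emph{not} make the output state $\rho$ (or its restriction relevant to $\mc M_{A_{span}}\otimes\mc I$) a product across those regions, because the gates outside the (time-dependent) lightcones still correlate them; so trace norms of the per-sublattice deviations do not multiply on $\rho$ itself. The paper instead writes $\rho=\Phi'_{A_{span}}\circ\Phi_{A_{span}}(\ketbra{0}_J)$, pulls the expansion over $C\subseteq A\setminus A_{span}$ outside (triangle inequality, factor $2^{|A\setminus A_{span}|}$), and then uses monotonicity of the trace norm under the CPTP map $\mc D_C\circ\Phi'_{A_{span}}$ to discard everything outside the lightcones. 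What remains is the intermediate state $\Phi_{A_{span}}(\ketbra{0}_J)$, which \emph{is} an exact tensor product across the disjoint lightcones, so the alternating sum over $B\subseteq A_{span}$ factorizes as $\bigotimes_{y\in A_{span}}\bigl(\Phi_y(\ketbra{0}_{L(y)})-\mc D_y\circ\Phi_y(\ketbra{0}_{L(y)})\bigr)$; multiplicativity of the trace norm under tensor products then gives the product of single-sublattice factors, each bounded by $\delta$ via Pinsker and \Cref{lemma:exponential_decay} applied to the lightcone-restricted circuit $\Phi_y$. Without this reduction to an exactly product intermediate state, your sequential peeling has no mechanism to multiply the $\delta$'s, so as written the proposal has a genuine gap. (Minor additional omission: the paper also handles the regime $(1-p)^d(4d)^D\geq 1$ separately via the trivial $2^{|A|}$ bound, which is needed for the stated inequality to hold unconditionally.)
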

\begin{proof}
    First, note that $\|\mc M_A(\rho)\|_1 \leq 2^{|A|}$, since there are $2^{|A|}$ terms in $\mc M_A(\rho)$, and they each have trace norm 1 (triangle inequality). Therefore, the bound is trivially true when $(1-p)^d(4d)^D \geq 1$. We will now consider the regime where $(1-p)^d(4d)^D < 1$. Let $A_{span}$ be the largest subset of $A$ such that no two sublattices in $A_{span}$ are adjacent. Note that $|A_{span}| \geq |A|/3^{D}$. Let $\Phi_{A_{span}}$ denote the channels of $\Phi$ in the reverse lightcone of $A_{span}$ and $\Phi'_{A_{span}}$ denote all channels applied subsequently in $\Phi$. Notice that because the sublattices are $2d$ qubits wide, the reverse lightcones of each sublattice in $A_{span}$ do not intersect, and therefore $\Phi_{A_{span}}$ has a product structure. In particular, if we define $\Phi_i$ as the channels of $\Phi$ in the reverse lightcone of any $i \in J$, we have,
    \begin{align}
        \Phi_{A_{span}}(\ketbra{0}_J) = \bigotimes_{i \in A_{span}} (\Phi_{i}(\ketbra{0}_{L(i)})) \otimes \ketbra{0}_{J \backslash L(A_{span})}\label{eq:product_structure}
    \end{align}
    The key idea is to use the contractivity of trace distance under CPTP maps for traceless operators \cite{perez-garcia_wolf_petz_ruskai_contractivity}. Note that the inclusion-exclusion sum is traceless, which allows us to use this inequality.
    \begin{align*}
        \|\mc M_A(\rho)\|_1  &= \|\sum_{B \subseteq 
        A}(-1)^{|B|} \rho_{J\backslash B} \otimes \sigma_{B}\|_1 \\
        &= \|\sum_{C \subseteq A\backslash A_{span}}(-1)^{|C|}\sum_{B\subseteq 
        A_{span}}(-1)^{|B|} \rho_{J\backslash BC} \otimes \sigma_{BC}\|_1\\
        &\leq  \sum_{C \subseteq A\backslash A_{span}} \|\sum_{B\subseteq 
        A_{span}}(-1)^{|B|} \rho_{J\backslash BC} \otimes  \sigma_{BC}\|_1 \tag{triangle inequality}\\
        &=  \sum_{C \subseteq A\backslash A_{span}} \|\sum_{B\subseteq 
        A_{span}}(-1)^{|B|} \mc D_{BC} (\rho)\|_1\\
        &=  \sum_{C \subseteq A\backslash A_{span}} \|\sum_{B\subseteq 
        A_{span}}(-1)^{|B|} \mc D_{BC} \circ \Phi'_{A_{span}} \circ \Phi_{A_{span}} (\ketbra{0}_J)\|_1\\
        &=  \sum_{C \subseteq A\backslash A_{span}} \| \mc D_{C} \circ \Phi'_{A_{span}}( \sum_{B\subseteq 
        A_{span}}(-1)^{|B|}  \mc D_{B}  \circ \Phi_{A_{span}} (\ketbra{0}_J))\|_1\\
        &\leq  \sum_{C \subseteq A\backslash A_{span}} \| \sum_{B\subseteq 
        A_{span}}(-1)^{|B|} \mc D_{B}  \circ \Phi_{A_{span}} (\ketbra{0}_J)\|_1 \tag{contractivity of trace distance}\\
        &= 2^{|A \backslash A_{span}|} \| \sum_{B\subseteq 
        A_{span}}(-1)^{|B|} [\bigotimes_{y \in B} \mc D_{y}]  \circ \Phi_{A_{span}} (\ketbra{0}_J))\|_1 \\
        &= 2^{|A \backslash A_{span}|} \|  \bigotimes_{y \in A_{span}} (\Phi_{y} (\ketbra{0}_{L(y)}) - \mc D_{y}  \circ \Phi_{y} (\ketbra{0}_{L(y)}))\|_1\tag{\Cref{eq:product_structure} and contractivity}\\
        &= 2^{|A \backslash A_{span}|} \prod_{y \in A_{span}} \|  \Phi_{y} (\ketbra{0}_{L(y)}) - \mc D_{y}  \circ \Phi_{y} (\ketbra{0}_{L(y)})\|_1 \tag{multiplicativity of trace norm under tensor product}\\
        &\leq 2^{|A \backslash A_{span}|} \prod_{y \in A_{span}} \sqrt{2D(\Phi_{y} (\ketbra{0}_{L(y)})||\mc D_{y}  \circ \Phi_{y} (\ketbra{0}_{L(y)}))} \tag{Pinsker's inequality}\\
        &\leq 2^{|A \backslash A_{span}|} (2(1-p)^d(4d)^D)^{|A_{span}|/2}\\
        &\leq (2[(1-p)^d(4d)^D]^{1/(2 * 3^D)})^{|A|} \tag{because $(1-p)^d(4d)^D < 1$}
    \end{align*}
\end{proof}

\begin{theorem}[Restatement of \Cref{theorem:convergence1}]
    Let $\rho$ be the output state of a noisy geometrically local quantum circuit with $d >d ^*$, where $d^* = \Theta(p^{-1} \log (p^{-1}n))$. Let $\rho_{sparse,k}$ be defined as in \Cref{def:sparse_approximation}. Then,
    \begin{align}
        \|\rho-\rho_{sparse,k}\|_1 \leq e^{-k\log n}
    \end{align}
\end{theorem}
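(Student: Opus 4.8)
The plan is to bound the trace norm of the truncated tail
\[
\rho - \rho_{sparse,k} \;=\; \sum_{A \subseteq J:\, |A| > k} [\mc M_A \otimes \mc D_{J\backslash A}](\rho)
\]
term by term using \Cref{lemma:inclusion-exclusion}, and then to show that the resulting sum over subsets collapses to a convergent geometric series once $d$ exceeds the stated threshold. Concretely, by the triangle inequality $\|\rho - \rho_{sparse,k}\|_1 \le \sum_{A:\,|A|>k} \|[\mc M_A \otimes \mc D_{J\backslash A}](\rho)\|_1$. Since $\mc M_A$ and $\mc D_{J\backslash A}$ act on disjoint sublattices, $[\mc M_A \otimes \mc D_{J\backslash A}](\rho) = \mc D_{J\backslash A}(\mc M_A(\rho))$, and $\mc M_A(\rho)$ is Hermitian because every $\mc M_{J_i} = \mc I - \mc D_{J_i}$ preserves Hermiticity. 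Monotonicity of the trace norm of Hermitian operators under the CPTP map $\mc D_{J\backslash A}$ then yields $\|[\mc M_A \otimes \mc D_{J\backslash A}](\rho)\|_1 \le \|\mc M_A(\rho)\|_1 \le \beta^{|A|}$, where $\beta := 2\,[(1-p)^d(4d)^D]^{1/(2\cdot 3^D)}$ is the per-sublattice decay from \Cref{lemma:inclusion-exclusion}.

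Next I would group subsets by cardinality. Writing $m \le n$ for the number of sublattices and using ${m \choose j} \le m^j$,
\begin{align*}
    \|\rho - \rho_{sparse,k}\|_1 \;\le\; \sum_{j=k+1}^{m} {m \choose j}\beta^j \;\le\; \sum_{j=k+1}^{\infty}(m\beta)^j .
\end{align*}
Provided $m\beta \le 1/2$, this geometric series is at most $2(m\beta)^{k+1} \le (m\beta)^k$. Hence it suffices to enforce $m\beta \le 1/n$, and since $m \le n$ this follows from $\beta \le n^{-2}$.

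It remains to translate $\beta \le n^{-2}$ into a depth condition. Unwinding the definition of $\beta$, this inequality is equivalent to $(1-p)^d(4d)^D \le 1/c$ with $c = (2n^2)^{2\cdot 3^D} = \poly(n)$ (the polynomial degree and constant depending only on $D$). By \Cref{fact:critical_depth} there is a threshold $d^* = \Theta(p^{-1}\log(p^{-1}c))$ such that $d > d^*$ implies $(1-p)^d(4d)^D < 1/c$; since $D = O(1)$ we have $\log c = \Theta(\log n)$, so $d^* = \Theta(p^{-1}\log(p^{-1}n))$, matching the statement. For such $d$ we conclude $\|\rho - \rho_{sparse,k}\|_1 \le (m\beta)^k \le n^{-k} = e^{-k\log n}$, as claimed. (If $k \ge m$ the tail is empty and the bound is trivial.)

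There is no deep obstacle here: \Cref{lemma:inclusion-exclusion} does the analytic work, and everything else is bookkeeping. The one point needing care is the calibration of the polynomial $c$ in \Cref{fact:critical_depth} so that a single choice ($\beta \le n^{-2}$) simultaneously (i) makes the per-term decay $\beta^{|A|}$ beat the combinatorial factor ${m\choose j} \le n^j$ and (ii) keeps $m\beta$ below $1/2$ so the geometric sum converges and telescopes into the clean $e^{-k\log n}$ form. One should also be careful that $\mc M_A$ is not completely positive, so the contraction step must invoke trace-norm monotonicity for Hermitian (rather than merely positive) operators — equivalently, one can observe $\mc M_A(\rho_A) = \Tr_{J\backslash A}[\mc M_A(\rho)]$ and use contractivity of the partial trace.
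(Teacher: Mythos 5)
Your proposal is correct and follows essentially the same route as the paper's proof: triangle inequality over the tail terms, the per-term bound $\beta^{|A|}$ from \Cref{lemma:inclusion-exclusion}, grouping subsets by cardinality, and invoking \Cref{fact:critical_depth} with a $\poly(n)$ calibration to get $d^* = \Theta(p^{-1}\log(p^{-1}n))$ and the final $n^{-k}$ bound. Your explicit handling of the passage from $\|\mc M_A(\rho)\|_1$ to $\|\mc M_A(\rho_A)\otimes\sigma_{J\backslash A}\|_1$ via contractivity of the partial trace is a small point the paper leaves implicit, but otherwise the two arguments differ only in bookkeeping (your geometric series with $m\beta\le 1/n$ versus the paper's $\binom{m}{k'}\le(em/k')^{k'}$ estimate).
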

\begin{proof}
    We have,
    \begin{align*}
        \|\rho - \rho_{sparse,k}\| &= \|\sum_{A \subseteq J, |A| > k}\sigma_{J \backslash A} \otimes \mc M_A(\rho_A)\|_1\\
        &\leq \sum_{A \subseteq J, |A| > k} \|\sigma_{J \backslash A} \otimes \mc M_A(\rho_A)\|_1 \tag{triangle inequality}\\
        &\leq \sum_{A \subseteq J, |A| > k} (2[(1-p)^d(4d)^D]^{1/(2 * 3^D)})^{|A|} \\
        &\leq \sum_{k' > k} {m \choose k'}(2[(1-p)^d(4d)^D]^{1/(2 * 3^D)})^{k'} \\
        &\leq \sum_{k' > k} (\frac{em2[(1-p)^d(4d)^D]^{1/(2 * 3^D)}}{k'})^{k'} \\
        &\leq \sum_{k' > k} n^{-k'} \tag{\Cref{fact:critical_depth}, when $d > d^*$ and $d^* = \Theta(p^{-1}\log (p^{-1}n))$}\\
        &\leq n^{-k}
    \end{align*}
\end{proof}

\begin{theorem}[Restatement of \Cref{theorem:convergence2}]
    Let $\rho$ be the output state of a noisy geometrically local quantum circuit with $d >d ^*$, where $d^* = \Theta(p^{-1} \log p^{-1})$. Let $\rho_{perc,\ell}$ be defined as in \Cref{def:percolated_approximation}. Then,
    \begin{align}
        \|\rho-\rho_{perc,\ell}\|_1 \leq e^{-\ell}n
    \end{align}
\end{theorem}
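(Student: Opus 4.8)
The plan is to show that $\rho-\rho_{perc,\ell}$ is exactly the sum of the inclusion--exclusion terms whose sublattice set has a connected component of size $>\ell$, and then to bound its trace norm by a site-percolation union bound over such clusters, using the per-term estimate of \Cref{lemma:inclusion-exclusion}. Writing $\rho-\rho_{perc,\ell}=\sum_{A\in\mathcal B}\sigma_{J\backslash A}\otimes\mc M_A(\rho_A)$ with $\mathcal B$ the family of $A\subseteq J$ having a connected component of size $>\ell$, the first ingredient is the algebraic identity that, for any fixed connected $C\subseteq J$, the sum of the inclusion--exclusion terms over all $A\supseteq C$ telescopes: since $\mc D_{J_i}+\mc M_{J_i}=\mc I$, one gets $\sum_{A\supseteq C}\sigma_{J\backslash A}\otimes\mc M_A(\rho_A)=\mc M_C(\rho)$, which by \Cref{lemma:inclusion-exclusion} has trace norm at most $q^{|C|}$ where $q:=2\big[(1-p)^d(4d)^D\big]^{1/(2\cdot 3^D)}$. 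Moreover this bound is unaffected by precomposing with $\mc D_S$ on any set of sublattices $S$, since $\mc D_S(\rho)$ is a convex mixture of output states of noisy geometrically local circuits --- commute the twirling Paulis on $S$ through the final noise layer into the last unitary layer --- so \Cref{lemma:inclusion-exclusion} still applies termwise, and similarly $\sum_{A:\,C\text{ is a component of }A}\sigma_{J\backslash A}\otimes\mc M_A(\rho_A)=\mc M_C\circ\mc D_{\partial C}(\rho)$ has trace norm at most $q^{|C|}$.

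The second and main step is to reorganize $\sum_{A\in\mathcal B}(\cdots)$ so that each $A$ is charged to a single cluster without the trace norm blowing up. I fix an ordering of the sublattices and, for $A\in\mathcal B$, let $C(A)$ be the connected component of $A$ containing the smallest-indexed sublattice that lies in some component of $A$ of size $>\ell$; this partitions $\mathcal B$ by the value of $C(A)$, which ranges over connected clusters $C$ with $|C|>\ell$. For fixed such $C$, the block $\sum_{A:\,C(A)=C}\sigma_{J\backslash A}\otimes\mc M_A(\rho_A)$ is a constrained partial sum of inclusion--exclusion terms --- those $A$ for which $C$ is a component and no earlier sublattice sits in a large component --- and the crux is to show this block still has trace norm $O(q^{|C|})$ despite collecting exponentially many $A$. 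This is where the ``careful grouping'' is needed: the contribution of the part of $A$ outside $C\cup\partial C$ telescopes away exactly as in the identity above, leaving $\mc M_C\circ\mc D_{\partial C}$ applied to $\rho$, while the constraint that the lower-indexed region percolate subcritically is handled by a further inclusion--exclusion over that region, each stage of which is geometrically damped by \Cref{lemma:inclusion-exclusion}; the point is that the naive ``stochastic depolarization of sublattices'' picture is only heuristic and must be replaced by this explicit bookkeeping so the norm stays $O(q^{|C|})$.

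Finally, I invoke a lattice-animal count: since the sublattice connectivity graph has maximum degree $\Delta=3^D-1$, the number of connected subsets of size $s$ is at most $m\kappa^{s}$ for a constant $\kappa=\kappa(D)$, where $m\le n$ is the number of sublattices. Hence $\|\rho-\rho_{perc,\ell}\|_1\le\sum_{s>\ell}m\,\kappa^{s}\,O(q^{s})$. By \Cref{fact:critical_depth}, taking the implicit constant in $d^*=\Theta(p^{-1}\log p^{-1})$ large enough (depending only on $D$) forces $(1-p)^d(4d)^D$ below any prescribed constant, hence $q$, and so $\kappa q$, below $e^{-1}$; the resulting geometric series sums to at most $e^{-\ell}$, and combined with $m\le n$ this yields $\|\rho-\rho_{perc,\ell}\|_1\le e^{-\ell}n$. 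I expect the second step --- making the site-percolation grouping rigorous so that each cluster block keeps trace norm $O(q^{|C|})$ rather than exploding combinatorially --- to be the main obstacle; the telescoping identity of the first step and the enumeration bound of the third step are routine given \Cref{lemma:inclusion-exclusion}.
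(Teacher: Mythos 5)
Your outline follows the same strategy as the paper (group the discarded inclusion--exclusion terms by a canonically chosen large cluster $C$, factor out $\mc M_C$, use \Cref{lemma:inclusion-exclusion} for the $q^{|C|}$ damping, and finish with a lattice-animal count made subcritical via \Cref{fact:critical_depth}), and your first and third steps are fine --- in fact the bound $\|\mc M_C\circ\mc D_{\partial C}(\rho)\|_1\le q^{|C|}$ needs no twirling argument, since $\mc D_{\partial C}$ is a CPTP map commuting with $\mc M_C$, so contractivity gives it directly. But the step you yourself flag as the main obstacle is a genuine gap, and it does not close the way you suggest. Once you impose the constraint that no lower-indexed sublattice lies in a large component, the sum over the exterior configuration $A'\subseteq J\setminus(C\,\partial C)$ no longer telescopes to the identity: it is a \emph{constrained} partial sum, i.e.\ itself a truncated quasi-state on the exterior region, and its trace norm is not $1$ --- bounding it by $1+O(ne^{-\ell})$ is essentially the statement of the theorem on a subregion, so asserting each cluster block has norm $O(q^{|C|})$ in one shot is circular. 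Nor does ``a further inclusion--exclusion over that region, geometrically damped'' work as stated: bounding the exterior sum term by term reintroduces exactly the combinatorial blow-up the grouping was meant to avoid.

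The paper resolves this by making the truncation sequential rather than one-shot: it defines iterates $\rho^{(i)}$ in which large components around $J_1,\dots,J_i$ have been removed, and shows (using that partially tracing an $\mc M_{J_k}$ factor gives zero) that the constrained exterior sum appearing at step $i$ is exactly $\rho^{(i-1)}_{J\backslash C\partial C}$, so that
\begin{align}
\rho^{(i)}-\rho^{(i-1)}=\sum_{C}\rho^{(i-1)}_{J\backslash C\partial C}\otimes\sigma_{\partial C}\otimes\mc M_C(\rho_C),
\end{align}
whence $\|\rho^{(i)}-\rho^{(i-1)}\|_1\le e^{-\ell}\|\rho^{(i-1)}\|_1$ after the animal count. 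The exterior norm is then controlled not by a per-block claim but by the recursion $\|\rho^{(i)}-\rho\|_1\le e^{-\ell}+(1+e^{-\ell})\|\rho^{(i-1)}-\rho\|_1$, which accumulates to $(1+e^{-\ell})^m-1\le 2e^{-\ell}m\le e^{-\ell}n$; this recursion is precisely where the factor of $n$ in the theorem comes from, and it is the bookkeeping your proposal is missing. To repair your argument you would need either to adopt this sequential scheme (identifying your per-$C$ exterior sum with a partial trace of an already-defined truncated matrix) or to supply an independent uniform bound on the trace norm of constrained truncations, which you have not provided.
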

\begin{proof}
    Let us use $CC^{>\ell}(J_i)$ to denote the set of all possible $C \subseteq J$, such that $C$ is a connected component of more than $\ell$ sublattices that includes $J_i$ . We can now formalize the definition of $\rho_{perc,\ell}$ according to \Cref{def:percolated_approximation},
    \begin{align}
        \rho_{perc,\ell} &= \sum_{A \subseteq J: \forall C \in \bigcup_{j \in [m]} CC^{> \ell}(J_j), C \not \subseteq A} \sigma_{J \backslash A} \otimes \mc M_A(\rho_A)
    \end{align}
    Now, we want to bound the error this truncation causes. For this, we will perform the truncation iteratively for each $i \in [m]$. In particular, we will consider a sequence of matrices $\{\rho^{(i)}\}$, which are defined by truncating all terms with $A$ containing connected components from $\bigcup_{j \in [i]}CC^{> \ell}(J_j)$. This is formalized below
    \begin{align}
        \rho^{(i)} = \sum_{A \subseteq J:  \forall C  \in \bigcup_{j \in [i]}CC^{> \ell}(J_j), C \not \subseteq A} \sigma_{J \backslash A} \otimes \mc M_A(\rho_A) \label{eq:rho_i}
    \end{align}
    where we set $\rho^{(0)} = \rho$. We will consider $\rho_{perc,\ell} = \rho^{(m)}$. Now, by inspecting \Cref{eq:rho_i}, we can write,
    \begin{align}
        \rho^{(i)} - \rho^{(i-1)} &= \sum_{\substack{A \subseteq J: \exists C \in CC^{> \ell}(J_i) ,C  \subseteq A, \\
        \forall C' \in \bigcup_{j \in [i-1]} CC^{> \ell}(J_j), C' \not \subseteq A}} \sigma_{J \backslash A} \otimes \mc M_A(\rho_A)  \label{eq:diff}
    \end{align}
    Essentially, this sums over all terms in the inclusion-exclusion decomposition of $\rho$, where $A$ contains a large connected component around $J_i$, but does not contain large connected components around any $J_j$ for all $j \in [i-1]$, since these have already been truncated in steps $1\ldots i-1$. Note, there is a combinatorially large number of terms that are truncated at any step, and so it is not possible to directly bound the trace norm of \Cref{eq:diff} using the triangle inequality and \Cref{lemma:exponential_decay} to bound each term of the sum. To resolve this, we present a method to group together terms in the above sum, so that we can bound the error sufficiently. 
    
    In particular, at any truncation step $i$, we iterate over all possible $C \in CC^{> \ell}(J_i)$ and for each such $C$, we group together every valid choice of $A$ where $C \subseteq A$. To make this an accurate enumeration of terms in \Cref{eq:diff}, we further enforce that $A$ satisfies the following two conditions,
    \begin{itemize}
        \item $\forall C' \in \bigcup_{j \in [i-1]} CC^{> \ell}(J_j), C' \not \subseteq A$
        \item $C$ is the \textit{largest} connected component in $A$ that contains $J_i$
    \end{itemize}
    The latter condition ensures that different choices of $C$ will never correspond to the same choice of $A$, so this grouping technique does not perform any double-counting. To enforce the latter condition, it is enough to require that $\partial C \subseteq J \backslash A$, i.e. that the boundary of $C$ is not in $A$ and is therefore completely depolarized. Note that in general, $A$ then has the form $A = CA'$, where $A' \subseteq J \backslash (C \partial C)$. Our key strategy will be group together all valid choices of $A'$ (the region outside the depolarized boundary $\partial C$) for any choice of $C$ (the region inside the depolarized boundary $\partial C$). Explicitly,

    \begin{align*}
        &\rho^{(i)} - \rho^{(i-1)} \\
        &=  \sum_{\substack{C \in CC^{> \ell}(J_i) }} \left(\sum_{\substack{A \subseteq J: C \subseteq A,\\ \forall C' \in \bigcup_{i \in [i-1]} CC^{> \ell}(J_j), C' \not \subseteq A}}  \mc M_{A}(\rho_{A}) \otimes \sigma_{J \backslash (A)} \right)\\
        &=  \sum_{\substack{C \in CC^{> \ell}(J_i) \backslash \bigcup_{j \in [i-1]} CC^\ell(J_j) \\
        }} \left(\sum_{\substack{A' \subseteq J \backslash (C \partial C): \\ \forall C' \in \bigcup_{i \in [i-1]} CC^{> \ell}(J_j), C' \not \subseteq A'}}  \mc M_{A'  C}(\rho_{A' C}) \otimes \sigma_{J \backslash (A'C)}\right) \tag{because $A = CA'$}\\
        &= \sum_{\substack{C \in CC^{> \ell}(J_i) \backslash \bigcup_{j \in [i-1]} CC^\ell(J_j) \\
        }} \left(\sum_{\substack{A' \subseteq J \backslash (C \partial C): \\ \forall C' \in \bigcup_{i \in [i-1]} CC^{> \ell}(J_j), C' \not \subseteq A'}}  \mc M_{A' }(\rho_{A'}) \otimes \sigma_{J \backslash (A'C)} \otimes \mc M_C(\rho_C) \tag{because $\partial C \not \in A'C$}\right)\\
        &=  \sum_{\substack{C \in CC^{> \ell}(J_i) \backslash \bigcup_{j \in [i-1]} CC^\ell(J_j) \\
        }} \left(\sum_{\substack{A' \subseteq J \backslash (C \partial C): \\ \forall 'C \in \bigcup_{i \in [i-1]} CC^{> \ell}(J_j), C' \not \subseteq A'}}  \mc M_{A' }(\rho_{A'}) \otimes \sigma_{J \backslash (A'C\partial C)}\right) \otimes \sigma_{\partial C} \otimes \mc M_C(\rho_C)  \\
        &= \sum_{\substack{C \in CC^{> \ell}(J_i) \backslash \bigcup_{j \in [i-1]} CC^\ell(J_j) \\
        }} \rho^{(i-1)}_{J \backslash C \partial C} \otimes \sigma_{\partial C} \otimes \mc M_C(\rho_C) \tag{by \Cref{eq:rho_i}}\\
    \end{align*}
    This grouping allows us to sum over all possible connected components $C$, rather than all possible $C$ and all possible $A'$, as would be necessary if we directly attempted to bound \Cref{eq:diff}. The number of possible $C$ can be tightly bounded due to geometric locality, and it is overwhelmed by the decay of each term. Explicitly,
    \begin{align*}
        \norm{\rho^{(i)} - \rho^{(i-1)}}_1 &\leq \sum_{\substack{C \in CC^{> \ell}(J_i) \backslash \bigcup_{j \in [i-1]} CC^\ell(J_j) \\
        }} \norm{\rho^{(i-1)}_{J \backslash C \partial C}}_1 \norm{\mc M_C(\rho_C)}_1 \tag{triangle inequality and submultiplicativity}\\
        &\leq \sum_{x > \ell} \sum_{\substack{C \in CC^{x}(J_i) \backslash \bigcup_{j \in [i-1]} CC^\ell(J_j) \\
        }} \norm{\rho^{(i-1)}}_1 (2[(1-p)^d(4d)^D]^{1/(2 * 3^D)})^x \tag{contractivity of trace norm and \Cref{lemma:inclusion-exclusion}}\\
        &\leq \sum_{x > \ell} (3^D)^x \norm{\rho^{(i-1)}}_1 (2[(1-p)^d(4d)^D]^{1/(2 * 3^D)})^x  \tag{because each sublattice has $<3^D$ neighbors}\\
        &\leq e^{-\ell} \norm{\rho^{(i-1)}}_1
    \end{align*}
    Where in the final step, we have applied \Cref{fact:critical_depth}, assuming $d > d^*$, where $d^* = \Theta(p^{-1} \log p^{-1})$. We then have,
    \begin{align}
        \|\rho^{(i)} - \rho\|_1 &\leq \norm{\rho^{(i)} - \rho^{(i-1)}}_1 + \norm{\rho^{(i-1)} - \rho}_1 \tag{triangle inequality}\\
        &\leq e^{-\ell} \norm{\rho^{(i-1)}}_1 + \norm{\rho^{(i-1)} - \rho}_1\\
        &\leq e^{-\ell} (\norm{\rho}_1+\norm{\rho^{(i-1)} - \rho}_1) + \norm{\rho^{(i-1)} - \rho}_1 \tag{triangle inequality} \\
        &= e^{-\ell} + (1+e^{-\ell})\norm{\rho^{(i-1)} - \rho}_1
    \end{align}
    Solving this recursive formula with the base case of $\|\rho^{(0)} - \rho\| = 0$ and recursing until $i = m$ where $\rho^{(i)} = \rho_{perc,\ell}$, we get,
    \begin{align}
        \|\rho_{perc,\ell} - \rho\|_1 &\leq (1+e^{-\ell})^m -1 \\
        &\leq e^{e^{-\ell}m} -1 \\
        &\leq 2e^{-\ell}m \tag{when $e^{-\ell} m$ is small}
    \end{align}
\end{proof}

\begin{lemma}[Restatement of \Cref{lemma:pauli_basis}]
For any $A \subseteq J$ and $P \in \mathsf{P}_n$,
\begin{align}
        \mc M_A \otimes \mc D_{J \backslash A} (P) = \begin{cases}
            P, &\text{if }A = \Supp(P)\\
            0, &\text{if }A \neq \Supp(P)
        \end{cases}
    \end{align}
\end{lemma}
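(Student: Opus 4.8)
The plan is to reduce the statement to the action of a single complete depolarizing channel on a Pauli basis element, and then exploit the fact that all the channels appearing in $\mc M_A \otimes \mc D_{J\backslash A}$ act on pairwise-disjoint qubit registers, so they commute and may be applied in any order.

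First I would work out the action of $\mc D_i$ on a single-qubit Pauli. Writing $P = P_i \otimes Q$ with $P_i \in \{I,X,Y,Z\}$ and $Q$ a Pauli on the remaining qubits, $\mc D_i(P) = \sigma_i \otimes \Tr_i(P) = \tfrac{\Tr(P_i)}{2}\,\eye_i \otimes Q$. Since $\Tr(I) = 2$ and $\Tr(X) = \Tr(Y) = \Tr(Z) = 0$, this gives $\mc D_i(P) = P$ when $P_i = I$ and $\mc D_i(P) = 0$ otherwise. Because $\mc D_{J_i}$ is the composition of these single-qubit channels over the qubits of $J_i$, applying this qubit-by-qubit shows $\mc D_{J_i}(P) = P$ when $P$ is the identity on every qubit of $J_i$ (i.e. $J_i \notin \Supp(P)$) and $\mc D_{J_i}(P) = 0$ as soon as $P$ is non-trivial on some qubit of $J_i$ (i.e. $J_i \in \Supp(P)$). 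By linearity, $\mc M_{J_i}(P) = (\mc I - \mc D_{J_i})(P)$ is therefore exactly the complementary behavior: $\mc M_{J_i}(P) = P$ if $J_i \in \Supp(P)$ and $\mc M_{J_i}(P) = 0$ if $J_i \notin \Supp(P)$.

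Finally I would compose over all sublattices. The map $\mc M_A \otimes \mc D_{J\backslash A}$ is the composition of $\{\mc M_{J_i}\}_{J_i \in A}$ and $\{\mc D_{J_j}\}_{J_j \in J\backslash A}$; these act on disjoint registers, so they commute, and each sends $P$ either to $P$ or to $0$ by the previous step. Hence the composite map returns $P$ precisely when every factor acts as the identity on $P$: each $\mc D_{J_j}$ with $J_j \notin A$ forces $J_j \notin \Supp(P)$, i.e. $\Supp(P) \subseteq A$, and each $\mc M_{J_i}$ with $J_i \in A$ forces $J_i \in \Supp(P)$, i.e. $A \subseteq \Supp(P)$; together these say $A = \Supp(P)$. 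If $A \neq \Supp(P)$, at least one factor annihilates $P$, so the output is $0$. This argument is essentially bookkeeping; the only points needing minor care are the cancellation of the $\tfrac12$ in $\sigma_i = \eye_i/2$ against $\Tr(I) = 2$, and observing that disjointness of registers legitimizes treating the composite channel factor-by-factor. I do not anticipate a genuine obstacle.
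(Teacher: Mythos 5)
Your proposal is correct and follows essentially the same route as the paper: compute the action of the complete depolarizing channel on a Pauli (kill it if the Pauli is non-trivial there, fix it otherwise), observe that $\mc M = \mc I - \mc D$ has the complementary behavior, and compose over the disjoint sublattices to force $A = \Supp(P)$. The only cosmetic difference is that you first argue qubit-by-qubit before lifting to sublattices, whereas the paper states the action directly at the sublattice level; this changes nothing of substance.
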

\begin{proof}
    First, note that $\mc D_i (P)$ has the following effect,
    \begin{align}
        \mc D_i(P) &= \sigma_i \otimes \Tr_i(P)\\
        &= \begin{cases}
            P, &\text{if }i \not \in \Supp(P)\\
            0, &\text{if }i \in \Supp(P)
        \end{cases}
    \end{align}
    Next, we consider the action of $\mc M_i$,
    \begin{align}
        \mc M_i(P) &= [\mc I - \mc D_i](P)\\
        &= \begin{cases}
            P, &\text{if }i \in \Supp(P)\\
            0, &\text{if }i \not \in \Supp(P)
        \end{cases}
    \end{align}
    Thus, if we compose $\mc D_i$ foreach $i \in J \backslash A$ and $\mc M_i$ for each $i \in A$, we get the result described.
\end{proof}

\subsection{Reduction of Main Conjecture to Approximate Markovianity} \label{app:reduction}
\begin{theorem}[Restatement of \Cref{theorem:reduction}]
    \Cref{conj:markov} $\implies $ \Cref{conj:formal} 
\end{theorem}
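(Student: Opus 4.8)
The plan is to compare the algorithm's output distribution $Q$ to the true output distribution $P$ by a hybrid argument over the $m$ sublattices, exploiting the chain rule. Fix the sampling order $J_1,\dots,J_m$ used in \Cref{sec:algorithm}, so that $P(b)=\prod_{i=1}^m P_{J_i\mid J_{<i}=b_{J_{<i}}}(b_{J_i})$ by the chain rule, while $Q(b)=\prod_{i=1}^m P_{J_i\mid \partial^{\ell}J_{<i}=b_{\partial^{\ell}J_{<i}}}(b_{J_i})$ by construction of the sampler. Define hybrid distributions $R^{(k)}$ in which the first $k$ factors use the exact conditional $P_{J_i\mid J_{<i}}$ and the last $m-k$ factors use the truncated conditional $P_{J_i\mid \partial^{\ell}J_{<i}}$; each $R^{(k)}$ is a genuine distribution (every factor is a conditional law on $b_{J_i}$ with conditioning set $\subseteq J_{<i}$), and $R^{(m)}=P$, $R^{(0)}=Q$. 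Pulling the shared prefix $\prod_{i<k}P_{J_i\mid J_{<i}}=P_{J_{<k}}$ out front and summing the shared suffix $\prod_{i>k}P_{J_i\mid\partial^{\ell}J_{<i}}$ to $1$ over $b_{J_{>k}}$ gives
\[
\|R^{(k)}-R^{(k-1)}\|_1=\mathbb{E}_{b_{J_{<k}}\sim P}\bigl\|P_{J_k\mid J_{<k}=b_{J_{<k}}}-P_{J_k\mid \partial^{\ell}J_{<k}=b_{\partial^{\ell}J_{<k}}}\bigr\|_1,
\]
so $\|P-Q\|_1\le\sum_{k=1}^m\|R^{(k)}-R^{(k-1)}\|_1$. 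Everything reduces to showing that, for each $k$ and on average over $b_{J_{<k}}\sim P$, conditioning $J_k$ on the whole sampled past $J_{<k}$ is within $e^{-\Omega(\ell)}$ in total variation of conditioning it only on $\partial^{\ell}J_{<k}=\partial^{\ell}J_k\cap J_{<k}$.

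This per-step bound is exactly what \Cref{conj:markov} supplies, and it is the reason the conjecture is stated for a single sublattice $A=J_k$ with the thick shell $B=\partial^{\ell}A$ rather than for a coarse block partition. Writing $J_{<k}=B'\sqcup C'$ with $B'=\partial^{\ell}J_{<k}\subseteq\partial^{\ell}J_k$ and $C'=J_{<k}\setminus\partial^{\ell}J_k\subseteq J\setminus(J_k\cup\partial^{\ell}J_k)$, the triple $(A,B',C')$ is a restriction of the triple $(A,\partial^{\ell}A,\,J\setminus(A\partial^{\ell}A))$ of \Cref{conj:markov}. Using the elementary identity $\|P_{ABC}-P_{AB}P_{C\mid B}\|_1=\mathbb{E}_{(b_B,b_C)\sim P}\|P_{A\mid B=b_B,C=b_C}-P_{A\mid B=b_B}\|_1$, \Cref{conj:markov} reads $\mathbb{E}_{(b_{\partial^\ell A},b_C)\sim P}\|P_{A\mid \partial^{\ell}A,C}-P_{A\mid \partial^{\ell}A}\|_1\le e^{-\Omega(\ell)}$. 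The remaining work is to transfer this from the full shell $\partial^{\ell}A$ (whose not-yet-sampled part $D_k:=\partial^{\ell}J_k\setminus J_{<k}$ is untouched at step $k$) down to the sampled shell $B'$: one writes both $P_{J_k\mid J_{<k}}$ and $P_{J_k\mid B'}$ as averages of $P_{J_k\mid \partial^{\ell}J_k}(\,\cdot\,)$ over the respective posteriors of $D_k$, feeds \Cref{conj:markov} inside the average (each map $b_{D_k}\mapsto P_{J_k\mid \partial^{\ell}J_k}$ is $1$-Lipschitz in total variation), and is left controlling $\mathbb{E}\,\|P_{D_k\mid J_{<k}}-P_{D_k\mid B'}\|_1$ — an approximate independence of the unsampled shell $D_k$ from the far sampled region $C'$ given $B'$. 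Choosing the sampling order to be a monotone sweep of the lattice makes $J_{<k}$ a down-set and $B'$ a genuine graph separator between $C'$ and $J_k\cup D_k$, so this last term can in turn be handled by a finite recursion over the at most $(2\ell+1)^D$ sublattices of $D_k$, again invoking approximate Markovianity (at a range of radii).

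Summing the per-step bounds gives $\|P-Q\|_1\le m\,e^{-\Omega(\ell)}\le n\,e^{-\Omega(\ell)}$, so taking $\ell=\Theta(\log(n/\epsilon))$ yields $\|P-Q\|_1\le\epsilon$. Since \Cref{sec:algorithm} already shows the sampler runs in time $m\,e^{O(\ell^{D}d^{D})}$, this choice of $\ell$ makes the runtime $e^{\poly(\log n,\,\log(1/\epsilon),\,d)}$, matching \Cref{conj:formal}; combined with the accuracy guarantee, this establishes \Cref{conj:markov}$\implies$\Cref{conj:formal}. It is also here that one should record why the sets $A,B,C$ of \Cref{conj:markov} differ from the usual coarse-block formulation: they are precisely the sets that appear when telescoping the sequential patching sampler.

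I expect the main obstacle to be exactly this last transfer step — passing from conditioning on the full shell $\partial^{\ell}A$ to conditioning only on its already-sampled portion $\partial^{\ell}J_{<k}$. This does not follow from approximate conditional independence alone (it fails for artificial distributions in which $J_k$ is strongly correlated with $D_k$, which is in turn correlated with distant data), so it genuinely uses structural input: the thickness $\ell$ of the shell, the monotone sweep order that promotes $B'$ to a true separator, and the recursive application of approximate Markovianity on $D_k$. Making the constants in these nested invocations close — so the recursion costs only a $\poly(\ell)$ factor rather than blowing up — is the delicate part; the telescoping skeleton, the summation over $k$, and the parameter bookkeeping are routine.
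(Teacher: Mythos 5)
Your telescoping skeleton is exactly the paper's: both reduce $\|P-Q\|_1$ to $m$ per-step errors of the form $\mathbb{E}_{b_{J_{<i}}\sim P}\|P_{J_i\mid J_{<i}}-P_{J_i\mid \partial^{\ell}J_{<i}}\|_1$, and your parameter bookkeeping ($\ell=\Theta(\log(n/\epsilon))$, runtime $m\,e^{O(\ell^D d^D)}$) matches. The gap is in the step you yourself flag as the obstacle: passing from \Cref{conj:markov}, which conditions on the \emph{full} shell $\partial^{\ell}A$, to the per-step quantity, which conditions only on the already-sampled part $B'=\partial^{\ell}J_i\cap J_{<i}$ and compares against conditioning on the sampled far region $C'=J_{<i}\setminus\partial^{\ell}J_i$. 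You propose a monotone sweep order plus a recursion over the unsampled shell $D_i$ invoking "approximate Markovianity at a range of radii," but \Cref{conj:markov} does not supply what that recursion needs: the quantity $\mathbb{E}\,\|P_{D_i\mid J_{<i}}-P_{D_i\mid B'}\|_1$ is an approximate-Markov statement for a thick annulus (not a single sublattice of $J$), and invocations at radius $\ell'\ll\ell$ only give errors $e^{-\Omega(\ell')}$, which are not small; so as sketched this is not a proof but a restatement of the difficulty.

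The paper closes this gap with a different, much simpler observation that your argument is missing: \Cref{conj:markov} is quantified over \emph{all} noisy geometrically local circuits of depth $d$, so one may apply it to the modified circuit $\mc D_{J\setminus(A B' C')}\circ\Phi$, where the qubits one wishes to marginalize out (including the unsampled shell $D_i$ and unsampled far qubits) are completely depolarized. This is again a depth-$d$ noisy geometrically local circuit, its output marginal on $AB'C'$ coincides with that of $\Phi$, and the depolarized qubits are independent uniform bits, so conditioning on them is vacuous. Hence \Cref{conj:markov} directly yields the strengthened statement (the paper's \Cref{eq:gibbs}): for every sublattice $A\in J$, every $B'\subseteq\partial^{\ell}A$ and every $C'\subseteq J\setminus(A\partial^{\ell}A)$, $\|P_{AB'C'}-P_{AB'}P_{C'\mid B'}\|_1\le e^{-\Omega(\ell)}$. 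Taking $A=J_i$, $B'=\partial^{\ell}J_i\cap J_{<i}$, $C'=J_{<i}\setminus\partial^{\ell}J_i$ (and using the symmetry of the Markov condition in $A\leftrightarrow C$) bounds your per-step term immediately — no sweep order, no separator structure, and no nested recursion are needed. Without this reduction (or a worked-out substitute for your transfer step), the proposal does not establish \Cref{theorem:reduction}.
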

\begin{proof}
First note that \Cref{conj:markov} is a more restricted choice of $A,B,C$ than is typically assumed \cite{brandao_kastoryano_finite_correlation_length,napp}. In particular, the condition that is typically assumed is,
        \begin{align}
            \forall A' \in J,\forall B'  \subseteq \partial^{\ell}A', \forall C' \subseteq J \backslash (A '\partial^\ell A'), \quad \|P_{A'B'C'} - P_{A'B'}P_{C'|B'}\|_1 \leq e^{-\Omega(\ell)}  \label{eq:gibbs}
        \end{align}
    This allows $B'$ to be empty for example, in contrast to $B$ which is always the exact boundary, i.e. $\partial A$. In the setting of Gibbs state patching algorithms, this subtlety is an important issue since marginalizing over portions of $\partial^{\ell}A$ could introduce correlations. For example, if $P'$ is a classical Gibbs state on $ABC$, it is possible for $\|P_{ABC}' - P_{AB}'P_{C|B}'\|_1$ to be 0 while $\|P_{AC}' - P_A'P_C'\|_1$ is large (e.g. low temperature Ising models which spread correlations).  However, in our setting, \Cref{conj:markov} asserts that $\|P_{ABC} - P_{AB}P_{B|C}\|_1 \leq e^{-\Omega(\ell)} $ is true for all $P$ generated by a noisy geometrically local quantum circuit $\Phi$ of depth $d$, so for any desired choice of $ABC$, we can simply consider $\Phi' \gets \mc D_{J \backslash ABC} \circ \Phi$, which is also a noisy geometrically local quantum circuit of depth $d$, and thus obeys the same bound. Therefore, \Cref{conj:markov} $\implies$ \Cref{eq:gibbs}. Now, assuming \Cref{eq:gibbs}, we will prove \Cref{conj:formal}.
    Let $Q$ be the final approximation. $Q_{J_{\leq i}}$ is the state after step $i$.  We use an inductive argument. Clearly $P_{J_1} = Q_{J_1}$. Now,
    \begin{align}
        \|P_{J_{\leq i}} - Q_{J_{ \leq i}}\|_1 &= \|P_{J_{\leq i}} - Q_{J_{ < i}} P_{J_i | \partial^\ell J_{< i}}\|_1\\
        &= \|P_{J_{\leq i}} - P_{J_{ < i}} P_{J_i | \partial^\ell J_{< i}} + P_{J_{ < i}} P_{J_i | \partial^\ell J_{< i}} - Q_{J_{ < i}} P_{J_i | \partial^\ell J_{< i}}\|_1\\
        &\leq \|P_{J_{\leq i}} - P_{J_{ < i}} P_{J_i | \partial^\ell J_{< i}}\|_1 + \|P_{J_{<i}} P_{J_i | \partial^\ell J_{< i}} - Q_{J_{< i}} P_{J_i | \partial^\ell J_{< i}}\|_1\\
        &\leq e^{-\Omega(\ell)} + \|(P_{J_{<i}}  - Q_{J_{< i}}) P_{J_i | \partial^\ell J_{< i}}\|_1 \tag{by \Cref{eq:gibbs}, setting $A'= J_{< i} \backslash \partial^\ell_{J < i}$, $B' = \partial^\ell_{J < i}$, and $C' = J_i$}\\
        &= e^{-\Omega(\ell)} + \|P_{J_{<i}} - Q_{J_{< i}} \|_1
    \end{align}
    Clearly, the overall error is $m e^{-\Omega(\ell)}$. Therefore, an approximation error of $\epsilon$ can be obtained by setting $\ell = \Theta(\log  (n/\epsilon)) $. Plugging in the runtime bound of \Cref{sec:algorithm}, we get a runtime of order $ne^{O((d\log (n/\epsilon)^D})$, which lines up with \Cref{conj:formal}.
\end{proof}

\subsection{Observable Estimation in Noisy Geometrically Local Quantum Circuits} \label{app:observable}
First, we describe how Pauli observables and marginal probabilities experience exponential decay due to even a single layer of depolarizing noise, and then we show how this allows them to be efficiently estimated in noisy geometrically local quantum circuits.
\begin{lemma} \label{lemma:concentration}
    Let $\rho$ be the output state of any quantum circuit with a single layer of depolarizing noise on each qubit of strength $p$ as the final layer. Let $O$ be any observable on any subset of qubits $A \subseteq [n]$ which is either (1) a Pauli operator or (2) a projector onto a bitstring. Then, $|\Tr(\rho O)| \leq (1-p)^{|A|}$
\end{lemma}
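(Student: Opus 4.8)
Write $\rho = \mc N_{[n]}(\varrho)$, where $\varrho$ is the (arbitrary) density matrix on the $n$ qubits just before the final noise layer; this is the only feature of the circuit we use. Each single-qubit depolarizing channel decomposes as $\mc N_i = (1-p)\mc I_i + p\,\mc D_i$, and $\mc D_i(\cdot)=\tfrac14\sum_{P\in\{\eye,X,Y,Z\}}P_i(\cdot)P_i$ is a one-qubit Pauli twirl, hence self-adjoint in the Hilbert--Schmidt inner product; so is $\mc I_i$, so is $\mc N_i$, and so is $\mc N_{[n]}=\bigcirc_{i\in[n]}\mc N_i$. Consequently
\begin{align}
    |\Tr(\rho O)| = |\Tr(\varrho\,\mc N_{[n]}(O))| \le \|\varrho\|_1\,\|\mc N_{[n]}(O)\|_\infty = \|\mc N_{[n]}(O)\|_\infty ,
\end{align}
so the statement reduces to bounding the operator norm of $\mc N_{[n]}(O)$.

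\textbf{Pauli case.} The single-qubit facts recorded in the proof of \Cref{lemma:pauli_basis} give, for any Pauli $Q$, $\mc D_i(Q)=Q$ when $i\notin\Supp(Q)$ and $\mc D_i(Q)=0$ when $i\in\Supp(Q)$; hence $\mc N_i(Q)=Q$ or $(1-p)Q$ respectively. Composing over $i\in[n]$, if $O=P$ is a Pauli with $\Supp(P)=A$ then $\mc N_{[n]}(P)=(1-p)^{|A|}P$, so $\|\mc N_{[n]}(P)\|_\infty=(1-p)^{|A|}$ and $|\Tr(\rho P)|\le(1-p)^{|A|}$, exactly as stated.

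\textbf{Bitstring-projector case.} Write $\Pi_{b_i}=\tfrac12(\eye_i+(-1)^{b_i}Z_i)$; since $\Tr(\Pi_{b_i})=1$ we have $\mc D_i(\Pi_{b_i})=\tfrac12\eye_i$, hence $\mc N_i(\Pi_{b_i})=(1-p)\Pi_{b_i}+\tfrac p2\eye_i$ for $i\in A$, while $\mc N_i$ fixes every tensor factor outside $A$. Thus for $O=\Pi_b$ supported on $A$,
\begin{align}
    \mc N_{[n]}(\Pi_b)=\bigotimes_{i\in A}\Big((1-p)\Pi_{b_i}+\tfrac p2\eye_i\Big)\otimes\eye_{\overline{A}},
\end{align}
and each factor is positive semidefinite with spectrum $\{1-\tfrac p2,\ \tfrac p2\}$, hence operator norm $1-\tfrac p2$. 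Taking the tensor product, $0\preceq\mc N_{[n]}(\Pi_b)\preceq(1-\tfrac p2)^{|A|}\eye$, whence $0\le\Tr(\rho\Pi_b)\le(1-\tfrac p2)^{|A|}$. (Equivalently: expand $\Pi_b$ into $Z$-strings and apply the Pauli case term by term, $|\Tr(\rho\Pi_b)|\le 2^{-|A|}\sum_{S\subseteq A}(1-p)^{|S|}=(1-\tfrac p2)^{|A|}$.) This is exactly the input needed for \Cref{app:observable}: one layer of noise already drives $\Tr(\rho O)$ exponentially small in $|A|$, uniformly over all circuits, so any fixed Pauli expectation, or the probability of any fixed bitstring on a large region, is trivially estimable.

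\textbf{The one delicate point} --- and the natural place the argument could be said to fall short of $(1-p)^{|A|}$ --- is the constant in the projector case: the bound $(1-\tfrac p2)^{|A|}$ is sharp, since $\varrho=\ketbra{b}_A\otimes(\text{anything})$ makes $\ket{b}_A$ an eigenvector of $\bigotimes_{i\in A}\mc N_i(\Pi_{b_i})$ with eigenvalue exactly $(1-\tfrac p2)^{|A|}$. So the verbatim $(1-p)^{|A|}$ bound holds for Pauli observables, and for a bitstring projector it should be read with the (slightly larger, still exponentially decaying) base $1-\tfrac p2$; beyond pinning this constant, there is no substantive obstacle, as the whole lemma is a one-step computation once the final depolarizing layer has been transferred onto the observable by self-adjointness.
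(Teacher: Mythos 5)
Your proof follows the paper's argument exactly: transfer the final noise layer onto the observable using self-adjointness of the depolarizing channel, apply H\"older's inequality, and bound $\|\mc N_{[n]}(O)\|_\infty$, with the Pauli case giving $\mc N_{[n]}(P)=(1-p)^{|A|}P$. Your treatment of the projector case is in fact more careful than the paper's one-line claim that the operator norm ``decays in the size of the support'': as you compute, $\mc N_i(\Pi_{b_i})=(1-p)\Pi_{b_i}+\tfrac p2\eye_i$ has norm $1-\tfrac p2$, and your choice $\varrho=\ketbra{b}_A\otimes(\text{anything})$ shows this is tight, so the stated bound $(1-p)^{|A|}$ is literally false in case (2) and should read $(1-\tfrac p2)^{|A|}$ there. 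This correction is harmless downstream, since the corollary in \Cref{app:observable} only needs decay that is exponential in $|A|$ with a $p$-dependent base, which your bound supplies with the same asymptotics.
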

\begin{proof}
    Let $\rho'$ be the state right before the final layer of depolarizing noise. We have,
    \begin{align}
        |\Tr(\rho O)| &= |\Tr(\mc N_{[n]}(\rho') O)|\\
        &= |\Tr(\rho' \mc N_{[n]}(O))| \tag{the adjoint map of a depolarizing channel is depolarizing}\\
        &\leq \|\rho'\|_1\|N_{[n]}(O)\|_\infty \tag{holder's inequality}\\
        &\leq (1-p)^{|A|} 
    \end{align}
    Where in the final step we have used the observation that the operator norm of Pauli observables and projectors onto bitstring decay in the size of their support under tensor products of depolarizing channels.
\end{proof}

\begin{corollary}
    Let $\rho$ be the output state of a noisy geometrically local quantum circuit and $O$ be any observable on any subset of qubits $A \subseteq [n]$ which is either (1) a Pauli operator or (2) a projector onto a bitstring. There exists a classical algorithm that outputs a number $\mu$, such that $|\mu - \Tr(\rho O)| \leq \epsilon$, in runtime $(\frac{1}{e})^{O( \min(d^D,(\log n)^D))}$.
\end{corollary}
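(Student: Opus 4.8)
The plan is a case analysis on $|A|$ and on $d$. If $|A|$ is large, \Cref{lemma:concentration} already suffices: when $|A| \ge p^{-1}\log(1/\epsilon)$ it gives $|\Tr(\rho O)| \le (1-p)^{|A|} \le \epsilon$, so the algorithm outputs $\mu = 0$ in time $O(|A|)$. Hence I would assume $|A| = O(p^{-1}\log(1/\epsilon))$, in which case $\Tr(\rho O) = \Tr(\rho_A O)$ is determined entirely by the subcircuit supported on the reverse lightcone $L(A)$ (\Cref{def:lightcone}), whose size geometric locality bounds by $|L(A)| \le \min\!\big(|A|\cdot O(d^D),\, n\big)$.

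For circuits shallow enough that this lightcone is small, I would simulate it by brute force: store the density matrix on the $O(|A|\,d^D)$ qubits of $L(A)$, push it through the unitaries and depolarizing channels layer by layer, and evaluate $\Tr(\rho_A O)$ exactly up to machine precision. This costs $e^{O(|L(A)|)}\poly(d) = (1/\epsilon)^{O(d^D)}$, which is the desired bound whenever $d = O(\log n)$. For deeper circuits, I would instead appeal to \Cref{lemma:exponential_decay}: combined with monotonicity of relative entropy under the partial-trace channel, it gives $D(\rho_A\|\sigma_A) \le D(\rho\|\sigma_A\otimes\rho_{\bar A}) \le (1-p)^d |L(A)|$, so Pinsker yields $\|\rho_A - \sigma_A\|_1 \le \sqrt{2(1-p)^d|L(A)|}$, which drops below $\epsilon$ once $d$ passes a crossover depth $d_0 = \Theta(p^{-1}\log(n/\epsilon))$ (bounding $|L(A)| \le n$). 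Beyond $d_0$ the algorithm outputs $\mu = \Tr(\sigma_A O)$ — which is $0$ for a nontrivial Pauli, $1$ for the identity, and $2^{-|A|}$ for a bitstring projector — incurring error $|\Tr(\rho_A O) - \Tr(\sigma_A O)| \le \|\rho_A - \sigma_A\|_1\,\|O\|_\infty \le \epsilon$, again in time $O(|A|)$.

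Putting the pieces together, the brute-force branch is invoked only for $d < d_0 = O(\log n)$ (for constant $p$ and, say, $\epsilon \ge 1/\poly(n)$), so its cost is $(1/\epsilon)^{O(\min(d^D,(\log n)^D))}$ and the claim follows. The conceptual point I would highlight is that lightcone simulation alone is \emph{not} sufficient — once $d \gg \log n$ the reverse lightcone of even a single output qubit can contain a constant fraction of the lattice — and it is precisely the marginal-thermalization statement \Cref{lemma:exponential_decay} that rescues the deep regime and produces the $(\log n)^D$ cap. The one genuinely delicate piece of bookkeeping will be the placement of $d_0$: one wants to charge the brute-force cost against $|L(A)| \le |A|\cdot O(d^D)$ (so it carries no hidden $n$-dependence) while using the cruder $|L(A)| \le n$ inside the Pinsker bound to make the deep case activate at depth $\Theta(\log n)$; everything else reduces to the relative-entropy and lightcone-size estimates already in hand.
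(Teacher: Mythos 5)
Your proposal is correct and follows essentially the same three-way split as the paper's proof: output $0$ when $|A|$ exceeds $O(\log(1/\epsilon))$ via \Cref{lemma:concentration}, brute-force the reverse lightcone when $d = O(\log n)$, and output the maximally-mixed value when $d$ is large. The only (cosmetic) difference is that you justify the deep regime by combining \Cref{lemma:exponential_decay} with data processing and Pinsker, whereas the paper simply invokes the known global trace-distance convergence of noisy circuits to the maximally mixed state at $\omega(\log n)$ depth; both the paper and your writeup implicitly treat $\epsilon$ as at worst inverse-polynomial when placing the crossover depth at $\Theta(\log n)$, which you in fact flag more explicitly.
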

\begin{proof}
   Recall, if the circuit is of depth $d = \omega(\log n)$, then the estimate $\Tr(O\sigma)$ is a good estimate due to the trace distance convergence of the output of noisy quantum circuits to the maximally mixed state. Therefore, we only need to handle the case that $d = O(\log n)$. It is clear from \Cref{lemma:concentration} that when $|A|$ is larger than some $O(\log (\frac{1}{\epsilon}))$ size, $\mu = 0$ is a sufficient estimate of the observable. Thus, we also only need to handle the case that $|A| = O(\log (\frac{1}{\epsilon}))$. Since there are at most $(2d)^D|A|$ qubits in the reverse lightcone of $A$, brute force simulation results in the runtime claimed.
\end{proof}

We next point out existing results of \cite{coble_coudron} and extension \cite{dontha_tan_coudron_approximating}, which provides a classical method to estimate, within inverse polynomial error, any output probability of a \textit{noiseless} geometrically local quantum circuits in runtime $ \sim e^{O((d \poly\log (n))^{D \cdot 3^D})}$. It was noted in \cite{coble_coudron}, that this implies efficient algorithms to estimate expectation values of Pauli operators and marginal probabilities as well. These are much stronger results since they do not require noise. They can be easily applied to our setting using `monte-carlo' sampling: for each depolarizing channel of strength $p$, sample an $X$, $Y$, or $Z$ \textit{unitary} error with probability $p/4$. In expectation, this simulates the true noisy circuit. Since estimates of Pauli's and marginals are always between $1$ and $-1$, a polynomial number of samples suffices to obtain an inverse polynomial additive error (by Hoeffding's inequality).  

\end{document}